\pgfplotsset{compat=1.17}
\DeclareMathOperator*{\argmin}{arg\,min}
\newcommand\blfootnote[1]{%
  \begingroup
  \renewcommand\thefootnote{}%
  \NoHyper\footnote{#1}\endNoHyper
  \addtocounter{footnote}{-1}%
  \endgroup
}
\newtheorem{theorem}{Theorem}[section]
\newtheorem*{theorem*}{Theorem}
\newtheorem{lemma}[theorem]{Lemma}
\newtheorem{definition}[theorem]{Definition}
\newtheorem{remark}[theorem]{Remark}
\theoremstyle{definition}
\newcommand{\opt}{\textsc{Opt}}
\newcommand{\socialopt}{\textsc{Social\text{-}Opt}}
\newcommand{\panelopt}{\textsc{Panel\text{-}Opt}}
\newcommand{\cost}{\textsc{Cost}}
\newcommand{\panelcore}{\textsc{Panel\text{-}Core}}
\newcommand{\socialcore}{\textsc{Social\text{-}Core}}
\newcommand{\social}{\textsc{Social\text{-}Cost}}
\newcommand{\panelcost}{\textsc{Panel\text{-}Cost}}
\newcommand{\uniformpanel}{\mathcal{U}_{k,n}}
\newcommand{\budgets}{\mathcal{X}_{B}}
\newcommand{\pbinstance}{\langle m, B, (\cost_1,\ldots,\cost_n) \rangle}
\newcommand{\facilityinstance}{\langle \mathcal{X},\mathcal{C}, d, (x_1,\ldots,x_n) \rangle}
\newcommand{\committeeinstance}{\langle \mathcal{X},\mathcal{C},d, \ell, (x_1,\ldots,x_n) \rangle} %
\title{The Panel Complexity of Sortition: Is 12 Angry Men Enough?
}
\author{ 
Johannes Brustle$^\ast$
\quad
Simone Fioravanti$^\dagger$
\quad 
Tomasz Ponitka$^\ddagger$
\quad
Jeremy Vollen$^\mathsection$
}
\date{April 29, 2025}
\begin{document}

\maketitle

\blfootnote{We thank the audience at the Theory of Computer Science seminar at the University of Illinois Urbana-Champaign for their helpful comments during the presentation of an earlier version of this paper. This project has been partially funded by MIUR PRIN 2022, Learning in Markets and Society, by the European Research Council (ERC) under the European Union's Horizon 2020 research and innovation program (grant agreement No. 866132), by PNRR MIUR project PE0000013-FAIR, by an Amazon Research Award, by the Israel Science Foundation Breakthrough Program (grant No. 2600/24), by a grant from the TAU Center for AI and Data Science (TAD), and by the NSF-BSF (grant number 2020788).}
\blfootnote{$^\ast$Sapienza University of Rome, Italy. Email: \texttt{brustle@diag.uniroma1.it}}
\blfootnote{$^\dagger$Sapienza University of Rome, Italy. Email: \texttt{fioravanti@diag.uniroma1.it}}
\blfootnote{$^\ddagger$Tel Aviv University, Israel. Email: \texttt{tomaszp@mail.tau.ac.il}}
\blfootnote{$^\mathsection$University of New South Wales, Australia. Email: \texttt{j.vollen@unsw.edu.au}}

\begin{abstract}
Sortition is the practice of delegating public decision-making to randomly selected panels. 
Recently, it has gained momentum worldwide through its use in citizens' assemblies, sparking growing interest within the computer science community.
One key appeal of sortition is that random panels tend to be more \emph{representative} of the population than elected committees or parliaments.
Our main conceptual contribution is a novel definition of {representative panels}, based on the Wasserstein distance from statistical learning theory.
Using this definition, we develop a framework for analyzing the \emph{panel complexity} problem—determining the required panel size to ensure desirable properties.
We focus on three key desiderata: (1) that efficiency at the panel level extends to the whole population, measured by social welfare; (2) that fairness guarantees for the panel translate to fairness for the population, captured by the core; and (3) that the probability of an outlier panel, for which the decision significantly deviates from the optimal one, remains low.
We establish near-tight panel complexity guarantees for these desiderata across two fundamental social choice settings: participatory budgeting and facility location.
\end{abstract}

\newpage

\section{Introduction}

A citizens' assembly is a panel of randomly selected citizens convened to deliberate on a policy question and collectively express a position.
While the practice dates back millennia, it is not lacking in recent examples, including the Irish Citizens’ Assembly (2016–2018), for which 100 randomly selected citizens addressed key legal and policy issues—such as abortion, aging, and climate change—leading to a national referendum on abortion and a government-declared climate emergency. 
Many other citizens' assemblies have been convened worldwide on a range of policy topics \cite{oecd2020innovative}.
A defining feature of citizens' assemblies is their use of sortition---random selection of participants---an idea that has gained increasing attention in the computer science community \citep[e.g.,][]{BenadeGP19,FlaniganGGP20,FlaniganGGPG2021,FlaniganKP21,FLPW24a,halpern2024federatedassemblies, BaharavF24, WalshX12, DBLP:conf/icml/CaragiannisM024, DBLP:conf/nips/EbadianKMP022,MeirST21,Cheng2018Distortion,ChengDK17, DBLP:journals/corr/abs-2406-00913}.

An important question in sortition is how to determine the optimal panel size. 
Naturally, the logistical and financial burden of a citizens' assembly increases with its size. 
At the same time, smaller panels are 
less likely to represent the entire population accurately.
Moreover, other applications of sortition, such as jury selection, offer little guidance on navigating this trade-off. 
In the U.S., juries almost always consist of 12 members---a tradition dating back to 725, when King Morgan of Gla-Morgan, Wales, is said to have chosen this number to mirror the 12 Apostles \cite{Gorski2012}.
The 
central topic 
of this paper is the \emph{panel complexity} problem---an effort to establish a solid
foundation for determining the optimal panel size.\footnote{While some prior works have implications for panel complexity, to the
 best of our knowledge, our work is the first to systematically study the panel complexity problem. We contrast our results with prior work in \Cref{sec:related}.}

To precisely formulate the panel complexity problem, we must first establish a criterion for evaluating panels.
Much of the literature on sortition focuses on the design and analysis of sortition algorithms with respect to two key desiderata:  \emph{representativeness} and {\emph{selection fairness}}.
Broadly, representativeness refers to ensuring that a citizens' assembly reflects the diverse perspectives of the underlying population \citep{Youn02a,Will00a}, while selection fairness {is the ideal that posits that all citizens should have an equal chance of participation.}
In this work, we focus  on the case where the panel members are selected uniformly at random from the population,\footnote{{We elaborate on and further motivate this assumption in more detail below}.} ensuring selection fairness. 
Thus, our analysis is primarily concerned with achieving representativeness.

Much of the prior work defines representative panels based on compliance with exogenous quotas for demographic features \citep{FlaniganGGP20,FlaniganGGPG2021,FlaniganKP21,FLPW24a}.
However, specifying quotas---and even determining which features should have quotas---may be difficult before resolving the question of panel size.
Representativeness has also been defined in terms of the expected loss of a random panel of a given size relative to the optimal panel of the same size \citep{DBLP:conf/nips/EbadianKMP022}, or through the notion of the (approximate) core \citep{DBLP:journals/corr/abs-2406-00913}.
However, these definitions are not suitable for our context, as they do not provide a meaningful way to compare the representativeness of panels of different sizes.
Thus, we ask:

\vspace{0.1cm}
     \noindent \textbf{Question 1:} \emph{What is the right measure of panel representativeness for analyzing panel complexity?}
        \vspace{0.1cm}

Sortition-selected panels are often tasked with concrete decisions, such as legal or policy matters.
We draw inspiration from \citet*{Fish18a}, who argues that a citizens’
assembly's decision should approximate what would emerge if the entire population deliberated. The panel size required to reach this end will necessarily depend on the particulars of the policy decision at hand, such as the outcome space and the structure of preferences. Hence, it is essential to study panel complexity in the context of specific {\em social choice settings}.

A recent work of \citet*{DBLP:conf/icml/CaragiannisM024} provided a tight characterization of the panel complexity of approximately maximizing social welfare in the {\em facility location} problem within a finite metric space.
In this model, we are given a finite metric space, and each individual in the population has an ideal point in this space. The objective is to identify a single point (facility) that best serves the entire population.
While this is a promising result, it has two key limitations: (1) it does not account for scenarios with infinitely many alternatives, such as facility location in the Euclidean space; and (2) it does not capture preferences that go beyond distances in a metric space,
such as those in the {\em participatory budgeting} problem \citep{FainGM16,GoelKSA19}, where a fixed budget must be distributed across a set of potential projects, and individuals differ in their preferences over how much funding each project should receive.
Hence, we ask:

\vspace{0.1cm}
     \noindent \textbf{Question 2:} \emph{What is the panel complexity in social choice settings beyond facility location in finite spaces?}
     
      \vspace{0.1cm}
Moreover, \citet*{DBLP:conf/icml/CaragiannisM024}
assume that panels {always} converge to a decision that maximizes utilitarian welfare among their members. However, this assumption may be unrealistic for two reasons:
(1) reaching a welfare-maximizing decision may be infeasible due to time constraints or the strategic behavior of panel members. A large body of research on facility location examines strategy-proof mechanisms, and it is well-known that such mechanisms cannot achieve optimal welfare, only approximate it \cite{chan2021mechanismdesignfacilitylocation}; and
(2) even if the panel can agree on the welfare-maximizing alternative, this choice may not be fair to all members, which is a central concern in the participatory budgeting literature \cite{FainGM16}.
These concerns motivate the study of alternative forms of {\em deliberation guarantees}---such as approximate welfare maximization or core fairness---for the decision reached by the sortition-selected panels.
Therefore, we ask:

\vspace{0.1cm}
     \noindent \textbf{Question 3:} \emph{What is the panel complexity for deliberation guarantees beyond exact welfare maximization?}
        \vspace{0.1cm}

In this work, we address all three of the questions above.

{We note that} the social choice settings we study involve
{individual preferences}---such as cost functions over budget allocations in participatory budgeting---which may not be easily observable or verifiable, unlike demographic features such as age or gender.
For this reason, we focus our analysis on uniform {panel} selection rather than 
{on}
algorithms 
{that}
use 
{individual}
preference 
information to select panels.
The use of uniform {panel} selection naturally rules out the possibility of guarantees that hold for every panel in the support of the lottery---that is, ex-post guarantees; see \Cref{sec:related} for more details---but it improves the robustness of our results by ensuring they do not rely on access to any 
{individual's} information.
An open direction for future work is to extend our model to allow non-uniform panel selection via demographic quotas, and to analyze how this affects panel complexity; see \Cref{sec:conclusion}.

\subsection{Our Contributions}
    We consider a sortition model in which a panel of $k$ agents is selected uniformly at random from a population of $n$ agents.
Each of our panel complexity guarantees consists of four components: (i) the social choice setting, (ii) the deliberation guarantee, (iii) panel complexity, and (iv) the population guarantee. 
That is, given a specific social choice setting, assuming the panel's decisions satisfy the deliberation guarantee and the panel size meets the required threshold, we show that the corresponding population guarantee holds for the entire population
if the panel is drawn uniformly at random.
Before presenting our panel complexity guarantees for participatory budgeting and facility location, we first introduce our novel definition of representativeness, which is the key tool for deriving the panel complexity bounds.

    \paragraph{Representative Panels.}
    Our first conceptual contribution is the definition of $\epsilon$-representative panels with respect to a given feature $f: [n] \to \mathcal{X}$, where $(\mathcal{X},d)$ is a metric space. A panel is $\epsilon$-representative if the distribution it induces over the values of $f$ is at most $\epsilon$ away, in Wasserstein distance (also known as the earth mover’s distance; see \Cref{def:Wdistance}), from the distribution induced by the entire population.
This is a natural notion of representativeness and is particularly well-suited for panel complexity, unlike prior definitions in the literature \citep{FlaniganGGP20,FlaniganGGPG2021,FlaniganKP21,FLPW24a,DBLP:conf/nips/EbadianKMP022,DBLP:journals/corr/abs-2406-00913}. 

Wasserstein distance is a well-established measure in the study of sample complexity.
However, unlike the typical setting in which sample complexity is analyzed---where samples are independent and identically distributed (i.i.d.), corresponding to sampling {\em with replacement}---sortition involves sampling {\em without replacement}, where panels are drawn uniformly at random from all size-$k$ subsets of the population.

We establish two general lemmas (\Cref{lem:W1expectation,lem:W1concentration}) for adapting standard sample complexity techniques to panel complexity analysis.
Using these tools, we establish that a panel of size $O((1/\epsilon)^2 \cdot (\log \ell + \log(1/\delta)))$ is $\epsilon$-representative with probability $1-\delta$ for any sequence of $\ell$ real-valued features (\Cref{thm:many_features}).  
Notably, this guarantee is independent of both the population size and the specific features. This theorem is instrumental in deriving upper bounds on panel complexity for participatory budgeting and facility location. We also believe it may be useful for analyzing panel complexity in other social choice problems.

    \paragraph{Participatory Budgeting.}
    We study the (divisible) participatory budgeting problem, where the goal is to allocate a budget $B$ across $m$ projects. Each agent in the population has a Lipschitz cost function over the space of possible budget allocations $[0,1]^m$.  
    Lipschitz cost functions capture the idea that agents' preferences change smoothly with the allocation.
    This class {is quite general} and includes important examples, such as linear costs of the form $\sum_{j=1}^m \alpha_{i,j} \cdot (1-x_j)$ for allocation $x \in [0,1]^m$ and agent-specific  $\alpha_{i,1}, \ldots, \alpha_{i,m} \geq 0$. 

Our first result shows that if the panel selects an efficient allocation among its members, it remains efficient for the entire population.

\begin{theorem*}[Panel Complexity for Welfare Maximization in Participatory Budgeting; see \Cref{thm:pbeffub}] \; 
\begin{enumerate}[noitemsep, topsep=0pt,label=(\roman*)] 
    \item {\em Social Choice Setting:} Consider the participatory budgeting problem with Lipschitz cost functions.
     \item {\em Deliberation Guarantee:} Suppose the panel-selected budget allocation minimizes the total cost among panel members up to an additive factor $\tau \geq 0$ and a constant multiplicative factor $\rho \geq 1$.
    \item {\em Panel Complexity:} 
     Suppose the panel size $k$ is $O\left((1/\epsilon^2) \cdot m \cdot \log(1/\epsilon)\right)$.
     \item {\em Population Guarantee:} Then, the expected social cost is optimal up to an additive factor $\tau + \epsilon$ and a multiplicative factor $\rho$.
 \end{enumerate}
\end{theorem*}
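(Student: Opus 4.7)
The plan is to reduce the theorem to the many-features representativeness guarantee (\Cref{thm:many_features}) by discretizing the continuous allocation space $[0,1]^m$ into a finite net and using Lipschitzness to transfer bounds from the net to all allocations. I fix a parameter $\epsilon'$ (to be tuned as $\epsilon' = \Theta(\epsilon)$) and construct an $\epsilon'$-net $N \subseteq [0,1]^m$ of size $\ell = O((1/\epsilon')^m)$ under, say, the $\ell_\infty$ metric. For each $x \in N$, I define the real-valued feature $f_x : [n] \to \mathbb{R}$ by $f_x(i) = \cost_i(x)$; since the cost functions are $L$-Lipschitz on the bounded domain $[0,1]^m$, these features are uniformly bounded.

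Next, I apply \Cref{thm:many_features} to the finite collection $\{f_x\}_{x \in N}$ with failure probability $\delta$, yielding that with probability at least $1-\delta$ the uniformly random panel is simultaneously $\epsilon'$-representative with respect to every $f_x$, provided the panel size is
\[
    O\bigl((1/\epsilon')^2 (\log \ell + \log(1/\delta))\bigr) = O\bigl((1/\epsilon^2)\cdot m \log(1/\epsilon)\bigr)
\]
after substituting $\epsilon' = \Theta(\epsilon)$ and $\delta = \Theta(\epsilon)$. Since the Wasserstein-$1$ distance on $\mathbb{R}$ upper bounds the difference of means (take the identity map as a $1$-Lipschitz test function in the Kantorovich--Rubinstein dual), $\epsilon'$-representativeness at each $x \in N$ translates into the pointwise bound
\[
    \left| \tfrac{1}{k} \sum_{i \in P} \cost_i(x) - \tfrac{1}{n} \sum_{i=1}^n \cost_i(x) \right| \le \epsilon'.
\]
I then extend this from $N$ to all of $[0,1]^m$: for any $x \in [0,1]^m$, let $\tilde x \in N$ be a nearest net point, so $|\cost_i(x) - \cost_i(\tilde x)| \le L\epsilon'$ for every $i$. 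Averaging over the panel and over the population and combining with the pointwise bound at $\tilde x$ shows that the panel and population average costs at $x$ differ by at most $(2L+1)\epsilon'$, uniformly in $x$.

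Now I chain the inequalities in the standard way. On the good event, letting $x^*$ denote the panel-selected allocation and $x_{\mathrm{opt}}$ the population-optimal one,
\begin{align*}
    \tfrac{1}{n}\sum_{i=1}^n \cost_i(x^*) &\le \tfrac{1}{k}\sum_{i \in P} \cost_i(x^*) + (2L+1)\epsilon'\\
    &\le \rho \cdot \tfrac{1}{k}\sum_{i \in P} \cost_i(x_{\mathrm{opt}}) + \tau + (2L+1)\epsilon'\\
    &\le \rho \cdot \tfrac{1}{n}\sum_{i=1}^n \cost_i(x_{\mathrm{opt}}) + \tau + (\rho + 1)(2L+1)\epsilon',
\end{align*}
where the second step uses the deliberation guarantee together with the fact that $x_{\mathrm{opt}}$ is at least as good as the panel-optimum on the panel, and the third applies the uniform extension at $x_{\mathrm{opt}}$. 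Rescaling $\epsilon'$ to $\epsilon/((\rho+1)(2L+1))$ yields additive slack $\tau + \epsilon$ on the good event; the expectation bound then follows by controlling the $\delta$-probability bad event via the trivial maximum-cost bound and choosing $\delta$ small enough that its contribution is absorbed into $\epsilon$.

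The main obstacle is the infinite allocation space: without the Lipschitz hypothesis, no finite net could faithfully approximate all feasible allocations, and the many-features theorem---which is finitary---could not be invoked. A secondary technical nuisance is tracking the multiplicative factor $\rho$, which forces applying the uniform bound both at $x^*$ and at $x_{\mathrm{opt}}$ and thereby inflates the additive error by a factor of $\rho+1$; since $\rho$ is assumed constant, this inflation is harmlessly absorbed when rescaling $\epsilon'$.
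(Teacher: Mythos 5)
Your proposal is correct and follows essentially the same route as the paper's proof of \Cref{thm:pbeffub}: an $O((1/\epsilon)^m)$-size covering of the feasible allocations, features $f_x(i)=\cost_i(x)$ at the net points, simultaneous $\epsilon'$-representativeness via \Cref{thm:many_features}, transfer from Wasserstein distance to means (this is exactly \Cref{lem:wasserstein_average}), a Lipschitz extension to all allocations, the same chaining through the deliberation guarantee at $\overline{x}(S)$ and $x^\star$, and absorption of the $\delta$-probability bad event using $\cost_i \le 1$. The only bookkeeping caveat is that the paper's Lipschitz assumption is with respect to $\ell_1$ (with constant $1$), so the net should be taken in $\ell_1$ as the paper does (or the $\ell_\infty$ radius shrunk by a factor of $m$), which changes nothing beyond logarithmic factors in the panel size.
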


The panel complexity guarantee is essentially tight in two respects. First, we derive a nearly matching lower bound of $\Omega((1/\epsilon^2) \cdot m)$ for minimizing social cost (\Cref{thm:pb_lower_bound}). Second, we show that no panel size can yield a multiplicative approximation error without incurring any additive approximation error  (\Cref{thm:pb_impossibility}).

Our second result shows that if the panel's allocation is fair among its members, this fairness extends to the entire population.

\begin{theorem*}[Panel Complexity for Core Fairness in Participatory Budgeting; see \Cref{thm:core_ub}] \; 
\begin{enumerate}[noitemsep, topsep=0pt,label=(\roman*)]
    \item {\em Social Choice Setting:} Consider the participatory budgeting problem with Lipschitz cost functions.
     \item {\em Deliberation Guarantee:} Suppose the panel-selected  allocation satisfies core fairness for panel members up to a budget fraction $\eta \geq 0$, an additive factor $\tau \geq 0$, and a constant multiplicative factor $\rho \geq 1$.
    \item {\em Panel Complexity:} 
    Suppose the panel size $k$ is $O\left((1/\epsilon^2) \cdot (m \cdot \log(1/\epsilon) + \log(1/\delta))\right)$.
     \item {\em Population Guarantee:} Then, with probability at least $1 - \delta$, 
     the budget allocation satisfies 
     core fairness for the population
     up to a budget fraction $\eta + \epsilon$, an additive factor $\tau + \epsilon$, and a multiplicative factor $\rho$.
 \end{enumerate}
\end{theorem*}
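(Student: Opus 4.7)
My plan is to prove the statement by contrapositive: assume the population-level core is violated, and show this forces a corresponding violation of the panel-level core, contradicting the deliberation guarantee. The approach adapts the argument behind \Cref{thm:pbeffub} (welfare maximization) to the richer structure of the core, which requires simultaneous guarantees against all possible blocking coalitions and deviations rather than a single welfare comparison.

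The key technical tool is to reduce the continuous set of possible deviations to a finite family of ``blocking events,'' each captured by an indicator feature to which \Cref{thm:many_features} applies. First I construct an $\epsilon$-net $Y_\epsilon \subset [0,1]^m$ of size $O((1/\epsilon)^m)$. By Lipschitzness of the cost functions, any allocation can be replaced by its nearest point in $Y_\epsilon$ at the cost of only $O(\epsilon)$ slack in cost values (to be absorbed into the slacks on $\tau$ and $\eta$). For each pair $(\tilde{x}, \tilde{y}) \in Y_\epsilon \times Y_\epsilon$, define the indicator feature $f_{\tilde{x}, \tilde{y}}(i) = \mathbb{1}[\cost_i(\tilde{x}) > \rho \cdot \cost_i(\tilde{y}) + \tau]$, marking the agents strictly blocked by $\tilde{y}$ relative to $\tilde{x}$. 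For indicator features, $\epsilon$-representativeness in Wasserstein distance is equivalent to an $\epsilon$-additive bound on the difference between the panel and population fractions of blocked agents. Applying \Cref{thm:many_features} to these $O((1/\epsilon)^{2m})$ features requires panel size $O((1/\epsilon^2)(m \log(1/\epsilon) + \log(1/\delta)))$, matching the claim, and succeeds with probability at least $1-\delta$.

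The transfer step is then straightforward. Suppose a coalition $T$ and deviation $y$ witness a population-level violation at parameters $(\eta + \epsilon, \tau + \epsilon, \rho)$. Round the realized panel allocation $x$ and the deviation $y$ to their nearest $\tilde{x}, \tilde{y} \in Y_\epsilon$; Lipschitzness preserves the violation up to the absorbed slack. Set $T' = T \cap \text{panel}$. The cost condition for $T'$ is inherited from that for $T$, and the representativeness of $f_{\tilde{x}, \tilde{y}}$ ensures $|T'|/k$ is within $\epsilon$ of $|T|/n$. The $\epsilon$-slack on $\eta$ in the population violation then supplies the budget slack needed to promote $(T', \tilde{y})$ into a panel-level violation at parameters $(\eta, \tau, \rho)$, a contradiction.

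The main obstacle is orchestrating several sources of $\epsilon$-slack so that each contributes only $O(\epsilon)$ to the final guarantee: the discretization loss (scaled by the Lipschitz constants of the cost functions and by $\rho$), the Wasserstein representativeness loss, and their joint effect on both the budget fraction $\eta$ and the additive cost threshold $\tau$. A secondary subtlety is that the feature family nominally depends on the panel's a priori unknown allocation $x$; this is handled by indexing the features over all candidate $\tilde{x} \in Y_\epsilon$, which only squares the number of features and therefore does not change the asymptotic panel complexity.
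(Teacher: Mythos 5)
Your route is the same as the paper's proof of \Cref{thm:core_ub}: an $\ell_1$-covering of the allocation space, indicator features indexed by \emph{pairs} of covering points, \Cref{thm:many_features} applied to the $(1/\epsilon)^{O(m)}$-many features, and a contrapositive transfer of a population-level blocking coalition to a panel-level one. However, one step of your transfer fails as written. You set $T' = T \cap S$ and claim that representativeness of $f_{\tilde{x},\tilde{y}}$ ensures $|T'|/k$ is within $\epsilon$ of $|T|/n$. It does not: the feature only controls the panel fraction of the set $\{i \in [n] : \cost_i(\tilde{x}) > \rho\,\cost_i(\tilde{y}) + \tau\}$, which contains $T$ but may be much larger, and the panel can be perfectly representative of that set while intersecting $T$ itself in far fewer than $(|T|/n-\epsilon)k$ members (even zero, when $T$ is a small part of the indicator set). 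The correct coalition---the one the paper uses---is $T' = \{i \in S : \rho\,\cost_i(\tilde{y}) + \tau < \cost_i(\overline{x}(S))\}$, i.e.\ \emph{all} blocked panel members; then $|T'|/k$ is lower-bounded by the panel mean of the indicator, and representativeness together with $f_{\tilde{x},\tilde{y}}(i)=1$ for all $i \in T$ yields $|T'|/k \geq |T|/n - \epsilon/2$, which is what the budget-fraction comparison needs.

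A second, smaller gap is the indicator's threshold. With $f_{\tilde{x},\tilde{y}}(i) = \mathbbm{1}[\cost_i(\tilde{x}) > \rho\,\cost_i(\tilde{y}) + \tau]$ you cannot pass back from the rounded $\tilde{x}$ to the actual panel allocation $x=\overline{x}(S)$ (which is what \Cref{def:apx_core} compares against): the indicator only gives $\rho\,\cost_i(\tilde{y}) + \tau < \cost_i(\tilde{x}) \leq \cost_i(x) + \epsilon'$, missing the required strict inequality by the rounding radius $\epsilon'$. The paper builds this room into the feature itself, using the threshold $\tau + \epsilon/(4\rho)$ together with covering radius $\epsilon/\min(4\rho, 2B)$, so that the population violation (which carries an extra additive $\epsilon$) still triggers the indicator, while the indicator in turn implies the panel-blocking inequality against $x$ and the budget constraint survives the rounding of $y$. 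You flag this slack orchestration as the main obstacle but the feature as you defined it leaves no budget for it; with these two repairs your argument coincides with the paper's.
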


    \paragraph{Facility Location.}
We also study the facility location problem, where the goal is to place a single facility in a metric space $\mathcal{X}$. Each agent in the population is located at some point in $\mathcal{X}$ and seeks to minimize their distance to the facility.  

Our first result shows that the probability of a panel selecting a facility location far from the optimal one is small. Notably, this result holds for arbitrary, possibly infinite, metric spaces. 

\begin{theorem*}[Panel Complexity for Bounding Panel Outliers in Facility Location; see \Cref{thm:tail_bound}] \; 
\begin{enumerate}[noitemsep, topsep=0pt,label=(\roman*)]
    \item {\em Social Choice Setting:} Consider the facility location problem in an arbitrary metric space.
     \item {\em Deliberation Guarantee:} Suppose the panel-selected  facility minimizes total distance to panel members.
    \item {\em Panel Complexity:} Suppose  the panel size $k$ is $O\left(\log(1/\delta)\right)$.
     \item {\em Population Guarantee:} Then, for any constant $T > 2$, with probability at least $1 - \delta$,
     the facility location is within distance $T \cdot \opt$ of the optimal, where $\opt$ is the optimal social cost.
 \end{enumerate}
\end{theorem*}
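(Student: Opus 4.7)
The plan is to combine Markov's inequality on the population distances with the panel-optimality of $\hat{x}$ through the triangle inequality. The constant $T>2$ in the statement reflects the factor-of-two loss inherent to this triangle-inequality approach.

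Let $x^\star$ denote an optimal facility and write $\opt = \tfrac{1}{n}\sum_i d(x^\star,x_i)$ for the (per-capita) optimal social cost. By Markov's inequality, for any threshold $\alpha>1$ at least a $(1-1/\alpha)$ fraction of the population lies within distance $\alpha\cdot\opt$ of $x^\star$. Since the panel $P$ is drawn uniformly at random without replacement, the probability that all $k$ of its members fall in the complementary ``far'' set is at most $(1/\alpha)^k$, via the standard bound $\binom{m}{k}/\binom{n}{k}\le(m/n)^k$. Choosing $\alpha$ as a constant depending on $T$ and $k=\Omega(\log(1/\delta))$ (with the hidden constant depending on $T$), this probability is at most $\delta/2$; hence with probability at least $1-\delta/2$ there is some $j\in P$ with $d(x^\star,x_j)\le\alpha\cdot\opt$.

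Conditioned on this ``close-agent'' event, I would bound $d(\hat{x},x^\star)$ using panel-optimality of $\hat{x}$. Taking $x^\star$ as a feasible benchmark gives $\sum_{i\in P}d(\hat{x},x_i)\le\sum_{i\in P}d(x^\star,x_i)$; combining with the triangle inequality $d(\hat{x},x^\star)\le d(\hat{x},x_i)+d(x_i,x^\star)$ summed over $i\in P$ and divided by $k$ yields
\[
d(\hat{x},x^\star)\;\le\;\frac{2}{k}\sum_{i\in P}d(x^\star,x_i).
\]
The right-hand side has expectation $2\cdot\opt$, which is precisely the source of the inherent barrier at $T=2$.

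The main obstacle will be converting this expectation bound into a high-probability bound tight up to any constant $T>2$, since in a general metric space the distances $d(x^\star,x_i)$ can be unbounded and standard Hoeffding- or Bernstein-style concentration does not apply. My plan is to pair the close-agent event with a second Markov-type tail bound on the panel sum, re-using the close panel member $x_j$ as a secondary benchmark: writing $d(x^\star,x_i)\le d(x^\star,x_j)+d(x_j,x_i)\le\alpha\cdot\opt+d(x_j,x_i)$ lets the slack from a constant-probability Markov bound on $(1/k)\sum_{i\in P}d(x_j,x_i)$ combine additively with the already-controlled term $\alpha\cdot\opt$. A union bound over the two events then gives total failure probability at most $\delta$, and tuning $\alpha$ together with the Markov threshold makes the final multiplicative factor any prescribed constant strictly greater than $2$, while the hidden constant in $k=O(\log(1/\delta))$ depends on how close $T$ is to $2$.
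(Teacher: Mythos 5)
There is a genuine gap, and it sits exactly where you flag it: the step from the expectation bound to a $1-\delta$ tail bound cannot be repaired along the route you propose. The intermediate inequality $d(\hat{x},x^\star)\le \frac{2}{k}\sum_{i\in P}d(x^\star,x_i)$ is too lossy for this theorem. Concretely, take a two-point space $\{a,b\}$ with $d(a,b)=1$, put a $(1-2\delta)$-fraction of the population at $a$ and a $2\delta$-fraction at $b$, so $q^\star=a$ and $\opt=2\delta$. With $k=O(\log(1/\delta))$, the panel contains at least one agent at $b$ with probability roughly $2\delta k\gg \delta$, and whenever it does, $\frac{1}{k}\sum_{i\in P}d(x^\star,x_i)\ge 1/k$, which exceeds $(T/2)\cdot\opt=T\delta$ for every constant $T$ once $\delta$ is small. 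So the surrogate event ``panel-average distance to $q^\star$ exceeds $(T/2)\opt$'' has probability strictly larger than $\delta$, even though the actual event $d(\overline{q}(S),q^\star)\ge T\cdot\opt$ (here: a majority of the panel at $b$) has exponentially small probability. No concentration argument applied to that surrogate can therefore yield the theorem with $k=O(\log(1/\delta))$; the quantity you are trying to concentrate is heavy-tailed relative to the scale $\opt$, and replacing $x^\star$ by a close panel member $x_j$ changes nothing, since the two panel averages differ by at most $\alpha\cdot\opt$. Your proposed patch---a ``constant-probability Markov bound'' on $(1/k)\sum_{i\in P}d(x_j,x_i)$---also cannot deliver a $1-\delta$ guarantee: Markov gives failure probability inversely proportional to the threshold, so pushing the failure probability down to $\delta$ forces the multiplicative factor to grow like $1/\delta$ rather than stay a constant $T$.

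The paper's proof avoids this by never bounding the panel's average distance to $q^\star$. It first uses a metric map $x\mapsto d(q^\star,x)$ (\Cref{lem:reduction_main}) and a truncation of all distances at $T\cdot\socialopt$ (\Cref{lem:further_line_reduction}) to reduce to an instance on $[0,T]$ with only two candidates $\{0,T\}$, population optimum $0$, and population mean at most $1$. There, the bad event is the \emph{decision} event ``the panel prefers $T$ to $0$,'' i.e., the truncated panel mean exceeds $T/2$---a large-deviation event for bounded random variables sampled without replacement---and the truncation is precisely what neutralizes the outlier effect in the example above. Finally, the sharp (entropy) form of Hoeffding's inequality is applied; as the paper notes, the usual simplified Hoeffding bound would not cover all $T>2$, which is a further obstacle your sketch would meet even after fixing the surrogate-event issue. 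Your first paragraph (Markov plus $\binom{m}{k}/\binom{n}{k}\le (m/n)^k$ for the close-agent event) is fine on its own, but it is not the bottleneck, and the overall plan does not close.
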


We further show that the above guarantee does not hold for $T = 2$ (\Cref{prop:star_lower_bound}).

Our second result is analogous to the efficiency guarantees established for participatory budgeting. Importantly, in the facility location setting, we achieve a multiplicative error guarantee {for any space with bounded dimension; see \Cref{def:assouad} for a formal definition of the notion of dimension that we use.}

\begin{theorem*}[Panel Complexity for Welfare Maximization in Facility Location; see \Cref{thm:assouad}] \; 
\begin{enumerate}[noitemsep, topsep=0pt,label=(\roman*)]
    \item {\em Social Choice Setting:} Consider the facility location problem in a metric space with Assouad dimension $t$, which includes the Euclidean space $[0,1]^t$ under any $\ell_p$-norm.
     \item {\em Deliberation Guarantee:} Suppose the panel-selected facility  minimizes total distance to panel members.
    \item {\em Panel Complexity:} Suppose  the panel size $k$ is $O\left((1/\epsilon^2) \cdot t \cdot \log(1/\epsilon)\right)$.
     \item {\em Population Guarantee:} Then, the expected social cost is optimal up to a multiplicative factor of $1+\epsilon$.
 \end{enumerate}
\end{theorem*}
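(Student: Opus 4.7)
The plan is to apply our main panel-representativeness tool \Cref{thm:many_features} to the family of features obtained by treating each point of an $\epsilon$-net of $\mathcal{X}$ as a candidate facility, then convert the resulting representativeness into an additive bound on the mean-distance function via Lipschitzness, and finally promote this to the multiplicative guarantee using the scale-invariance of Assouad dimension.

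I would begin by invoking the Assouad-dimension hypothesis to obtain an $\epsilon$-net $\mathcal{N} \subseteq \mathcal{X}$ of cardinality $|\mathcal{N}| = O((1/\epsilon)^t)$. For each $c \in \mathcal{N}$, I define the real-valued feature $f_c : [n] \to \mathbb{R}$ by $f_c(i) := d(x_i, c)$. By \Cref{thm:many_features} applied to these $\ell = O((1/\epsilon)^t)$ features, a uniformly random panel of size $k = O((1/\epsilon^2) \cdot t \log(1/\epsilon))$ is $\epsilon$-representative with respect to every $f_c$ simultaneously with high probability. Since the $W_1$ distance on $\mathbb{R}$ upper-bounds the difference of means of the two induced distributions, this yields $|\mu_S(c) - \mu_n(c)| \leq \epsilon$ for every $c \in \mathcal{N}$, where I abbreviate $\mu_n(y) := \frac{1}{n}\sum_i d(x_i, y)$ and $\mu_S(y) := \frac{1}{|S|}\sum_{i \in S} d(x_i, y)$.

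Next, I would lift this bound from $\mathcal{N}$ to the whole space using the $1$-Lipschitzness of $y \mapsto d(x_i, y)$, which makes $\mu_n$ and $\mu_S$ both $1$-Lipschitz in $y$; combined with the $\epsilon$-covering property of the net, this gives $|\mu_S(y) - \mu_n(y)| \leq 3\epsilon$ uniformly over $y \in \mathcal{X}$. Writing $x^* \in \arg\min_y \mu_n(y)$ for the population optimum and $\hat{y}_S \in \arg\min_y \mu_S(y)$ for the panel-selected facility, the ERM-style chain $\mu_n(\hat{y}_S) \leq \mu_S(\hat{y}_S) + 3\epsilon \leq \mu_S(x^*) + 3\epsilon \leq \mu_n^* + 6\epsilon$ delivers an additive guarantee on the mean distance (equivalently, an additive $O(n\epsilon)$ guarantee on the social cost).

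The hard step is promoting this additive bound to the $(1+\epsilon)$-multiplicative form claimed in the statement. The key observation is that Assouad dimension is scale-free: any ball of radius $r$ admits an $(\epsilon r)$-net of cardinality $O((1/\epsilon)^t)$, independently of $r$. A triangle-inequality argument shows that any $y$ with $d(y,x^*) \geq C\mu_n^*$ satisfies $\mu_n(y) \geq (C-1)\mu_n^*$, so the panel-optimal $\hat{y}_S$ must lie in $B(x^*, O(\mu_n^*))$ to stand a chance of being competitive at all. I would therefore rerun the net-and-Lipschitz analysis at the finer target accuracy $\epsilon \mu_n^*$ against an $(\epsilon \mu_n^*)$-net of this ball---which, by scale-freeness, still has only $O((1/\epsilon)^t)$ points---obtaining an additive error of $O(\epsilon \mu_n^*)$, which is exactly the multiplicative $(1+O(\epsilon))$ slack required on a social cost of $n\mu_n^*$. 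Since $\mu_n^*$ is instance-dependent, I would union-bound over the $O(\log(1/\epsilon))$ dyadic guesses of $\mu_n^*$, contributing only an additive $\log\log(1/\epsilon)$ term to $\log \ell$ that is absorbed in the panel-size $O(\cdot)$; the expectation claim then follows by bounding the cost on the low-probability failure event via a trivial diameter upper bound.
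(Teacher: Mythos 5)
Your overall architecture (cover the relevant region, treat distances to net points as features, apply \Cref{thm:many_features}, then run an ERM-style chain) matches the paper's proof of \Cref{thm:assouad}, but the step you yourself flag as "the hard step" contains the genuine gaps. First, your localization of the panel optimum is not valid as stated: the triangle-inequality fact that any $y$ with $d(y,x^\star)\geq C\mu_n^\star$ has $\mu_n(y)\geq (C-1)\mu_n^\star$ constrains the \emph{population} cost, but the panel selects $\hat y_S$ by minimizing the \emph{panel} cost, so "must lie in $B(x^\star, O(\mu_n^\star))$ to be competitive" is circular --- it presumes the conclusion you are trying to prove. Your coarse phase only gives uniform closeness $|\mu_S(y)-\mu_n(y)|\leq 3\epsilon$ at \emph{absolute} scale $\epsilon$, which says nothing at scale $O(\mu_n^\star)$ when $\mu_n^\star\ll\epsilon$. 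The paper needs a separate probabilistic ingredient here: \Cref{thm:tail_bound} shows $\mathbb{P}[d(q^\star,\overline q(S))\geq T\cdot\socialopt]\leq\delta$ with $k=O(\log(1/\delta))$, and proving it is nontrivial (the metric-map reductions of \Cref{lem:reduction_main,lem:further_line_reduction} plus the sharp form of Hoeffding's inequality for sampling without replacement). Second, "rerunning the analysis at target accuracy $\epsilon\mu_n^\star$" does not come for free from scale-invariance of the Assouad dimension: the net size is indeed scale-free, but \Cref{thm:many_features} is stated for $[0,1]$-valued features, so demanding representativeness at accuracy $\epsilon\mu_n^\star$ for raw distance features forces $k=\Omega(1/(\epsilon\mu_n^\star)^2)$, which is unbounded as $\mu_n^\star\to 0$. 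The paper's fix is to use, on the localization event, the \emph{normalized difference} features $f_j(i)=(d(x_i,q_j)-d(x_i,q^\star))/\socialopt$, which are bounded by the constant $T$ precisely because the net lives in $B(q^\star,T\cdot\socialopt)$; then accuracy $\epsilon/3$ suffices and the claimed panel size is preserved. (Your dyadic guessing of $\mu_n^\star$ is also unnecessary --- representativeness is an analysis device, so the true value may be used --- and would in any case require $\log(1/\mu_n^\star)$, not $\log(1/\epsilon)$, scales.)

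The final conversion to an expectation bound has the same scale problem: bounding the failure event by the diameter yields $\mathbb{E}[\social(\hat y_S)]\leq(1+\epsilon)\socialopt+\delta$, and to absorb $\delta$ into a multiplicative $(1+O(\epsilon))\socialopt$ you would need $\delta\lesssim\epsilon\cdot\socialopt$, i.e.\ a panel size depending on $\log(1/\socialopt)$, which is not available. The paper instead proves \Cref{thm:reduction}, integrating the tail $\mathbb{P}[\social(\overline q(S))\geq x]$ for $x>4\cdot\socialopt$ using \Cref{thm:tail_bound} at every scale $T=x/\socialopt-1$ together with $\social(\overline q(S))\leq\socialopt+d(q^\star,\overline q(S))$, so that the failure contribution is $O(\epsilon\cdot\socialopt)$ rather than $O(\epsilon)$. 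In short, your first half is essentially the paper's argument, but the localization, the normalization of features inside the localized ball, and the tail-integration for the expectation are each missing ideas without which the multiplicative guarantee does not follow.
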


This bound is almost tight: we also establish a lower bound of $\Omega((1/\epsilon^2) \cdot t)$ for $\mathcal{X} = ([0,1]^t, \ell_{\infty})$ (\Cref{thm:assouad_lower}),
showing that the panel complexity guarantee above is essentially optimal.

Finally, we extend our analysis to the case of locating multiple facilities. Assuming the panel minimizes total distance to its members, we show that in any metric space $\mathcal{X}$, the resulting loss in social cost is upper bounded by the Wasserstein distance between the panel and population distributions (\Cref{lem:multifacilities}). This result leads to the following panel complexity guarantee for social cost minimization over the line $\mathcal{X} = [0,1]$. Notably, this guarantee is independent of the number of facilities to be selected.

\begin{theorem*}[Panel Complexity for Multiple Facility Location on the Line; see \Cref{thm:line}] \; 
\begin{enumerate}[noitemsep, topsep=0pt,label=(\roman*)]
    \item {\em Social Choice Setting:} Consider the multiple facility location problem in $[0,1]$.
     \item {\em Deliberation Guarantee:} Suppose the panel-selected facilities minimize total distance to panel members.
    \item {\em Panel Complexity:} Suppose  the panel size $k$ is  $O\left(1/\epsilon^2\right)$.
     \item {\em Population Guarantee:} Then, the expected social cost is optimal up to an additive factor of $\epsilon$.
 \end{enumerate}
\end{theorem*}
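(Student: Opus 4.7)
The plan is to combine two ingredients: the multi-facility reduction of \Cref{lem:multifacilities}, which bounds the loss of the panel's chosen facilities by the Wasserstein distance between the panel's and the population's empirical distributions on $[0,1]$; and a one-dimensional estimate of the expected Wasserstein distance under uniform sampling without replacement. The upshot is that the scalar Wasserstein gap alone---independent of the number of facilities or their arrangement---controls the loss, which explains why the panel complexity here does not depend on the number of facilities.

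First, I would instantiate \Cref{lem:multifacilities} with $\mathcal{X} = [0,1]$. Writing $\mu_N$ for the uniform empirical distribution of the population's positions and $\mu_S$ for the distribution induced by a random panel $S$ of size $k$, and letting $F_S$ (resp.\ $F^\star$) denote facility sets minimizing total distance to panel members (resp.\ to the whole population), the lemma yields
\[
\social(F_S) - \social(F^\star) \;\le\; O\left( W_1(\mu_S, \mu_N) \right).
\]
Taking expectations over the uniformly random panel reduces the theorem to showing $\mathbb{E}_S[W_1(\mu_S, \mu_N)] = O(1/\sqrt{k})$.

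For the second step, I would exploit the one-dimensional closed form $W_1(\mu_S, \mu_N) = \int_0^1 |F_S(t) - F_N(t)|\, dt$, where $F_S$ and $F_N$ are the corresponding CDFs. For any fixed $t \in [0,1]$, the count $k \cdot F_S(t)$ is hypergeometric with mean $k \cdot F_N(t)$ and variance at most $k \cdot F_N(t)(1-F_N(t)) \le k/4$. By Cauchy--Schwarz, $\mathbb{E}[|F_S(t)-F_N(t)|] \le 1/(2\sqrt{k})$, and Fubini's theorem then gives $\mathbb{E}_S[W_1(\mu_S, \mu_N)] \le 1/(2\sqrt{k})$---which is exactly the one-dimensional specialization of \Cref{lem:W1expectation}. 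Choosing $k = \Theta(1/\epsilon^2)$ makes the expected excess social cost at most $\epsilon$, as required.

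The main obstacle is essentially conceptual: verifying that sampling without replacement does not inflate the variance relative to the i.i.d.\ case in the Bernoulli step above. Since the indicators $\mathbf{1}[x_i \le t]$ across $i \in S$ are negatively associated, the variance of their sum is no larger than in the independent case, so the hypergeometric bound is legitimate---and this is precisely the subtlety that \Cref{lem:W1expectation,lem:W1concentration} were designed to encapsulate. All remaining pieces---the Lipschitz argument underpinning \Cref{lem:multifacilities} and the Fubini exchange---are routine.
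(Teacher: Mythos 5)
Your proposal is correct, and it follows the paper's first step exactly---invoking \Cref{lem:multifacilities} with the feature $f(i)=x_i$ to reduce the theorem to bounding $\mathbb{E}_{S\sim\uniformpanel}[W(\phi_f^{[n]},\phi_f^S)]$ by $\epsilon$---but it diverges in how that expectation is bounded. The paper routes through \Cref{lem:W1expectation} to pass to sampling \emph{with} replacement and then cites the known i.i.d.\ rate $O(1/\sqrt{k})$ for the one-dimensional Wasserstein distance (Proposition 2.1 of \citet{chewi2024statisticaloptimaltransport}); you instead bound the without-replacement expectation directly, using the one-dimensional identity $W_1(\mu_S,\mu_N)=\int_0^1|F_S(t)-F_N(t)|\,dt$, the hypergeometric variance bound $\mathrm{Var}(F_S(t))\le 1/(4k)$, Cauchy--Schwarz, and Fubini. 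Your route is more elementary and self-contained for $\mathcal{X}=[0,1]$ (it needs neither the reduction lemma nor the external citation, and it even gives an explicit constant $1/(2\sqrt{k})$), while the paper's route is the one that generalizes beyond the line, since the CDF formula is special to dimension one. Two small inaccuracies worth fixing: your displayed per-panel inequality is not what \Cref{lem:multifacilities} gives---the per-panel bound in its proof is $\social(\overline{y}(S))\le W(\phi_f^{[n]},\phi_f^S)+\panelopt(S)$, and the comparison to $\socialopt$ only holds after taking expectations (with constant $1$, no hidden $O(\cdot)$); and your direct hypergeometric computation is not ``the one-dimensional specialization of \Cref{lem:W1expectation}''---that lemma is a comparison between sampling schemes, not a rate bound---though your negative-association remark correctly explains why no such comparison is needed in your argument. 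Neither issue affects the validity of the proof once expectations are taken where the lemma requires them.
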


\subsection{Related Work}\label{sec:related}

\paragraph{Panel Complexity.}
    Several works have established panel complexity-type results, though none focus on it as their primary objective.
    \citet*{DBLP:conf/icml/CaragiannisM024} and \citet*{goel2025metric} both provide a panel complexity bound in the facility location setting.
    Unlike the result of \citet*{DBLP:conf/icml/CaragiannisM024}, the result of \citet*{goel2025metric} does not depend on the number of alternatives, but their model differs significantly from our own: panels can be sampled from a continuum of agents an infinite number of times and agents collectively reveal a ranking over all alternatives.
    
   In the setting of centroid clustering, \citet*{CFLM19} give a result which can be interpreted as a panel complexity result with respect to fairness and their bound depends on the number of candidate centroids.
    When every point is a candidate centroid, clustering is equivalent to the setting of \Cref{sec:multifacilities}.
    In this setting, \citet*{MiSh20} give a panel complexity bound for $\ell_2$ distances.
    Extending this result to general metric spaces remains open; see \Cref{sec:conclusion}.
    
\paragraph{Sortition.}
    The computer science literature studying sortition consists of two main threads. Starting with {the work of} \citet*{BenadeGP19}, the first thread \citep{BaharavF24,FLPW24a,FlaniganGGP20,FlaniganGGPG2021,FlaniganKP21} focuses on algorithms which select a panel from a pool of random invitees who have agreed to participate. 
    In the presence of selection bias, these selection algorithms then pursue representativeness by complying with quotas while also carefully balancing agents' selection probabilities. 
    Other works refine these algorithms by taking into consideration other desiderata such as manipulation-robustness and transparency \citep{BaharavF24,FLPW24a}.

    The other major thread of sortition research \citep{DBLP:conf/icml/CaragiannisM024,ChengDK17,Cheng2018Distortion,DBLP:conf/nips/EbadianKMP022,DBLP:journals/corr/abs-2406-00913,MeirST21,goel2025metric}, which our work falls within, takes the perspective that sortition should sample directly from the entire population \citep{gastil2019legislature}.
    Within this line of work, a number of papers have looked at the expected distortion---taking the ratio of expected social cost to the optimal social cost---of a decision made by a sortition-selected panel \citep{DBLP:conf/icml/CaragiannisM024,MeirST21,anagnostides2022metric,goel2025metric}.
    Rather than quota compliance, works in this strain of literature give novel definitions of representativeness.
    The definitions of \citet*{DBLP:conf/nips/EbadianKMP022} and \citet*{DBLP:journals/corr/abs-2406-00913} both frame representativeness in terms of the latent preferences of the population of agents. 
    To do this, they parameterize their definitions by $q$, assuming that a panel incurs a cost to each agent equal to their distance from the $q$-th closest panel member.
    \citet*{DBLP:conf/nips/EbadianKMP022} then take a distortion-like approach to measuring representativeness. 
    \citet*{DBLP:journals/corr/abs-2406-00913} give a property inspired more by the notion of proportional representation, which can be thought of as belonging to a broader class of proportional representation properties for centroid clustering \citep{aziz2023proportionally,kellerhals2025proportional, kalayci2024proportional}.
    Our notion is better-suited for evaluating panel size and does not reason about agents' preferences over panels.
    We comment further on how our approach to representativeness differs from theirs in \Cref{sec:representative}.

    We note that while uniform panel selection studied in this work is a standard and well-established model of sortition  \citep{DBLP:conf/icml/CaragiannisM024,ChengDK17,Cheng2018Distortion,DBLP:conf/nips/EbadianKMP022,DBLP:journals/corr/abs-2406-00913,MeirST21,goel2025metric}, prior work has also considered non-uniform selection methods---such as the Fair Greedy Capture method proposed by \citet*{DBLP:journals/corr/abs-2406-00913}---which are designed to provide ex-post guarantees that hold for every panel in the support of the lottery.
    For instance, \citet*{DBLP:conf/icml/CaragiannisM024} also achieved an ex-post constant-factor approximation to optimal social welfare in  facility location  using Fair Greedy Capture, which assumes full access to agents’ locations in the metric space.
    Although uniform panel selection inherently rules out the possibility of such ex-post guarantees, it has the important advantage of requiring no additional information about the population.

        \paragraph{Sample Complexity and Social Choice.}
        
    The literature on sample complexity in metric social choice settings has primarily focused on proving distortion guarantees for a constant number of samples \citep{DBLP:conf/wine/FainGMS17,DBLP:conf/aaai/FainGMP19,DBLP:conf/icalp/GoyalSSG23}. 
    In contrast, \citet*{DBLP:conf/ijcai/FainFM20} give results for an arbitrary number of samples $k$. They give constant bounds for higher moments of distortion. This approach demonstrates the value of larger panels but does not readily offer a way of selecting a panel size.
    Interestingly, similar to our \Cref{thm:tail_bound}, they give a result showing that their mechanism deviates from a constant approximation to the optimum with probability decaying exponentially in the number of samples. 
    Despite the similarity, their result follows from Markov's inequality, which does not suffice in our case.
    We note that our results have a similar flavor to those of \citet*{DBLP:conf/atal/DeyB15} and \citet*{DBLP:conf/aaai/CaragiannisF22}. 
    However, these works focus on voting settings with ordinal and approval-based preferences, respectively, in addition to a finite number of alternatives.
    
    In this paper, we study the sortition panel complexity of participatory budgeting, facility location, and multiple facility location. 
    In particular, 
    we study a participatory budgeting setting in which a divisible resource is allocated among projects with no fixed costs, commonly referred to as divisible participatory budgeting
    (for a survey, see \cite{AzSh21a}). 
    This is in contrast to indivisible participatory budgeting, in which funding decisions are binary \citep{rey2023computational}.
    Facility location and its multiple facilities extension are both classical problems in operations research \citep[e.g.,][]{owen1998strategic,tamir2001k}) and have been the topic of numerous recent works (see \cite{chan2021mechanismdesignfacilitylocation} for a survey).

\section{Framework for Panel Complexity}\label{sec:framework}

In this section, we develop our framework for analyzing panel complexity, which we later use to derive panel complexity guarantees for participatory budgeting (\Cref{sec:pb}) and facility location (\Cref{sec:facility}). The following three subsections introduce the sortition model, present a novel definition of representative panels for establishing upper bounds on panel complexity, and define camouflaged populations, which we use to derive lower bounds. The omitted proofs are deferred to \Cref{sec:missingthree}.

\subsection{Model of Sortition} 

For any $\ell \in \mathbb{N}$, let $[\ell] = \{1, \ldots, \ell\}$ denote the set of the first $\ell$ natural numbers. We represent a \emph{population} of $n$ agents as $[n]$ and refer to any subset $S \subseteq [n]$ as a \emph{panel}.

We consider features $f : [n] \to \mathcal{X}$, which assign each agent $i \in [n]$ a value in a metric space $(\mathcal{X}, d)$. 
Equivalently, we can regard $f$ as a vector $(f_1, \ldots, f_n) \in \mathcal{X}^n$.
These features may capture demographic characteristics or encode agents' preferences. For example, a feature $f : [n] \to [0,1]$ may represent agent $i$'s utility for a given budget allocation in a participatory budgeting problem. Similarly, a feature $f : [n] \to \mathcal{X}$ may represent agent $i$'s location in a facility location problem instance.

We study the properties of randomly selected panels of size $k$, where $1 \leq k \leq n$. Let $\uniformpanel$ denote the uniform distribution over the space $\binom{n}{k}$ of such panels. Throughout this work, we assume that panels $S \sim \uniformpanel$ are drawn uniformly at random.

\subsection{Representative Panels}\label{sec:representative}
Consider a single feature $f: [n] \to \mathcal{X}$, where $(\mathcal{X},d)$ is a metric space. This feature defines a discrete probability distribution over $\mathcal{X}$, determined by the frequency of each outcome. Formally, for each subset $S \subseteq [n]$, we define:
\begin{align*}
\quad \phi_f^S = \frac{1}{|S|} \sum_{i \in S} \delta(f_i).
\end{align*}
Here, $f_i$ denotes the value of feature $f$ assigned to agent $i$, and $\delta(f_i)$ represents the Dirac mass at $f_i$, i.e., the distribution $\phi_f^S$ assigns a probability mass of $\ell/|S|$ 
to each $x \in \mathcal{X}$, where $\ell$ is the number of occurrences of $x$ in the vector $(f_i)_{i \in S}$.
We refer to:
\begin{center}
    $\phi_f^{[n]}$ as the \emph{population distribution} of $f$; and  $\phi_f^S$, for a panel $S \subseteq [n]$, as the \emph{panel distribution} of $f$.
\end{center}

The Wasserstein distance, also known as the earth mover's distance or the Kantorovich-Rubinstein distance, originates from optimal transport theory~\cite{villani2008optimal, chewi2024statisticaloptimaltransport}. 
In general, the Wasserstein-$p$ distances form a family of metrics that generalize the notion of optimal transport. In this work, we focus exclusively on the Wasserstein-1 distance, which we simply refer to as the Wasserstein distance.
The Wasserstein distance is based on the idea that two probability measures are close if one can be transformed into the other by moving a small amount of probability mass.  It is defined as follows:

\begin{definition}[Wasserstein Distance]\label{def:Wdistance}
Let $\phi$ and $\psi$ be discrete probability distributions supported on a metric space $(\mathcal{X}, d)$, with finite supports. The Wasserstein distance between $\phi$ and $\psi$ is given by  
\begin{equation}
    W(\phi, \psi) = \min_{\gamma \in \Gamma(\phi, \psi)} \mathbb{E}_{(x,y) \sim \gamma} [d(x,y)] = \min_{\gamma \in \Gamma(\phi, \psi)} \sum_{x \in \operatorname{supp}(\phi)} \sum_{y \in \operatorname{supp}(\psi)} \gamma(x, y) \cdot d(x, y),
\end{equation}
where $\Gamma(\phi, \psi)$ is the set of all \emph{couplings} of $\phi$ and $\psi$, i.e., {$\Gamma(\phi, \psi)$ includes every joint distribution $\gamma$ over $\mathcal{X}^2$ 
such that
$\sum_{y \in \operatorname{supp}(\psi)} \gamma(x, y) = \phi(x)$ for all $x \in \operatorname{supp}(\phi)$ and $\sum_{x \in \operatorname{supp}(\phi)} \gamma(x, y) = \psi(y)$ for all $y \in \operatorname{supp}(\psi)$.}  
\end{definition}

The Wasserstein distance has applications in computer vision and machine learning, where it is used to quantify the similarity between images based on pixel frequencies~\cite{WGAN17, NiBCE09, RuzonT01}.  

We now define our notion of representativeness based on the Wasserstein distance.

\begin{definition}[Representative Panels]\label{def:representativeness}
    We say that a panel $S \subseteq [n]$ is $\epsilon$-representative of a given feature $f: [n] \to \mathcal{X}$ if the Wasserstein distance between the population distribution $\phi_f^{[n]}$ and the panel distribution $\phi_f^S$ is at most $\epsilon$, i.e., it holds that:
    \begin{align*}
        W(\phi_f^{[n]}, \phi_f^S) \leq \epsilon.
    \end{align*}
\end{definition}

For instance, panel distributions close to the population distribution include those obtained by moving each point  by a distance of less than $\epsilon$ or by shifting an $\epsilon$-fraction of all points by a distance of 1.

Prior work introduced alternative notions of panel representativeness~\cite{DBLP:journals/corr/abs-2406-00913,DBLP:conf/nips/EbadianKMP022}. \citet*{DBLP:conf/nips/EbadianKMP022} compare a given panel of size $k$ to the most representative panel of $k$ agents, while \citet*{DBLP:journals/corr/abs-2406-00913} define a criterion inspired by the ideal of proportional representation in committee voting. Both approaches rely on assumptions about agents' preferences over panels.  
In contrast, our notion of representativeness takes a more direct approach, using the underlying population as a benchmark. Crucially, this benchmark remains unchanged as the panel size varies, enabling an analysis of panel complexity.
{As we further illustrate in \Cref{sec:examplesthree}, our notion of representativeness is consistent with the intuition behind Example 7 of \citet*{DBLP:journals/corr/abs-2406-00913}; see \Cref{rem:sumdist}.}

In the rest of this section, we prove general results showing that if $k$ is sufficiently large, a randomly selected panel of size $k$ is $\epsilon$-representative for one or more features with high probability. From a technical perspective, it is useful to note the following basic fact: drawing a panel $S \sim \uniformpanel$ is equivalent to sampling $k$ individuals {\em without replacement} from the population $[n]$. 

Since concentration bounds are typically easier to derive for independent samples drawn {\em with replacement}, we begin by establishing a reduction for the expected Wasserstein distance in this setting.
The following lemma formalizes the intuition that sampling without replacement more closely reflects the underlying population than sampling with replacement; see \Cref{remark:withwithout} for a more detailed discussion.
{The proof follows from the convexity of  Wasserstein distance (\Cref{lem:w_convexity}) and the argument of \citet*{Hoeffding63}.}

\begin{restatable}[Reduction to Independent Sampling]{lemma}{SamplingWithoutReplacementExpectation}\label{lem:W1expectation}
   Let $f: [n] \to \mathcal{X}$ be a feature. 
Define $\mathcal{R}_{k,n}$ as the probability distribution over multisets of size $k$ drawn from $[n]$ by sampling with replacement. Then, we have:
\begin{align*}
    \mathbb{E}_{S \sim \uniformpanel} \left[ W(\phi_f^{[n]}, \phi_f^S) \right] 
    \leq \mathbb{E}_{S \sim \mathcal{R}_{k,n}} \left[ W(\phi_f^{[n]}, \phi_f^{S}) \right].
\end{align*}
\end{restatable}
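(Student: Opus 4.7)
The plan is to reduce the inequality to Hoeffding's classical (1963) comparison theorem for convex functions applied to sample averages, combined with the convexity of the Wasserstein distance in its second argument (\Cref{lem:w_convexity}).

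First, I would rewrite both empirical measures as averages of Dirac measures in the linear space of signed measures on $\mathcal{X}$. Listing the elements of $S$ with multiplicity as $s_1, \ldots, s_k$, we have $\phi_f^S = \frac{1}{k}\sum_{j=1}^{k}\delta(f_{s_j})$. Hence $\phi_f^S$ for $S \sim \uniformpanel$ is the average of $k$ elements drawn without replacement from the collection $\{\delta(f_1), \ldots, \delta(f_n)\}$, while $\phi_f^S$ for $S \sim \mathcal{R}_{k,n}$ is the average of $k$ i.i.d.\ draws from the uniform distribution on that same collection. Second, I would define the functional $g(\mu) := W(\phi_f^{[n]}, \mu)$ on the convex set of discrete probability distributions on $\mathcal{X}$; by \Cref{lem:w_convexity}, $g$ is convex in $\mu$.

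The lemma then follows from Hoeffding's comparison inequality in its natural linear-space form: for any convex functional $g$ on a convex subset of a real linear space and any finite collection $c_1, \ldots, c_n$ in that subset,
\begin{equation*}
\mathbb{E}_{U \sim \uniformpanel}\!\left[g\!\left(\tfrac{1}{k}\sum_{j=1}^{k} c_{u_j}\right)\right] \;\le\; \mathbb{E}_{V \sim \mathcal{R}_{k,n}}\!\left[g\!\left(\tfrac{1}{k}\sum_{j=1}^{k} c_{v_j}\right)\right].
\end{equation*}
Specializing to $c_i = \delta(f_i)$ and $g$ as above yields exactly the desired bound on $\mathbb{E}_{S \sim \uniformpanel}\!\left[W(\phi_f^{[n]}, \phi_f^S)\right]$.

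The main care point is verifying that Hoeffding's theorem, classically stated for real-valued convex functions of real-valued sample averages, extends to convex functionals of averages in a linear space of measures. This extension is essentially routine: Hoeffding's proof rests only on linearity of the averaging operation and Jensen's inequality for the convex functional, neither of which depends on the ambient space being $\mathbb{R}$. The cleanest exposition would either reproduce the classical proof verbatim in the linear-space setting, or reduce to the real-valued case by observing that $g$ is convex along every one-dimensional slice and integrating the resulting inequalities.
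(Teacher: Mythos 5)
Your proposal is correct and matches the paper's own proof: the paper likewise applies Hoeffding's comparison theorem for convex functions of averages (in a normed vector space, embedding the empirical measures as vectors indexed by feature values) together with the convexity of $W(\phi_f^{[n]},\cdot)$ from \Cref{lem:w_convexity}, and it handles your ``care point'' by citing that Hoeffding's argument extends verbatim beyond $\mathbb{R}$.
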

We also provide a counterexample to a stronger version of the above lemma, demonstrating that the (first-order) stochastic dominance condition does not hold in general; see \Cref{ex:sd}.

In addition, we establish the following concentration inequality for the Wasserstein distance. {The proof relies on the bounded differences property of the Wasserstein distance by \citet*{weed2017sharpasymptoticfinitesamplerates} and a McDiarmid-style inequality for sampling without replacement by \citet*{sambale2022concentration}.}

\begin{restatable}[Concentration Inequality]{lemma}{SamplingWithoutReplacementConcentration}\label{lem:W1concentration}
   Let $f: [n] \to \mathcal{X}$ be a feature. Suppose that $d(x,y) \leq 1$ for all $x,y \in \mathcal{X}$.
Then, we have:
\begin{align*}
    \mathbb{P}_{S \sim \uniformpanel} \left[ W(\phi_f^{[n]}, \phi_f^S) \geq \mu + t \right] 
    \leq \exp\left( {-t^2 k} / 4 \right) \quad\text{where }\mu = \mathbb{E}_{S \sim \uniformpanel}\left[W(\phi_f^{[n]}, \phi_f^S)\right].
\end{align*}
\end{restatable}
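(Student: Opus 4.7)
The plan is to view $g(S) := W(\phi_f^{[n]}, \phi_f^S)$ as a real-valued function of the random panel $S \sim \uniformpanel$ and apply a McDiarmid-style concentration inequality for sampling without replacement. For this, I need to (i) verify that $g$ has bounded differences with respect to swapping a single element of $S$, and then (ii) plug these bounds into a concentration inequality that handles the lack of independence between the coordinates of $S$.

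For the bounded-differences step, fix any two panels $S, S' \subseteq [n]$ with $|S| = |S'| = k$ that differ in exactly one element, say $S' = (S \setminus \{i\}) \cup \{i'\}$. The panel distributions $\phi_f^S$ and $\phi_f^{S'}$ agree on the contribution of every agent other than $i$ and $i'$: going from $\phi_f^S$ to $\phi_f^{S'}$ moves a mass of $1/k$ from $f_i$ to $f_{i'}$. By the definition of the Wasserstein distance and the assumption $d(x,y) \leq 1$,
\begin{equation*}
    W(\phi_f^S, \phi_f^{S'}) \leq \frac{1}{k} \cdot d(f_i, f_{i'}) \leq \frac{1}{k}.
\end{equation*}
Since the Wasserstein distance is itself a metric on probability measures (so the triangle inequality applies), this gives the bounded-differences inequality $|g(S) - g(S')| \leq 1/k$. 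This is precisely the bounded-differences property of the empirical Wasserstein distance established by \citet*{weed2017sharpasymptoticfinitesamplerates}.

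For the concentration step, I would invoke the McDiarmid-type inequality for sampling without replacement from \citet*{sambale2022concentration}. Representing $S$ as the (unordered) image of a uniformly random injection $[k] \hookrightarrow [n]$, $g$ is a function of $k$ negatively-dependent coordinates each of whose perturbations changes $g$ by at most $c_j = 1/k$. Sambale's inequality yields a sub-Gaussian tail bound of the form $\exp(-t^2 / (c \cdot \sum_j c_j^2))$ for an absolute constant $c$; with $\sum_j c_j^2 = 1/k$ and the constant coming out to $4$ in the version used here, the bound reduces to $\exp(-t^2 k / 4)$, matching the statement.

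The main obstacle is that standard McDiarmid does not apply directly, since the $k$ indices selected without replacement are not independent; bypassing this is exactly what \citet*{sambale2022concentration} provides, and the (mild) loss in constant relative to the i.i.d.\ case is absorbed into the factor of $4$ in the exponent. The bounded-differences step itself is conceptually straightforward but requires being careful to interpret a single-element swap in the panel as an optimal transport move of mass $1/k$, which is exactly what Weed's lemma packages.
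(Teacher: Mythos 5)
Your proposal is correct and follows essentially the same route as the paper: it combines the bounded-differences property of the empirical Wasserstein distance (the $1/k$ bound from a single-element swap, as in \citet{weed2017sharpasymptoticfinitesamplerates}) with the McDiarmid-style inequality for sampling without replacement of \citet{sambale2022concentration}, whose factor $(1-k/n)\leq 1$ is absorbed to give $\exp(-t^2k/4)$. The only detail worth making explicit, as the paper does, is that Sambale's inequality requires $W(\phi_f^{[n]},\cdot)$ to be a symmetric function of the $k$ sampled coordinates, which holds here.
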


With these two lemmas, deriving panel complexity bounds for $\epsilon$-representativeness of a feature valued in a general metric space reduces to bounding the expected Wasserstein distance for independent samples (drawn with replacement). We apply this approach below for real-valued features; see the proof of \Cref{thm:many_features}.

\paragraph{Real-Valued Features.}
We now focus on the special case of $\mathcal{X} = [0,1]$  with the Euclidean distance.

The next theorem addresses the panel complexity of being $\epsilon$-representative for a given sequence of features $f_1, \ldots, f_{\ell}$, and will be instrumental in establishing our panel complexity bounds.

\begin{restatable}[Panel Complexity of Representativeness]{theorem}{PanelMultipleFeatures}\label{thm:many_features}
Let $f_1, \ldots, f_{\ell}$ be features such that $f_i: [n] \rightarrow [0,1]$ for all $i \in [\ell]$. Let the panel size be:
    \begin{align*}
        k = O\left( \left( \frac{1}{\epsilon} \right)^2 \left( \log \ell + \log\left( \frac{1}{\delta}\right) \right)\right).
    \end{align*}
   Then it holds that:
    \begin{align*}
        \mathbb{P}_{S \sim \uniformpanel} \left[ \text{$S$ is $\epsilon$-representative of $f_j$ for all $j \in [\ell]$} \right] \geq 1-\delta.
    \end{align*}
\end{restatable}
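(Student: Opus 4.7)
The plan is to reduce the multi-feature statement to the single-feature case via a union bound, and then combine the two general lemmas already established: the reduction to independent sampling (\Cref{lem:W1expectation}) to control the expectation, and the concentration inequality (\Cref{lem:W1concentration}) to control deviations around it. Concretely, by a union bound over $j \in [\ell]$, it suffices to show that for any single real-valued feature $f : [n] \to [0,1]$,
\[
    \mathbb{P}_{S \sim \uniformpanel}\bigl[W(\phi_f^{[n]}, \phi_f^S) > \epsilon\bigr] \;\leq\; \frac{\delta}{\ell},
\]
provided the panel size satisfies $k = \Omega\bigl((1/\epsilon^2)(\log \ell + \log(1/\delta))\bigr)$.

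First I would bound the expectation $\mu := \mathbb{E}_{S \sim \uniformpanel}[W(\phi_f^{[n]}, \phi_f^S)]$. By \Cref{lem:W1expectation}, it is enough to bound the corresponding quantity under sampling with replacement, $\mathbb{E}_{S \sim \mathcal{R}_{k,n}}[W(\phi_f^{[n]}, \phi_f^S)]$. Here we are comparing a probability measure on $[0,1]$ with its empirical measure drawn from $k$ i.i.d.\ samples, so I would invoke the classical one-dimensional bound $\mathbb{E}[W(\phi_f^{[n]}, \phi_f^S)] = O(1/\sqrt{k})$, which follows from the identity $W_1(\phi,\psi) = \int_0^1 |F_\phi(x) - F_\psi(x)|\, dx$ combined with a pointwise variance bound on the empirical CDF (or equivalently, the Dvoretzky–Kiefer–Wolfowitz inequality). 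Choosing the hidden constant in $k$ large enough therefore yields $\mu \leq \epsilon/2$.

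Next I would apply the concentration inequality of \Cref{lem:W1concentration} with $t = \epsilon/2$; the hypothesis $d(x,y) \leq 1$ is satisfied because the feature values lie in $[0,1]$. This gives
\[
    \mathbb{P}_{S \sim \uniformpanel}\bigl[W(\phi_f^{[n]}, \phi_f^S) > \epsilon\bigr] \;\leq\; \mathbb{P}_{S \sim \uniformpanel}\bigl[W(\phi_f^{[n]}, \phi_f^S) \geq \mu + \tfrac{\epsilon}{2}\bigr] \;\leq\; \exp\!\bigl(-\epsilon^2 k / 16\bigr).
\]
Requiring the right-hand side to be at most $\delta/\ell$ forces $k \geq (16/\epsilon^2)\log(\ell/\delta)$. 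Combined with the $\Omega(1/\epsilon^2)$ requirement from the mean bound and absorbed into the $O(\cdot)$, this yields the claimed panel complexity, and the union bound over $j \in [\ell]$ finishes the argument.

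The main obstacle, and the only non-boilerplate ingredient, is establishing the $O(1/\sqrt{k})$ bound on the expected Wasserstein distance between a distribution on $[0,1]$ and its empirical measure; everything else is a clean union bound plus the two lemmas already set up. In higher-dimensional feature spaces one would need to pay an exponential dependence on the dimension (by the known rates for empirical Wasserstein convergence), but in one dimension the integral-of-CDF representation keeps the rate sharp and dimension-free, which is exactly what makes this theorem a convenient building block for the participatory-budgeting and facility-location results.
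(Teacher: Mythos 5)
Your proof is correct and follows essentially the same route as the paper: bound the expected Wasserstein distance under i.i.d.\ sampling, transfer it to sampling without replacement via \Cref{lem:W1expectation}, concentrate via \Cref{lem:W1concentration} with $t=\epsilon/2$, and finish with a union bound over the $\ell$ features. The only difference is cosmetic: where you derive the $O(1/\sqrt{k})$ empirical-Wasserstein rate on $[0,1]$ directly from the CDF-integral representation, the paper simply cites it as a known result (Proposition 2.1 of Chewi et al.).
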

\begin{proof}
    For any feature $f : [n] \to [0,1]$, we denote:
    \begin{align*}
       \mu_f =  \mathbb{E}_{S \sim \uniformpanel}\left[W(\phi_f^{[n]}, \phi_f^S)\right] \quad \text{and}\quad  \mu'_f =  \mathbb{E}_{S \sim \mathcal{R}_{k,n}}\left[W(\phi_f^{[n]}, \phi_f^S)\right].
    \end{align*}
   For any feature $f : [n] \to [0,1]$, by \cite[Proposition 2.1]{chewi2024statisticaloptimaltransport}, we have $\mu'_f \leq \epsilon/2$ for $k = O(1/\epsilon^2)$.
By \Cref{lem:W1expectation}, it follows that $\mu_f \leq \epsilon/2$.
Therefore, by \Cref{lem:W1concentration}, 
we conclude that for any $j \in [\ell]$, it holds that:
\begin{align*}
        \mathbb{P}_{S \sim \uniformpanel} \left[ \text{$S$ is not $\epsilon$-representative of $f_j$} \right] \leq \mathbb{P}_{S \sim \uniformpanel} \left[ W(\phi_{f_j}^{[n]}, \phi_{f_j}^S) \geq \mu_{f_j} + \epsilon/2 \right] \leq \exp\left( \frac{-(\epsilon/2)^2 k}{4} \right) \leq \delta/\ell.
    \end{align*}
    The result follows by the union bound.
\end{proof}

Finally, our notion of representativeness ensures that the (true) mean of a feature $f$ over the population $[n]$ is close to its (empirical) mean over any $\epsilon$-representative panel $S$. This simple fact will be useful in proving results in the following sections. 

\begin{restatable}[Empirical Mean of Representative Panels]{lemma}{WassersteinImpliesAvg}\label{lem:wasserstein_average}
 Let $f:[n] \rightarrow [0,1]$ be a feature and $S \subseteq [n]$ be an $\epsilon$-representative panel of $f$. Then, it holds that $|(1/n) \cdot \sum_{i \in [n]} f(i) - (1/|S|) \cdot \sum_{i \in S} f(i)| \leq \epsilon$.  
\end{restatable}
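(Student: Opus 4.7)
The plan is to unwind the definitions so that the averages appearing in the statement become expectations under the two distributions $\phi_f^{[n]}$ and $\phi_f^S$, and then argue that the Wasserstein distance controls the difference of expectations of any $1$-Lipschitz test function, in particular the identity function on $[0,1]$.

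First I would observe that by the very definition of $\phi_f^T$ for a subset $T \subseteq [n]$, for any function $g : [0,1] \to \mathbb{R}$ we have
\begin{equation*}
    \mathbb{E}_{x \sim \phi_f^T}[g(x)] \;=\; \frac{1}{|T|} \sum_{i \in T} g(f(i)),
\end{equation*}
so in particular taking $g$ to be the identity turns the two averages in the claim into $\mathbb{E}_{x \sim \phi_f^{[n]}}[x]$ and $\mathbb{E}_{y \sim \phi_f^{S}}[y]$ respectively. Thus the claim reduces to showing
\begin{equation*}
    \bigl| \mathbb{E}_{x \sim \phi_f^{[n]}}[x] - \mathbb{E}_{y \sim \phi_f^{S}}[y] \bigr| \;\leq\; W(\phi_f^{[n]}, \phi_f^{S}).
\end{equation*}

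Next I would prove this inequality by coupling. Fix any coupling $\gamma \in \Gamma(\phi_f^{[n]}, \phi_f^{S})$. Its marginals are $\phi_f^{[n]}$ and $\phi_f^{S}$, so by linearity of expectation
\begin{equation*}
    \mathbb{E}_{x \sim \phi_f^{[n]}}[x] - \mathbb{E}_{y \sim \phi_f^{S}}[y] \;=\; \mathbb{E}_{(x,y) \sim \gamma}[x - y].
\end{equation*}
Applying Jensen's inequality (equivalently, $|\mathbb{E}[Z]| \leq \mathbb{E}[|Z|]$) and noting that on $[0,1]$ the Euclidean distance is exactly $d(x,y) = |x-y|$, I obtain
\begin{equation*}
    \bigl| \mathbb{E}_{x \sim \phi_f^{[n]}}[x] - \mathbb{E}_{y \sim \phi_f^{S}}[y] \bigr| \;\leq\; \mathbb{E}_{(x,y) \sim \gamma}[|x-y|] \;=\; \mathbb{E}_{(x,y) \sim \gamma}[d(x,y)].
\end{equation*}
Taking the minimum over all couplings $\gamma \in \Gamma(\phi_f^{[n]}, \phi_f^{S})$ yields the desired bound by $W(\phi_f^{[n]}, \phi_f^{S})$, which is at most $\epsilon$ by the $\epsilon$-representativeness assumption.

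There is no real obstacle here; the only thing to be careful about is to make sure the argument uses the ambient metric on $[0,1]$ implicit in \Cref{def:representativeness}, so that $d(x,y) = |x-y|$ and the identity is a valid $1$-Lipschitz test function. An alternative, essentially one-line presentation would invoke Kantorovich--Rubinstein duality directly, since the identity on $[0,1]$ is $1$-Lipschitz, but the coupling-based proof above is self-contained and avoids importing any additional machinery.
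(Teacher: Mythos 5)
Your proof is correct, but it takes the primal (coupling) route where the paper goes through the dual. The paper invokes the Kantorovich--Rubinstein duality, writing $W(\phi_f^{[n]},\phi_f^S)$ as a supremum over $1$-Lipschitz test functions and then plugging in $g_1(y)=y$ and $g_2(y)=-y$ to get the two one-sided bounds on the difference of means --- precisely the ``one-line alternative'' you mention at the end. You instead work directly with an arbitrary coupling $\gamma \in \Gamma(\phi_f^{[n]},\phi_f^S)$: the marginal conditions give $\mathbb{E}_{x\sim\phi_f^{[n]}}[x]-\mathbb{E}_{y\sim\phi_f^S}[y]=\mathbb{E}_{(x,y)\sim\gamma}[x-y]$, and $|\mathbb{E}[x-y]|\le\mathbb{E}[|x-y|]=\mathbb{E}[d(x,y)]$ followed by minimizing over $\gamma$ yields the bound by the Wasserstein distance (the minimum is attained since both distributions have finite support, so this is unproblematic). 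The trade-off is exactly as you say: your argument is self-contained and uses only the primal definition of $W$ from \Cref{def:Wdistance}, while the paper's is shorter but imports the duality theorem as external machinery. Both correctly reduce the statement to the observation that the identity on $[0,1]$ is a $1$-Lipschitz function of the feature value, so there is no gap in either.
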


We note that one could alternatively consider a weaker notion of representativeness than \Cref{def:representativeness}, by defining a panel to be $\epsilon$-representative of a real-valued feature $f: [n] \to [0,1]$ if the assertion of \Cref{lem:wasserstein_average} holds. This weaker notion would suffice to establish some of our panel complexity guarantees in the following sections. Moreover, under this weaker definition, the panel complexity guarantee of \Cref{thm:many_features} can be proved directly using Hoeffding's inequality for sampling without replacement.

However, our more general definition is significantly more powerful, as it enables us to extend the panel complexity guarantees for facility location to the problem of locating multiple facilities, as discussed in \Cref{sec:multifacilities}. For this result, the weaker notion is insufficient.
This suggests that our more general framework may be an essential tool for the analysis of settings beyond those considered in this paper. 
Furthermore, the stronger notion---defined via the Wasserstein distance---more naturally captures representativeness and we believe it to be of independent interest, regardless of its role in proving our panel complexity results.

\subsection{Camouflaged Populations}\label{sec:camouflaged}

In this section, we develop the necessary tools to establish our lower bounds on panel complexity.
We begin by defining {camouflaged populations}---populations that are nearly uniform with respect to a certain feature.
As we show below, the difficulty of distinguishing camouflaged populations with a small panel makes them a powerful tool for lower bounding panel complexity.

\begin{definition}[Camouflaged Populations]\label{def:almost_uniform}
    Fix integers $h,w,r \geq 2$.
    For each vector $z \in \{-1,+1\}^{h}$, we define a population of size $n = 2hwr$ and 
    a feature $f(z,h,w,r) : [n] \to \{1, \ldots, 2h\}$ as follows.
    We first choose any partition $(A_1, \ldots, A_{2h})$ of $[n]$ such that for each $j \in [h]$ it holds that $|A_{2j}| = r \cdot (w + z_j)$ and $|A_{2j-1}| = r \cdot (w-z_j)$.
    Then, we define $f_i(z, h, w, r) = j$ for all $i \in A_j$ and $j \in [2h]$.
\end{definition}

   For a panel $S = \{i_1, \ldots, i_k\}$, we denote $f_S(z, h, w, r) = (f_{i_1}(z, h, w, r), \ldots, f_{i_k}(z, h, w, r))$.
   Note that a $(1 + z_j/w)/(2h)$ fraction of the population is assigned value $2j$, while a $(1 - z_j/w)/(2h)$ fraction is assigned value $2j - 1$, totaling a $1/h$ fraction with values in $\{2j - 1, 2j\}$ for each $j \in [h]$.

The following lemma shows that a panel of size $k$ smaller than $\Omega(h \cdot w^2)$ drawn from a camouflaged population \emph{cannot} recover $(3/4) \cdot h$ entries of $z$ with probability $6/7$. The proof follows from a reduction to a standard sample complexity lower bound for sampling with replacement, as shown in \cite[Lemma 5.10]{duetting2025pseudodimensioncontracts}.

\begin{restatable}[Panel Complexity Lower Bound]{lemma}{panelcamouflaged}\label{lem:sample_lower_bound}
Fix integers $h,w \geq 2$.
There exists an integer $r \geq 1$ such that, if there exists some function 
$g : \{1,2,\ldots,{2h}\}^k \to \{-1,+1\}^{h}$ for some $k \geq 1$ satisfying:
\[
\mathbb{P}_{S \sim \uniformpanel} \left[ \| g(f_S(z,h,w,r)) - z \|_1 \leq h/4 \right] \geq 6/7
\quad \text{for all } z \in \{-1,+1\}^{h},
\]
then it must be that $k = \Omega(h \cdot w^2)$.
\end{restatable}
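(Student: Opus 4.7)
The plan is to reduce to the classical with-replacement i.i.d. sampling setting and then appeal to the cited lower bound \cite[Lemma 5.10]{duetting2025pseudodimensioncontracts}. Define the categorical distribution $p_z$ on $[2h]$ by $p_z(2j) = (1 + z_j/w)/(2h)$ and $p_z(2j-1) = (1 - z_j/w)/(2h)$ for $j \in [h]$. By construction of $f(z,h,w,r)$, a single uniform draw from $[n]$ whose feature value is read off gives exactly a sample from $p_z$; therefore sampling $k$ agents \emph{with} replacement and reading their features yields $p_z^{\otimes k}$. The standard sample-complexity lower bound for estimating the $h$ biases $z_j \in \{\pm 1\}$ (each distinguishable at TV-gap $\Theta(1/w)$, each carrying probability mass $1/h$) asserts that any decoder recovering $z$ up to $\ell_1$-distance $h/4$ with probability at least $2/3$ under $p_z^{\otimes k}$ must satisfy $k = \Omega(h w^2)$; this is exactly the conclusion of the cited lemma.

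To bridge to the without-replacement setting, I would use the standard coupling between a uniform draw of $k$ elements without replacement and $k$ i.i.d. uniform draws with replacement from $[n]$: the two are equal until the first collision, so the total variation distance between the laws of $f_S(z,h,w,r)$ under $S\sim\uniformpanel$ and under $\mathcal{R}_{k,n}$ is at most $\binom{k}{2}/n \leq k^2/(4hwr)$. I would then choose $r$ to be a sufficiently large polynomial in $h$ and $w$, for instance $r = C \cdot h \cdot w^3$ for a large enough constant $C$. For any $k$ below the intended threshold $c \cdot h \cdot w^2$, this choice forces the TV distance to be at most, say, $1/14$. (Any $k$ above that threshold already satisfies the conclusion of the lemma, so we only need the TV bound in the relevant regime.)

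Combining the two ingredients: suppose, toward a contradiction, a function $g$ satisfies the hypothesis with $k < c \cdot h \cdot w^2$. By the TV bound, for every $z$ the event $\{\|g - z\|_1 \leq h/4\}$ has probability at least $6/7 - 1/14 = 11/14 > 2/3$ under $p_z^{\otimes k}$ as well. The cited lower bound then yields $k = \Omega(h w^2)$, contradicting the assumption once $c$ is chosen small enough. The main obstacle is the interplay between the quantifiers: $r$ must be fixed depending only on $h,w$, yet the TV bound must be uniform over the entire range of $k$ for which the conclusion could conceivably fail. This is resolved precisely because the failure regime is bounded above by $O(h w^2)$, which turns the required TV bound into a purely polynomial condition on $r$. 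A minor secondary point is matching the exact probability and $\ell_1$-radius in the cited lemma; if the constants differ, standard amplification or tightening the TV bound to a smaller constant than $1/14$ easily closes the gap.
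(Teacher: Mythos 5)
Your proposal is correct and follows essentially the same route as the paper: both reduce the without-replacement panel to i.i.d.\ with-replacement sampling by bounding the collision probability $\binom{k}{2}/n$, choose $r$ large to make this small, and then invoke the cited lower bound of D\"utting et al., with your TV-coupling phrasing ($6/7 - 1/14 = 11/14 \geq 3/4$) being just the coupling form of the paper's conditioning-on-no-repetition argument ($6/7 \cdot 7/8 = 3/4$). If anything, your explicit treatment of the quantifier on $r$ (fixing $r$ from $h,w$ alone, using that the only relevant regime is $k = O(h w^2)$) is slightly more careful than the paper's presentation, which selects $r$ with reference to $k$.
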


\section{Participatory Budgeting via Sortition}\label{sec:pb}

In this section, we study the participatory budgeting problem. We first define the model, then establish panel complexity guarantees related to efficiency in terms of social welfare, and finally analyze fairness in terms of the core. The proofs omitted in this section are deferred to \Cref{sec:missingfour}.

\subsection{Model of Participatory Budgeting}

In this section, we describe the (divisible) participatory budgeting model \cite[see, e.g.,][]{GoelKSA19,FainGM16}. 
An instance of the participatory budgeting problem is a tuple $\pbinstance$ where $m \geq 2$ denotes the number of projects, $B \in \mathbb{R}_{> 0}$ denotes the available budget, and $\cost_i : [0,1]^m \to [0,1]$ is the cost function of the $i$-th agent, mapping budget allocations to the agent's disutility. Our goal is to select a budget allocation $x = (x_1, \ldots, x_m) \in [0,1]^m$ such that $\sum_{i=1}^m x_i \leq B$. 
We denote by $\budgets = \{x \in [0,1]^m \,:\, \sum_{j=1}^m x_j \leq B\}$ the space of all feasible budget allocations.
We make two important assumptions on the agent's cost functions:
    \begin{enumerate}[noitemsep, topsep=0pt, label=\textit{(\roman*)}]
    \item \emph{Monotonicity}: $\cost_i(x) \geq \cost_i(y)$ for any $x, y\in [0,1]^m$ such that $x_j \leq y_j$ for every $j \in [m]$, and
    \item {\em Lipschitz continuity}: $|\cost_i(x) - \cost_i(y)| \leq \|x-y\|_1$ where $\|x-y\|_1 = \sum_{j=1}^m |x_j-y_j|$.
\end{enumerate}

{Given a panel of size $k$ with Lipschitz cost functions, a 
\emph{panel decision function}
$\overline{x}$ selects a feasible budget allocation, i.e., $\overline{x} : \mathcal{L}^k \to \budgets$  where $\mathcal{L} = \mathrm{Lip}([0,1]^m, [0,1])$ denotes the space of all Lipschitz functions from $[0,1]^m$ to $[0,1]$.
For simplicity of notation, we will denote the output of the panel decision function $\overline{x}$, given the cost functions of a panel of voters $S$, by $\overline{x}(S):= \overline{x}((\cost_i)_{i\in S})$.}

\subsection{Welfare Maximization in Participatory Budgeting}

In this section, we establish fundamental limits on the efficiency of participatory budgeting when decisions are made by a randomly selected panel rather than the full population.

We first introduce the following notation:
\begin{definition}[Social Cost for Participatory Budgeting]
We define:
    \begin{enumerate}[noitemsep, topsep=0pt, label=(\roman*)]
    \item the social cost of $x \in [0,1]^m$ is denoted by $\social(x) = (1/n) \cdot \sum_{i=1}^n \cost_i(x)$,
    \item the optimal social cost is denoted by $\socialopt = \min_{x \in \budgets} \social(x)$,
    \item the panel cost of $x \in [0,1]^m$ is denoted by $\panelcost(x,S) = (1/k) \cdot \sum_{i \in S} \cost_i(x)$, and
    \item the optimal panel cost is denoted by $\panelopt(S) = \min_{x \in \budgets} \panelcost(x,S)$.
\end{enumerate}
\end{definition}

The following theorem provides an upper bound on the panel size required to ensure a near-optimal budget allocation. 
Specifically, it shows that if the panel size is sufficiently large, then selecting the best allocation based on panel preferences guarantees that the expected social cost is close to the optimal population-wide allocation.

\begin{theorem}[Panel Complexity Upper Bound]\label{thm:pbeffub} \;
Fix any instance $\pbinstance$ of the participatory budgeting problem.
    Let the panel size be:
    \begin{align*}
        k = O\left( \left( \frac{1}{\epsilon}\right)^2 \left( m \log \left( \frac{1}{\epsilon} \right) \right) \right).
    \end{align*}
    Suppose that panel $S \subseteq [n]$ of size $k$ selects a budget allocation $\overline{x}(S)$ satisfying:
    \begin{align*}
        \panelcost(\overline{x}(S), S) \leq   \rho  \cdot\panelopt(S) + \tau.
    \end{align*}
    for some $\rho \geq 1$ and $\tau \geq 0$, where $\rho$ is a constant. Then it holds that:
    \begin{align*}
        \mathbb{E}_{S \sim \uniformpanel} \left[ \social(\overline{x}(S)) \right] \leq \rho \cdot \socialopt + \tau + \epsilon.
    \end{align*}
\end{theorem}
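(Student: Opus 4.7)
The plan is to combine the representativeness framework from \Cref{sec:representative} with a discretization of the budget space $\budgets$. For any fixed allocation $y \in \budgets$, the feature $f_y(i) := \cost_i(y)$ takes values in $[0,1]$, so $\eta$-representativeness of $S$ with respect to $f_y$ implies $|\social(y) - \panelcost(y, S)| \leq \eta$ by \Cref{lem:wasserstein_average}. Since $\overline{x}(S)$ depends on $S$ and can be an arbitrary point in $\budgets$, I would lift this pointwise guarantee to a uniform-over-$\budgets$ guarantee in two steps: first establish it on a finite $\ell_1$-$\eta$-net $N$ of $\budgets$ via \Cref{thm:many_features}, then bridge to all of $\budgets$ using the Lipschitz continuity of the cost functions.

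\textbf{Execution.} Fix $\eta = \Theta(\epsilon)$ and construct an $\ell_1$-$\eta$-net $N \subseteq \budgets$ of size $|N| \leq (c/\eta)^m$ via an appropriately spaced grid on $[0,1]^m$, so $\log|N| = O(m \log(1/\eta))$. Apply \Cref{thm:many_features} to the family $\{f_y\}_{y \in N}$ with representativeness parameter $\eta$ and failure probability $\delta' = \epsilon/2$; this requires $k = O\!\left((1/\eta)^2 (m \log(1/\eta) + \log(1/\delta'))\right) = O\!\left((1/\epsilon)^2 \cdot m \log(1/\epsilon)\right)$. On the resulting event $E$ of probability at least $1 - \epsilon/2$, \Cref{lem:wasserstein_average} gives $|\social(y) - \panelcost(y, S)| \leq \eta$ for every $y \in N$. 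For any $x \in \budgets$ and a nearest net point $y \in N$, the Lipschitz property yields $|\social(x) - \social(y)| \leq \eta$ and $|\panelcost(x, S) - \panelcost(y, S)| \leq \eta$, so the triangle inequality delivers the uniform bound $|\social(x) - \panelcost(x, S)| \leq 3\eta$ on all of $\budgets$.

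\textbf{Conclusion and main obstacle.} On event $E$, letting $x^\ast \in \argmin_{x \in \budgets} \social(x)$ and chaining the uniform bound with the deliberation guarantee,
\begin{align*}
\social(\overline{x}(S)) \leq \panelcost(\overline{x}(S), S) + 3\eta \leq \rho \cdot \panelopt(S) + \tau + 3\eta \leq \rho \cdot \panelcost(x^\ast, S) + \tau + 3\eta \leq \rho \cdot \socialopt + \tau + 3(\rho + 1)\eta.
\end{align*}
On $E^c$, $\social(\overline{x}(S)) \leq 1$ since costs lie in $[0,1]$. Taking expectations and setting $\eta = \epsilon / (6(\rho+1))$,
\begin{align*}
\mathbb{E}_{S \sim \uniformpanel}[\social(\overline{x}(S))] \leq \rho \cdot \socialopt + \tau + 3(\rho+1)\eta + \epsilon/2 = \rho \cdot \socialopt + \tau + \epsilon,
\end{align*}
with the panel size remaining $O((1/\epsilon)^2 \cdot m \log(1/\epsilon))$ since $\rho$ is a constant. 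The main subtlety I anticipate lies in controlling the net size: a naive grid-based $\ell_1$-$\eta$-net has size $(O(m)/\eta)^m$, which contributes an extra $\log m$ in the exponent. To match the stated complexity exactly, one can either treat $m$ as polynomially bounded in $1/\epsilon$ (absorbing $\log m$ into $\log(1/\epsilon)$) or use a more efficient covering that exploits the budget constraint $\sum_j x_j \leq B$; once the net is in hand, the rest of the argument is a fairly direct application of the representativeness framework, with Lipschitz continuity as the key bridge from the finite net to the continuous space.
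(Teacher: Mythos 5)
Your proposal is correct and follows essentially the same route as the paper's proof: an $\ell_1$-covering of $\budgets$, the multi-feature representativeness guarantee (\Cref{thm:many_features} together with \Cref{lem:wasserstein_average}), Lipschitz bridging from net points to arbitrary allocations, and bounding the low-probability failure event by the trivial cost bound of $1$. The net-size subtlety you flag is simply absorbed in the paper by quoting a covering of size $O((1/\epsilon)^m)$ (\Cref{rem:euclidean_covering}), i.e., the dimension-dependent constant is swallowed in the same way you propose.
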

\begin{proof}
   There exists an $\epsilon/(12\rho)$-covering $(x_1, \ldots, x_w)$ of $\budgets$
   with size $w = O((1/\epsilon)^m)$; see \Cref{def:covering} for a formal definition of a covering of a metric space. We define $w$ features $f_j: N \to [0,1]$ for $j \in \{1, \ldots, w\}$ by setting $f_j(i) = \cost_i(x_j)$. By \Cref{thm:many_features}, for $k = O((1/\epsilon)^2 \cdot (\log w + \log(1/\epsilon)) = O((1/\epsilon)^2 \cdot (m \log (1/\epsilon)))$, with probability at least $1 - \epsilon/2$, the panel $S \sim \uniformpanel$ is $(\epsilon/12)$-representative for all features $f_1, \ldots, f_w$. Fix any $(\epsilon/12)$-representative panel $S$.

   We first claim that for any budget allocation $x \in [0,1]^m$, it holds that:
    \begin{align} \label{claim}
        \big|\social(x) -   \panelcost(x,S)  \big| \leq \epsilon/(4\rho).
    \end{align}
   Indeed, by the definition of $\epsilon$-covering, there exists some $1 \leq j \leq w$ such that $\|x - x_j\|_1 \leq \epsilon$. Thus:
    \begin{align*}
       &  \big| (1/n) \cdot \sum_{i=1}^n \cost_i(x) - (1/|S|) \cdot \sum_{i \in S} \cost_i(x)\big| \\
        &\leq\big| (1/n) \cdot \sum_{i=1}^n \cost_i(x_j) - (1/|S|) \cdot \sum_{i \in S} \cost_i(x_j)\big|  + \epsilon/(6\rho) && (\text{since $\cost_i$ is Lipschitz}) \\
        &= \big| (1/n) \cdot \sum_{i=1}^n f_j(i)  - (1/|S|) \cdot \sum_{i \in S} f_j(i)\big|+ \epsilon/(6\rho) && (\text{by definition of $f_j(i)$}) \\
        &\leq \epsilon/(4\rho) && (\text{since $S$ is $(\epsilon/12$)-representative}) 
    \end{align*}
    Now, letting $x^\star \in \argmin_{x \in \budgets} \social(x)$, we observe:
    \begin{align*}
       \social(\overline{x}(S)) &\leq \panelcost(\overline{x}(S), S) + \epsilon/(4\rho) && (\text{by \Cref{claim}}) \\
        &\leq \rho \cdot \panelcost(x^\star, S)  + \tau + \epsilon/(4\rho) && (\text{by the assumption on $\overline{x}(S)$}) \\
        &\leq \rho \cdot \socialopt  + \tau + \epsilon/4 + \epsilon/(4\rho) && (\text{by \Cref{claim}})  \\
        &\leq \rho \cdot \socialopt + \tau+ \epsilon/2
    \end{align*}
Finally, we conclude:
    \begin{align*}
        &\mathbb{E}_{S \sim \uniformpanel} \left[ \social(\overline{x}(S)) \cdot \mathbbm{1} \left[ \text{$S$ is $(\epsilon/12)$-representative} \right] \right] \leq \rho \cdot \socialopt + \epsilon/2 + \tau, \\
        &\mathbb{E}_{S \sim \uniformpanel} \left[ \social(\overline{x}(S))  \cdot \mathbbm{1}\left[ \text{$S$ is not $(\epsilon/12)$-representative} \right] \right] \leq \epsilon/2 \\
        \Longrightarrow \quad &\mathbb{E}_{S \sim \uniformpanel} \left[ \social(\overline{x}(S)) \right] \leq \rho \cdot \socialopt + \epsilon + \tau
    \end{align*}
which completes the proof.
\end{proof}

Now, we establish an almost matching lower bound for the theorem above.

\begin{theorem}[Panel Complexity Lower Bound]\label{thm:pb_lower_bound}
Fix any number of projects $m \geq 2$, error $0 < \epsilon < 1/64$, and budget $B = \lfloor m/2 \rfloor$. 
Suppose there exists a panel decision function $\widetilde{x} : \mathcal{L}^k \to \mathcal{X}_B$ for some $k \geq 1$ such that,
for any participatory budgeting instance $\pbinstance$, the following holds:
    \begin{align*}
        \mathbb{E}_{S \sim \uniformpanel} \left[ \social (\widetilde{x}(S)) \right] \leq \socialopt + \epsilon.
    \end{align*}
Then it must be that $k = \Omega(m \cdot (1/\epsilon)^2)$.
\end{theorem}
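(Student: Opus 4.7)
The strategy is to reduce to the camouflaged population lower bound of \Cref{lem:sample_lower_bound}, with $h = \lfloor m/2 \rfloor$ and $w = \Theta(1/\epsilon)$. For each $z \in \{-1,+1\}^h$, I construct a participatory budgeting instance by taking the camouflaged population $f(z,h,w,r)$ from \Cref{def:almost_uniform} over $n = 2hwr$ agents, and defining the cost function of each agent $i \in A_t$ to be $\cost_i(x) = 1 - x_t$. These costs are monotone, $1$-Lipschitz in $\|\cdot\|_1$, and take values in $[0,1]$. With budget $B = h = \lfloor m/2 \rfloor$, there is just enough budget to fully fund exactly one project in each of the $h$ pairs $\{2j-1, 2j\}$.

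I then compute the social cost explicitly. Writing $|A_{2j-1}| = r(w - z_j)$ and $|A_{2j}| = r(w + z_j)$ and expanding, $\social(x) = 1 - V(x,z)/n$ where
\[
V(x,z) \;=\; \sum_{j=1}^{h} \bigl[r(w - z_j)\,x_{2j-1} + r(w + z_j)\,x_{2j}\bigr].
\]
A direct optimization over $\budgets$ shows the maximum is $V^\star = hr(w+1)$, attained by fully funding project $2j$ if $z_j=1$ and project $2j-1$ if $z_j=-1$; consequently $\socialopt = 1 - (w+1)/(2w)$. The next step is the key decoding lemma: for the natural decoder $\hat{z}_j(x) := \operatorname{sign}(x_{2j} - x_{2j-1})$, the identity
\[
V^\star - V(x,z) \;=\; rw\Bigl(h - \sum_{j}(x_{2j-1}+x_{2j})\Bigr) \;+\; r\sum_{j}\bigl(1 - z_j(x_{2j}-x_{2j-1})\bigr),
\]
combined with the observation that $1 - z_j(x_{2j}-x_{2j-1}) \geq 1$ whenever $\hat{z}_j(x) \neq z_j$, yields $V^\star - V(x,z) \geq r \cdot \#\{j : \hat{z}_j(x) \neq z_j\}$ for every $x \in \budgets$.

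With this in hand, the hypothesis $\mathbb{E}_S[\social(\widetilde{x}(S))] \leq \socialopt + \epsilon$ translates to $\mathbb{E}_S[V^\star - V(\widetilde{x}(S), z)] \leq 2hwr\epsilon$, hence $\mathbb{E}_S[\#\text{errors}] \leq 2hw\epsilon$. By Markov's inequality, $\mathbb{P}_S[\#\text{errors} > h/8] \leq 16w\epsilon$, which is at most $1/7$ once $w = \lfloor c/\epsilon\rfloor$ for a small enough absolute constant $c$ (the condition $\epsilon < 1/64$ ensures $w \geq 2$, as required by \Cref{lem:sample_lower_bound}). Since each coordinate error contributes $2$ to $\|\cdot\|_1$, this gives $\mathbb{P}_S[\|g(f_S(z,h,w,r)) - z\|_1 \leq h/4] \geq 6/7$, where $g$ is the composition of $\widetilde{x}$ (applied to the cost functions reconstructed from the feature values) with the coordinate-wise decoder $\hat{z}$. \Cref{lem:sample_lower_bound} then forces $k = \Omega(h \cdot w^2) = \Omega(m \cdot (1/\epsilon)^2)$, completing the proof.

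The main obstacle is engineering the cost functions and the decoder so that near-optimality of the social cost cleanly forces recovery of most coordinates of $z$; the decoding identity above is the crux, because it simultaneously handles allocations that underspend the budget and allocations that split mass within a pair. Once that identity is in place, the remainder is routine bookkeeping on the choices of $h$ and $w$ and an application of Markov's inequality.
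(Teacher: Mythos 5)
Your proposal is correct and takes essentially the same route as the paper's proof: the same camouflaged-population instances with costs $1-x_t$, the same decoder $\operatorname{sign}(x_{2j}-x_{2j-1})$, and the same reduction to \Cref{lem:sample_lower_bound}, differing only in intermediate bookkeeping (your value-gap identity plus Markov on the number of decoding errors, versus the paper's WLOG full-budget normalization followed by an averaging argument on the correlation term $\frac{1}{h}\sum_j z_j(\widetilde{x}_{2j}-\widetilde{x}_{2j-1})$). The only caveat—present at a comparable level in the paper's own constant choices—is that \Cref{lem:sample_lower_bound} needs $w \geq 2$, and your choice $w=\lfloor c/\epsilon\rfloor$ with $c\leq 1/112$ only ensures this for $\epsilon$ somewhat below $1/64$, so the upper range of $\epsilon$ requires a slight adjustment of constants (which does not affect the asymptotic conclusion).
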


\begin{proof}
   Our goal is to construct a function $g$ that satisfies the assumption of \Cref{lem:sample_lower_bound}, establishing a necessary lower bound on $k$.
    Let $h = \lfloor m / 2 \rfloor$ and $w = \lfloor (1/64) \cdot (1 / \epsilon) \rfloor$. 
    Note that $\epsilon \leq (1/64) \cdot (1/w)$
    {and that $1/w = \Omega(1/\epsilon)$}.
    Let $r > 1$ be {an integer parameter} large enough to satisfy the conditions of \Cref{lem:sample_lower_bound}. 
    
    For each $z \in \{-1,+1\}^{h}$, consider the camouflaged population of size $n = 2hwr$ with a feature $f(z, h, w, r)$ as defined in \Cref{def:almost_uniform}.
    We will now define an instance $\mathcal{I}(z) = \langle m, B, \cost_1^{(z)}, \ldots, \cost_n^{(z)} \rangle$ for every vector $z \in \{-1,+1\}^{h}$ by letting $\cost_i^{(z)}(x) = 1-x_j$ for each agent $i \in [n]$ such that $j = f_i(z,h,w,r)$.
    Note that exactly a $(1+z_j/w)/(2h)$-fraction of the population have cost function $1-x_{2j}$, while exactly a $(1-z_j/w)/(2h)$-fraction have cost function $1-x_{2j-1}$.
        Moreover, if $m$ is odd, then none of those cost functions depend on the $x_m$, i.e., the $m$-th project has no value to anyone.

  Next, we compare the expected social cost of any budget allocation with $\socialopt$.  
For every budget allocation $x \in [0,1]^m$ satisfying $\sum_{j=1}^{2\ell} x_j = B = h$, the social cost of $x$ in instance $\mathcal{I}(z)$ is given by:
    \begin{align*}
        \frac{1}{n} \sum_{i=1}^n \cost_i^{(z)}(x)  
        &= \sum_{j=1}^{h} \frac{(1+z_j/w) \cdot (1-x_{2j}) + (1-z_j/w) \cdot (1-x_{2j-1})}{2h} \\
        &= \sum_{j=1}^{h} \frac{(2 - x_{2j} - x_{2j-1}) - (z_j/w) \cdot (x_{2j} - x_{2j-1})}{2h} \\
        &= \sum_{j=1}^{h} \frac{2 - x_{2j} - x_{2j-1}}{2h} - \sum_{j=1}^{h} \frac{(z_j/w) \cdot (x_{2j} - x_{2j-1})}{2h} \\
        &= \frac{1}{2} - \sum_{j=1}^{h} \frac{(z_j/w) \cdot (x_{2j} - x_{2j-1})}{2h}.
    \end{align*}  

The optimal budget allocation for instance $\mathcal{I}(z)$ is to set $x_{2j-1} = 1$ and $x_{2j} = 0$ if $z_j = -1$, and $x_{2j-1} = 0$ and $x_{2j} = 1$ if $z_j = 1$. This allocation is budget-feasible since $B = h$. Thus,  
\begin{align*}
    \socialopt(\mathcal{I}(z)) = \frac{1}{n} \sum_{i=1}^n \cost_i^{(z)}(x) 
    = \frac{1}{2} - \sum_{j=1}^{\ell} \frac{1/w}{2h} 
    = \frac{1}{2} - \frac{1}{2w},
\end{align*}  
for every $z \in \{-1,+1\}^{h}$.  Henceforth, we will write $\socialopt = \socialopt(\mathcal{I}(z))$.

Since cost functions in the instances constructed above are monotone and they do not depend on $x_m$ if $m$ is odd, we may assume without loss of generality that  
$\sum_{j=1}^{2h} \widetilde{x}(c_1, \ldots, c_k) = B = h$ for every $c_1, \ldots, c_k \in \mathcal{L}$.  
By our assumption on $\widetilde{x} : \mathcal{L}^k \to \budgets$, for every $z \in \{-1,+1\}^{h}$, we have:  
\begin{align*}
    \mathbb{E}_{S \sim \uniformpanel} \left[ \frac{1}{2} - \sum_{j=1}^{\ell} \frac{(z_j/w) \cdot (\widetilde{x}_{2j}(S) - \widetilde{x}_{2j-1}(S))}{2h} \right] 
    &= \mathbb{E}_{S \sim \uniformpanel} \left[ \social (\widetilde{x}(S)) \right] \\
    &\leq \socialopt+ \epsilon \\
    &\leq \frac{1}{2} - \frac{1}{2 w} + \frac{1}{64 w} \\
    &\leq \frac{1}{2} - \frac{31}{64 w},
\end{align*}
since $\epsilon \leq (1/64) \cdot (1/w)$. Multiplying both sides by $2w$ and rearranging terms gives:  
\begin{align}
    \mathbb{E}_{S \sim \uniformpanel}\left[ \frac{1}{h} \sum_{j=1}^{h} z_j \cdot (\widetilde{x}_{2j}(S) - \widetilde{x}_{2j-1}(S)) \right] \geq \frac{31}{32}. \label{eq:previousbound}
\end{align}
From this, we argue that:
\begin{align}\label{eq:PCLB-social-opt-bound}
    \mathbb{P}_{S \sim \uniformpanel}\left[ \frac{1}{h} \sum_{j=1}^{h} \mathbbm{1}\left[ z_j \cdot (\widetilde{x}_{2j}(S) - \widetilde{x}_{2j-1}(S)) > 0 \right] \geq \frac{3}{4} \right] \geq \frac{7}{8} \quad \text{for all $z \in \{-1,+1\}^{h}$},
\end{align}
since if this probability were smaller than $7/8$, then we would get: 
\begin{align*}
     \mathbb{E}_{S \sim \uniformpanel}\left[ \frac{1}{h} \sum_{j=1}^{h} z_j \cdot (\widetilde{x}_{2j}(S) - \widetilde{x}_{2j-1}(S)) \right] 
     < \frac{7}{8} \cdot 1 + \frac{1}{8} \cdot \frac{3}{4} = \frac{31}{32},
\end{align*}
which would contradict the previous bound in Statement~(\ref{eq:previousbound}).

    Define $g : \mathcal{L}^k \to \{-1,+1\}^{h}$ as follows: 
\begin{align*}
    g_j(c) =
    \begin{cases}
        +1 & \text{if } \widetilde{x}_{2j}(c) - \widetilde{x}_{2j-1}(c) > 0, \\
        -1 & \text{otherwise}.
    \end{cases}
\end{align*}
Note that $g$ is independent of $z$.
We use $g(f_S(z,h,w,r))$ for $S \subseteq [n]$ to denote $g((\cost_i^{(z)})_{i \in S})$.

Using Statement (\ref{eq:PCLB-social-opt-bound}), we will argue that
$g(f_S(z,h,w,r))$ and $z$ agree on at least $(3/4) \cdot h$ indices with probability at least $7/8$,
which can be expressed as:
\[
\mathbb{P}_{S \sim \uniformpanel} \left[ \|g(f_S(z,h,w,r)) - z\|_1 \leq 1/4 \right] \geq 7/8 \quad \text{for all $z \in \{-1,+1\}^{h}$}.
\]
To see why this holds, note that:
\begin{align*}
    & z_j \cdot (\widetilde{x}_{2j}(S) - \widetilde{x}_{2j-1}(S)) > 0 \\
    \Longrightarrow \quad & (z_j = +1 \text{ and } \widetilde{x}_{2j}(S) - \widetilde{x}_{2j-1}(S) > 0) 
  \text{ or } (z_j = -1 \text{ and } \widetilde{x}_{2j}(S) - \widetilde{x}_{2j-1}(S) \leq 0) \\
    \Longleftrightarrow \quad & (z_j = +1 \text{ and }g_j(f_S(z,h,w,r)) = +1)   \text{ or } (z_j = -1 \text{ and } g_j(f_S(z,h,w,r)) = -1) \\
    \Longleftrightarrow \quad & g_j(f_S(z,h,w,r)) = z_j.
\end{align*}
Hence, the function $g$ satisfies the assumption of \Cref{lem:sample_lower_bound}, which concludes the proof.
\end{proof}

The final result shows that, even with an arbitrarily large panel, no panel decision function can guarantee a purely multiplicative approximation to the optimal allocation.

\begin{restatable}[Impossibility for Multiplicative Error Guarantee]{theorem}{pbimpossibilityefficiency}\label{thm:pb_impossibility}
Fix any number of projects $m \geq 2$, panel size $k \geq 1$, and budget $0 < B < m$. Then, for any panel decision function $\widetilde{x} : \mathcal{L}^k \to \mathcal{X}_B$,
 there exists a participatory budgeting instance $\pbinstance$  such that, for any $\rho > 1$, the following holds:
\begin{align*}
        \mathbb{E}_{S \sim \uniformpanel} \left[ \social(\widetilde{x}(S)) \right] > \rho \cdot \socialopt.
    \end{align*}
\end{restatable}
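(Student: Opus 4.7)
The plan is to exploit the fact that the panel decision function $\widetilde{x}$ must commit to a deterministic output on the all-zero-cost input, while $\socialopt$ can be driven exactly to $0$. Since $\rho \cdot 0 = 0$, it then suffices to plant a single minority agent whose most-preferred project is systematically under-funded by $\widetilde{x}$'s default output, so that with positive probability the random panel is entirely uninformed about their preference and still outputs that default.

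First I would evaluate $x^\star := \widetilde{x}(\mathbf{0}, \ldots, \mathbf{0})$, where $\mathbf{0} \in \mathcal{L}$ is the identically zero cost function (trivially monotone, $1$-Lipschitz, and $[0,1]$-valued). Because $x^\star \in \budgets$ satisfies $\sum_{j=1}^m x^\star_j \leq B$ and $m \geq 2$, an averaging argument yields a coordinate $j^\star \in [m]$ with $x^\star_{j^\star} \leq B/m$. A short case analysis---splitting on $B \leq 1$ (so $\min(1,B)=B$ and $B/m<B$ since $m\geq 2$) versus $B > 1$ (so $\min(1,B)=1$ and $B/m<1$ since $B<m$)---shows that $B/m < \min(1, B)$, which is what ultimately yields a strict inequality.

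Next I would take $n = k+1$ agents, assign the zero cost function to agents $1, \ldots, k$, and define the lone minority agent $i^\star := n$ by
\[
\cost_{i^\star}(x) \;=\; \max\bigl(\,0,\; \min(1, B) - x_{j^\star}\,\bigr).
\]
This function lies in $\mathcal{L}$: clipping preserves Lipschitz constants and it depends on a single coordinate, so it is $1$-Lipschitz in the $\ell_1$-norm; it is monotonically non-increasing in each coordinate; and it takes values in $[0, \min(1,B)] \subseteq [0,1]$. The allocation that sets $x_{j^\star} = \min(1, B)$ (and zero elsewhere) is feasible in $\budgets$ and makes $\cost_{i^\star} = 0$, so $\socialopt = 0$.

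Finally I would lower-bound the expected social cost. Whenever $i^\star \notin S$---which happens with probability $(n-k)/n = 1/(k+1) > 0$ under $\uniformpanel$---the panel receives $k$ copies of the zero cost function, so by definition $\widetilde{x}(S) = x^\star$, giving $\social(\widetilde{x}(S)) = \cost_{i^\star}(x^\star)/n \geq (\min(1,B) - B/m)/n > 0$. Combining, $\mathbb{E}_{S \sim \uniformpanel}[\social(\widetilde{x}(S))] > 0 = \rho \cdot \socialopt$ for every $\rho > 1$. The main subtlety I would watch for is constructing $\cost_{i^\star}$ that simultaneously (i) belongs to $\mathcal{L}$, (ii) permits $\socialopt = 0$, and (iii) makes $\cost_{i^\star}(x^\star) > 0$; the hybrid threshold $\min(1, B)$ is what unifies the regimes $B < 1$ and $B \geq 1$ under a single construction.
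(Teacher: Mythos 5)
Your proof is correct and follows essentially the same route as the paper's: both plant a single minority agent who is excluded from the panel with positive probability, arrange $\socialopt = 0$, and exploit the fact that the rule's fixed response to the uniform majority profile must underfund some project due to the budget constraint, so any multiplicative guarantee fails. Your construction is a streamlined variant---using the identically-zero cost function for the majority and a pigeonhole argument to find a coordinate with $x^\star_{j^\star} \leq B/m < \min(1,B)$---whereas the paper uses a nontrivial majority cost $\cost^{(0)}$ and a case analysis over which coordinate is underfunded (choosing between two planted instances), but the underlying idea is the same.
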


\subsection{Core Fairness in Participatory Budgeting}

We now turn to the question of the panel complexity of fairness in participatory budgeting.
In participatory budgeting, the \emph{core} is a group fairness property that enforces quite a strong requirement while also being guaranteed to exist under some natural conditions \citep{FainGM16}.
Informally, a budget allocation belongs to the core when no group of agents could, given their proportional share of the budget, select an alternative budget allocation under which each agent in the group incurs a strictly lower cost.
We define approximate core by introducing an additive error to the share of the budget given to a group of agents and both an additive and multiplicative error to the cost reduction requirement. 
\begin{definition}[Approximate Core Fairness for Participatory Budgeting]    \label{def:apx_core}
    Given a budget allocation $x \in \budgets$ and a subset of agents $S \subseteq [n]$, a subset $T \subseteq S$ is said to form an $(\eta, \tau, \rho)$-blocking coalition for some $\eta,\tau \geq 0$ and $\rho \geq 1$ if there exists an alternative budget allocation $x' \in \budgets$ such that:
    \begin{align*}
        \sum_{j=1}^m x'_j/B + \eta \leq |T|/|S| \quad\text{and}\quad\rho \cdot \cost_i(x') + \tau < \cost_i(x) \text{ for all }i \in T.
    \end{align*} 
A budget allocation $x \in [0,1]^m$ is in the $(\eta, \tau, \rho, S)$-core if no $(\eta, \tau, \rho)$-blocking coalition exists for $x$. 
The set of all budget allocations in the $(\eta,\tau,\rho, S)$-core for a panel $S$ is denoted by $\panelcore_S(\eta, \tau, \rho)$, while the set of all budget allocations in the $(\eta,\tau,\rho, [n])$-core for the entire population is denoted by $\socialcore(\eta,\tau,\rho)$.
\end{definition}

Hence, budget allocations in $\socialcore(0, 0, 1)$ are in the exact core.
Budget allocations in the exact core are known to exist under mild assumptions on the agents' cost functions \citep{Fole70a}. For instance, core existence is guaranteed under cost functions that have continuous derivatives and are strictly monotonic and strictly convex.
In any case, the main theorem of this section, which we state and prove next, states a panel complexity result for the approximate core that applies to all participatory budgeting settings in which the core can be approximated to a constant factor.

\begin{theorem}[Panel Complexity Upper Bound]  
\label{thm:core_ub}
Fix any instance $\pbinstance$ of the participatory budgeting problem.
    Let the panel size be:
    \begin{align*}
        k = \Omega\left( \left( \frac{1}{\epsilon}\right)^2 \left( m \log \left( \frac{1}{\epsilon} \right) + \log\left( \frac{1}{\delta}\right) \right) \right).
    \end{align*}
    Suppose that panel $S \subseteq [n]$ of size $k$ selects a budget allocation satisfying:
    \begin{align*}
        \overline{x}(S) \in \panelcore_S(\eta, \tau, \rho)
    \end{align*}
    for some $\eta, \tau \geq 0$ and $\rho \geq 1$, where $\rho$ is a constant.\footnote{If we were to extend the argument to non-constant $\rho$, the panel complexity would depend on $\rho$.} Then it holds that:
    \begin{align*}
        \mathbb{P}_{S \sim \uniformpanel}\left[ \overline{x}(S) \in \socialcore(\eta + \epsilon, \tau + \epsilon, \rho) \right] \geq 1-\delta.
    \end{align*}
\end{theorem}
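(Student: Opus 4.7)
The plan is to mimic the structure of the proof of \Cref{thm:pbeffub}, but since the core is defined through \emph{sizes} of blocking coalitions rather than averages, I would replace the single cover of $\budgets$ used there with (i) a \emph{double} cover, placing one covering point near the hypothetical blocking alternative and another near the panel's own chosen allocation, and (ii) \emph{binary} indicator features rather than real-valued cost features.

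Concretely, fix a covering scale $\epsilon'' = \Theta(\epsilon/\rho)$ and take an $\epsilon''$-cover $x_1, \ldots, x_w$ of $\budgets$ in $\ell_1$, with $w = O((1/\epsilon)^m)$. For each ordered pair $(j_1, j_2) \in [w]^2$, define the binary feature $g_{j_1,j_2}(i) = \mathbbm{1}[\cost_i(x_{j_2}) - \rho \cdot \cost_i(x_{j_1}) > \tau + \epsilon/2] \in \{0,1\} \subset [0,1]$. Applying \Cref{thm:many_features} to these $\ell = w^2$ features with error parameter $\epsilon$ and failure probability $\delta$ yields, under the panel size hypothesis of the theorem, that with probability at least $1-\delta$ the panel $S$ is $\epsilon$-representative of every $g_{j_1,j_2}$. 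For a binary feature, Wasserstein-$\epsilon$-representativeness collapses to $\bigl| |\{i \in S : g_{j_1,j_2}(i) = 1\}|/k - |\{i \in [n] : g_{j_1,j_2}(i) = 1\}|/n \bigr| \leq \epsilon$.

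Conditioning on this event and arguing by contrapositive, suppose $(T, x')$ is an $(\eta + \epsilon, \tau + \epsilon, \rho)$-blocking coalition for the population against the panel's choice $y := \overline{x}(S)$. Choose covering points $x_{j_1}, x_{j_2}$ within $\ell_1$-distance $\epsilon''$ of $x'$ and $y$, respectively. The choice $(1+\rho)\epsilon'' \leq \epsilon/2$, combined with two Lipschitz sandwich arguments, establishes the double inclusion $T \subseteq A := \{i \in [n] : g_{j_1,j_2}(i) = 1\} \subseteq \{i \in [n] : \rho \cdot \cost_i(x') + \tau < \cost_i(y)\}$. The first inclusion combined with the population-blocking hypothesis $|T|/n \geq \sum_j x'_j/B + \eta + \epsilon$ gives $|A|/n \geq \sum_j x'_j/B + \eta + \epsilon$; representativeness then yields $|A \cap S|/k \geq \sum_j x'_j/B + \eta$. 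Setting $T' := A \cap S$ and $x'' := x'$, the second inclusion guarantees that $(T', x'')$ is an $(\eta, \tau, \rho)$-blocking coalition within the panel, contradicting the hypothesis $\overline{x}(S) \in \panelcore_S(\eta, \tau, \rho)$.

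The main difficulty I anticipate is not conceptual but arithmetic: the covering radius $\epsilon''$, the binary threshold $\tau + \epsilon/2$, and the Wasserstein tolerance must all be calibrated so that the Lipschitz slippage of order $(1+\rho)\epsilon''$ fits simultaneously inside the $\epsilon$-gap between $\tau$ and $\tau + \epsilon$ and inside the $\epsilon$-gap between $\eta$ and $\eta + \epsilon$. The assumption that $\rho$ is a constant is precisely what preserves $\epsilon'' = \Theta(\epsilon)$, and hence keeps $\log w^2 = O(m \log(1/\epsilon))$, matching the stated panel complexity; for non-constant $\rho$, the panel complexity would degrade as indicated in the footnote of the theorem. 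A conceptual point worth flagging is that the pair $(T, x')$ may depend on $\overline{x}(S)$, and hence on $S$, which is exactly why we must cover both $x'$ and $y$ in advance and union-bound over all $w^2$ feature pairs rather than over a single cover as in \Cref{thm:pbeffub}.
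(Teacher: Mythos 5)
Your proposal is correct and follows essentially the same route as the paper's proof: an $\ell_1$-cover of $\budgets$, binary threshold features indexed by ordered pairs of cover points, \Cref{thm:many_features} to get simultaneous representativeness, and a contrapositive transfer of any population-level blocking coalition to a panel-level one via Lipschitz slack (your calibration $(1+\rho)\epsilon'' \leq \epsilon/2$ does work out). The only deviation is cosmetic and in fact slightly cleaner: you reuse the original blocking alternative $x'$ as the panel-level deviation, whereas the paper substitutes the nearby cover point and therefore needs the extra $\epsilon/(2B)$ covering precision to control the budget-sum slippage.
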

\begin{proof}
   Consider an ${\epsilon}/{\min(4\rho,2B)}$-covering of $\budgets$; see \Cref{def:covering} for a formal definition of a covering of a metric space. 
   We have some $x_1, \ldots, x_K \in \budgets$ where $K = (1/\epsilon)^{O(m)}$ so that for every $x \in \mathcal{X}_B$ there is some $a \in [K]$ such that $\|x-x_a\|_1 \leq \epsilon/(4\rho)$ and $\|x-x_a\|_1 \leq \epsilon/(2B)$. 
   For every $a,b \in [K]$ and $i \in [n]$, we define a feature:
   \begin{align*}
        f_{a,b}(i) = \mathbbm{1}\left [\cost_i(x_a) > \rho\cdot \cost_i(x_b) + \tau + \epsilon/(4\rho) \right].
    \end{align*}
    Since there are $K^2 = (1/\epsilon)^{O(m)}$ features, it is by \Cref{thm:many_features} that, if the panel size is $k = O((1/\epsilon)^2 \cdot (\log(K^2) + \log(1/\delta))) = O((1/\epsilon)^2 \cdot (m \log (1/\epsilon) + \log(1/\delta)))$, then panel $S \subseteq [n]$ is $(\epsilon/2)$-representative of the features defined above with probability $1-\delta$.
    Formally, this means that the following holds for each $a,b\in[K]$ with probability at least $1-\delta$:
    \begin{align*}
       \left| \frac{1}{|S|} \sum_{i \in S} f_{a,b}(i) - \frac{1}{n} \sum_{i=1}^n f_{a,b}(i)\right| \leq \epsilon/2 .
    \end{align*}

    Due to this observation, we need only show that, for any $(\epsilon/2)$-representative panel $S \subseteq [n]$, it holds that $\overline{x}(S) \in \panelcore_S(\eta, \tau, \rho) \implies \overline{x}(S) \in \socialcore(\eta+\epsilon, \tau+\epsilon, \rho).$
    Let us fix an $(\epsilon/2)$-representative panel $S \subseteq [n]$.
    Suppose that an alternative $x \in [0,1]^m$ is not in the $(\eta+\epsilon, \tau+\epsilon, \rho)$-core for the whole population, i.e., $x\not \in \socialcore(\eta+\epsilon, \tau+\epsilon, \rho)$. 
    By the definition of approximate core, we know that there is some coalition $T \subseteq [n]$ and an alternative $y \in [0,1]^m$ such that:
    \begin{align*}
        \sum_{j=1}^m y_j/B + \eta+\epsilon \leq |T|/n \quad \text{ and } \quad \rho \cdot \cost_i(y) + \tau + \epsilon < \cost_i(x) \text{ for all $i \in T$}.
    \end{align*}

    Let $a \in [K]$ be such that $\|x - x_a\|_1 \leq \epsilon/(4\rho)$ and let $b \in [K]$ be such that $\|y - x_b\|_1 \leq \epsilon/(4\rho)$. 
    Let $y' = x_b$.
    We will argue that $x\not \in \panelcore_S(\eta,\tau,\rho)$ by showing that there exists a $(\eta,\tau,\rho)$-blocking coalition for $y'$.
    That is, we will show $\sum_{j=1}^m y'_j + \eta \leq |T'|/|S|$ where $T' = \{i \in S : \rho\cdot \cost_i(y') + \tau < \cost_i(x)\}.$

    Note that for every $i \in T$:
    \begin{align*}
        \rho \cdot \cost_i(x_b) + \tau 
        &\leq \rho \left[ \cost_i(y) + \epsilon/(4\rho)\right] + \tau && (\text{since $\|y-x_b\| \leq \epsilon/(4\rho)$}) \\
        &< \cost_i(x) - \epsilon + \epsilon/4 && (\text{by the core violation})\\
        &\leq \cost_i(x_a)-\epsilon + \epsilon/4 + \epsilon/({4\rho}) && (\text{since $\|x-x_a\| \leq \epsilon/(4\rho)$}) \\
        &\leq \cost_i(x_a) - \epsilon/({4\rho})
    \end{align*}
    which means that $ f_{a,b}(i) = 1$ for all $i \in T$ and consequently $(1/n)\sum_{i=1}^n f_{a,b}(i) \geq |T|/n$.

    At last, we see that:
    \begin{align*}
        \frac{|T'|}{|S|} &= \frac{1}{|S|}\sum_{i \in S} \mathbbm{1}[\rho \cdot \cost_i(y') + \tau < \cost_i(x)] && (\text{by the definition of $T'$})\\
        &\geq \frac{1}{|S|}\sum_{i \in S} \mathbbm{1}[\rho \cdot \cost_i(y') + \tau + {\epsilon}/({4\rho}) < \cost_i(x_a)] && (\text{since $\|x-x_a\| \leq {\epsilon}/({4\rho})$}) \\
        &= \frac{1}{|S|}\sum_{i \in S} f_{a,b}(i) && (\text{by the definition of $f_{a,b}(i)$}) \\\
        &\geq \frac{1}{n}\sum_{i=1}^n f_{a,b}(i) - \epsilon/2 && (\text{since $S$ is $(\epsilon/2)$-representative}) \\ 
        &\geq \frac{|T|}{n} - \epsilon/2 && (\text{since $f_{a,b}(i) = 1$ for all $i \in T$}) \\
        &\geq \sum_{j=1}^m y_j/B + \eta + \epsilon - \epsilon/2 && (\text{by the core violation}) \\
        &\geq \sum_{j=1}^m y'_j/B + \eta + \epsilon -\epsilon/2 - {\epsilon}/2 && (\text{since $\|y-y'\| \leq {\epsilon}/(2B)$}) \\
        &\geq  \sum_{j=1}^m y'_j/B + \eta
    \end{align*}
    as needed.
\end{proof}

\section{Facility Location via Sortition}\label{sec:facility}

In this section, we analyze the facility location problem. First, we define the model. Next, we establish tail bounds for the panel distribution. We then analyze the panel complexity of social welfare maximization before extending our model to the case of multiple facilities. The proofs omitted in this section are deferred to \Cref{apx:proofsfacilities}.

\subsection{Model of Facility Location}

An instance of facility location is a tuple $\facilityinstance$, where $(\mathcal{X}, d)$ denotes a metric space $\mathcal{X}$ equipped with a distance function $d : \mathcal{X} \times \mathcal{X} \to [0,1]$. 
A single facility must be placed at some point $q \in \mathcal{C} \subseteq \mathcal{X}$ and each agent $i \in [n]$ has an ideal location $x_i \in \mathcal{X}$, and incurs a cost equal to the distance from $q$, given by $d(q, x_i)$.  

\begin{definition}[Social Cost for Facility Location]\label{def:socialfacilitycost}
We define:
    \begin{enumerate}[noitemsep, topsep=0pt, label=(\roman*)]
    \item The {social cost} of $q \in \mathcal{C}$ as $\social(q) = (1/n) \cdot  \sum_{i=1}^n d(q, x_i)$,
    \item The {optimal social cost} as $\socialopt = \min_{q \in \mathcal{C}} \social(q)$,
    \item The {panel cost} of $q \in \mathcal{C}$ w.r.t. a panel $S$ as $\panelcost(q, S) = (1/k) \cdot \sum_{i \in S} d(q, x_i)$, and
    \item The {optimal panel cost} as $\panelopt(S) = \min_{q \in \mathcal{C}} \panelcost(q, S)$.
\end{enumerate}
\end{definition}

Throughout this section, we consider the optimal facility location $\overline{q}(S)$ with respect to $S$. {When we need to specify the facility location instance $I$, we write $\overline{q}(S,I)$.}
We also denote by $q^\star$ the optimal facility location for the entire population.  Formally:
\begin{align*}
    \overline{q}(S) \in \argmin_{q \in \mathcal{C}} \panelcost(q, S) \quad \text{and} \quad q^\star \in \argmin_{q \in \mathcal{C}} \social(q).
\end{align*}

\subsection{Bounding Panel Outliers in Facility Location}
\label{subsec:PanelOutliersRare}

A key question in evaluating the panel-optimal facility location $\overline{q}(S)$ for a panel $S$ drawn uniformly from the population is how far it deviates from the population-optimal facility $q^{\star}$. In this section, we analyze the behavior of $d(\overline{q}(S), q^\star)$ as a function of the panel size $k$.  

This analysis is useful for obtaining guarantees on social cost in \Cref{sec:facility_efficiency}, and offers an
alternative perspective on approximating the optimal facility location.
If $q^\star$ is unique, we naturally expect $d(\overline{q}(S), q^\star) \to 0$ as $k \to n$. Moreover, if this function decays smoothly with $k$, it offers insight into how well a panel of size $k$ approximates the optimal facility location.

To put this in more concrete terms, consider a government agency evaluating the feasibility of building a facility in a given region. While they may eventually be able to survey the entire population and determine the optimal location $q^\star$, logistical constraints such as land availability and permits may still pose challenges. 
The agency's primary concern is that, initially, they lack a clear understanding of where the facility should be placed.  
To gain confidence before making a large commitment, they first query a panel. This query returns a ball $B \subseteq \mathcal{X}$ centered at $\overline{q}(S)$, which, with some probability dependent on $k$, contains $q^\star$. If $B$ proves to be a viable location and the agency decides to proceed, no resources are wasted, as they can refine their analysis by surveying the remaining population $[n] \setminus S$. 

Our main result in this section is a panel complexity guarantee for general metric spaces. We provide a tail bound on the probability that $d(q^\star, \overline{q}(S))$ is large relative to $\socialopt$ in general metric spaces. Note that, since the distances $d$ are bounded by $1$, it follows that $\socialopt \leq 1$.

\begin{theorem}[Panel Complexity Upper Bound]\label{thm:tail_bound}
Let $\facilityinstance$ be a facility location instance.
    Fix any $T> 2$ and $\delta\in(0,1)$. Let the panel size be:
    \[
    k=\frac{2\cdot \log(1/\delta)}{\log\left({T^2}/{\left(4(T-1)\right)}\right)}.
    \]
    Then it holds that:
    \[
    \mathbb{P}_{S \sim \uniformpanel}[d(q^\star, \overline{q}(S)) \leq T \cdot \socialopt ] \geq 1-\delta.
    \]
\end{theorem}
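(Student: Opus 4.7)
The plan is to bound the bad-event probability $\mathbb{P}_S[d(\overline{q}(S), q^*) > T\socialopt]$ via a Chernoff-type concentration inequality applied to the signed cost differences between a ``bad'' candidate and $q^*$, and then to solve for $k$ that drives this bound below $\delta$.

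First I fix any candidate $q \in \mathcal{C}$ with $D := d(q, q^*) > T\socialopt$. If $\overline{q}(S) = q$, panel-optimality yields $\panelcost(q, S) \leq \panelcost(q^*, S)$; equivalently, with $V_i := d(x_i, q) - d(x_i, q^*) \in [-D, D]$, we have $\sum_{i \in S} V_i \leq 0$. Applying the reverse triangle inequality to the population gives $\mathbb{E}_{i \sim [n]}[V_i] = \social(q) - \socialopt \geq D - 2\socialopt$, so the mean-to-range ratio satisfies $\mathbb{E}[V_i]/D \geq 1 - 2\socialopt/D \geq (T-2)/T$ whenever $D \geq T\socialopt$.

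I then invoke Hoeffding's classical 1963 reduction, under which the moment generating function of $\sum V_i$ under sampling without replacement is bounded by its value under i.i.d.\ sampling. For $V_i$ on $[-D, D]$ with mean $\mu \geq 0$, the MGF of $-V_i$ is uniformly maximized (for each $\lambda > 0$) by the two-point Bernoulli on $\{-D, D\}$, and its minimum over $\lambda > 0$ is $\sqrt{1 - (\mu/D)^2}$. Hence $\mathbb{P}_S[(1/k) \sum_{i \in S} V_i \leq 0] \leq (1 - (\mu/D)^2)^{k/2} \leq (1 - (T-2)^2/T^2)^{k/2} = (4(T-1)/T^2)^{k/2}$, using the algebraic identity $T^2 - (T-2)^2 = 4(T-1)$. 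Equivalently, this rate equals $\exp(-k \cdot \mathrm{KL}(1/2 \,\|\, 1/T))$ with $\mathrm{KL}(1/2 \,\|\, 1/T) = (1/2)\log(T^2/(4(T-1)))$. Setting the bound $\leq \delta$ and solving for $k$ recovers the stated threshold $k = 2\log(1/\delta)/\log(T^2/(4(T-1)))$.

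The main obstacle is lifting the per-candidate bound to the event $\{\overline{q}(S) \notin \overline{B}(q^*, T\socialopt)\}$: a naive union bound over a possibly infinite bad candidate set can diverge. The proof likely resolves this via either (i) structural properties of the 1-median (for instance, that $\overline{q}(S)$ may be chosen among a bounded-size set such as the panel members' locations), or (ii) the observation that the per-$q$ rate $(1 - (\mu/D)^2)^{k/2}$ decays rapidly in $D$, so contributions from bad candidates far from $q^*$ are negligible. I emphasize that the simpler reduction $d(\overline{q}, q^*) \leq 2\panelcost(q^*, S)$ combined with a concentration bound on $\panelcost(q^*, S)$ is insufficient: the resulting tail bound depends on $\socialopt$ and can strictly exceed $(4(T-1)/T^2)^{k/2}$ in the regime $\socialopt \ll 1/T$, so the $\socialopt$-independent rate must come from the signed-difference Chernoff argument above.
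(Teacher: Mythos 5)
Your per-candidate analysis is correct and, in fact, reproduces exactly the rate the paper obtains: for a fixed $q$ with $d(q,q^\star)=D>T\cdot\socialopt$, the signed differences $V_i=d(x_i,q)-d(x_i,q^\star)\in[-D,D]$ have population mean $\social(q)-\socialopt\ge D-2\socialopt$ (this is the paper's Lemma~\ref{lem:alternative_far}), and combining Hoeffding's reduction for sampling without replacement with the two-point extremal MGF bound gives $\mathbb{P}[\sum_{i\in S}V_i\le 0]\le\bigl(4(T-1)/T^2\bigr)^{k/2}$, matching the stated panel size. However, there is a genuine gap where you yourself flag it: you never lift the per-candidate bound to the event $\{d(\overline{q}(S),q^\star)>T\cdot\socialopt\}$, and neither of your speculated fixes works as stated. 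Route (i) fails because $\overline{q}(S)$ ranges over the given candidate set $\mathcal{C}$, which may be infinite and need not contain panel members' locations; route (ii) fails because the decay of the rate in $D$ says nothing about the cardinality of bad candidates at distance $\approx T\cdot\socialopt$, of which there can be a continuum, so no union bound converges.

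The paper closes exactly this gap by a chain of reductions (Lemmas~\ref{lem:construct_I2}--\ref{lem:further_line_reduction}) that replaces the union over bad candidates by a \emph{single} event: the metric map $\pi(x)=d(x,q^\star)$ projects the instance onto $[0,\infty)$ while preserving distances to $q^\star$ and contracting all others, so that whenever the panel prefers some far candidate in the original instance it prefers the ``far'' side in the projected instance; a further truncation at $T\cdot\socialopt$ (which leaves the cost comparison between $q^\star$ and the far candidate unchanged) yields a two-candidate instance on $[0,T]$ in which the bad event is simply $\frac1k\sum_{i\in S}d(x_i,q^\star)$ (truncated, normalized) exceeding $1/2$ while its population mean is at most $1/T$; the KL-form of Hoeffding's inequality for sampling without replacement then gives $(4(T-1)/T^2)^{k/2}$ with no union bound at all. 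You could reach the same single event inside your framework by using the uniform bound $V_i^{(q)}\ge T\cdot\socialopt-2d(x_i,q^\star)$ valid for every bad $q$, but—as your own final remark about the ``simpler reduction'' essentially observes—concentration of $d(x_i,q^\star)\in[0,1]$ around the small mean $\socialopt$ is not scale-free, so the truncation step is indispensable; without it (or an equivalent argument) the proof is incomplete.
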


We note that the denominator in the panel complexity is always 
positive, i.e., for any $T>2$, we have $\log\left({T^2}/{\left(4(T-1)\right)}\right) > 0$. Furthermore, as $T \to \infty$, the denominator grows like $\log(T)$.
{At the end of this section, we also show that \Cref{thm:tail_bound} is tight, in the sense that its guarantee does not hold for $T \leq 2$.}

Before proving \Cref{thm:tail_bound}, we note that it implies that, under an additional PAC-style guarantee with multiplicative error, we can obtain a multiplicative approximation of the optimal social cost.

\begin{restatable}[Multiplicative PAC Guarantee Suffices]{corollary}{MultiplicativePACGuarantee}\label{thm:reduction}
Let $\facilityinstance$ be a facility location instance. 
For any $0 < \epsilon < 1$, suppose the panel size satisfies $k \geq O(\log(1/\epsilon))$, and it holds that:
     \begin{align*}
    \mathbb{P}_{S \sim \uniformpanel} [\social(\overline{q}(S)) \leq (1 + \epsilon) \cdot \socialopt] \geq 1-\epsilon.   
     \end{align*}
Then it follows that:
    \begin{align*}
        \mathbb{E}_{S \sim \uniformpanel} [\social(\overline{q}(S))] \leq (1+ 5\epsilon ) \cdot \socialopt.
    \end{align*} 
\end{restatable}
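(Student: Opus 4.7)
The plan is to decompose the expectation using the hypothesized PAC event and to control the residual contribution with \Cref{thm:tail_bound}. Let $A = \{\social(\overline{q}(S)) \leq (1+\epsilon)\socialopt\}$; the hypothesis gives $\mathbb{P}(A) \geq 1-\epsilon$, so
\begin{align*}
\mathbb{E}_{S \sim \uniformpanel}[\social(\overline{q}(S))] \leq (1+\epsilon)\socialopt + \mathbb{E}_{S \sim \uniformpanel}[\social(\overline{q}(S)) \mathbbm{1}_{A^c}],
\end{align*}
reducing the task to bounding the residual by roughly $4\epsilon\socialopt$.

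Next, I would apply \Cref{thm:tail_bound} with a fixed constant $T$ slightly above $2$—the choice $T = 3$ is convenient—and pick the constant hidden in $k \geq O(\log(1/\epsilon))$ large enough that the resulting tail probability $\delta := (T^2/(4(T-1)))^{-k/2}$ is a sufficiently high power of $\epsilon$ to be negligible compared to $\epsilon \socialopt$. Writing $E = \{d(q^\star, \overline{q}(S)) \leq T \socialopt\}$, split
\begin{align*}
\mathbb{E}[\social(\overline{q}(S)) \mathbbm{1}_{A^c}] = \mathbb{E}[\social(\overline{q}(S)) \mathbbm{1}_{A^c \cap E}] + \mathbb{E}[\social(\overline{q}(S)) \mathbbm{1}_{A^c \cap E^c}].
\end{align*}
On $A^c \cap E$, the triangle inequality gives $\social(\overline{q}(S)) \leq \socialopt + d(q^\star, \overline{q}(S)) \leq (1+T)\socialopt$, so this piece is at most $(1+T)\socialopt \cdot \mathbb{P}(A^c) \leq (1+T)\epsilon\socialopt$. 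On $A^c \cap E^c$, the trivial bound $\social(\overline{q}(S)) \leq 1$ contributes at most $\delta$. For $T = 3$, combining the PAC contribution $(1+\epsilon)\socialopt$ with $(1+T)\epsilon\socialopt = 4\epsilon\socialopt$ recovers the target $(1+5\epsilon)\socialopt$, modulo the tail residual $\delta$.

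The main obstacle is absorbing the tail residual $\delta$ into the $(1+5\epsilon)\socialopt$ budget. Since \Cref{thm:tail_bound} produces $\delta$ depending only on $k$ and $T$ (and not on $\socialopt$), ensuring $\delta \leq \epsilon \socialopt$ is straightforward when $\socialopt$ is moderate but more delicate when $\socialopt$ is very small relative to $\epsilon$. A clean complement is to use the auxiliary bound $\mathbb{E}_{S \sim \uniformpanel}[\social(\overline{q}(S))] \leq 3\socialopt$, which follows from the pointwise inequality $d(q^\star, \overline{q}(S)) \leq 2\,\panelcost(q^\star, S)$ (a consequence of panel-optimality of $\overline{q}(S)$ together with the triangle inequality) and $\mathbb{E}_{S \sim \uniformpanel}[\panelcost(q^\star, S)] = \socialopt$; this covers the regime where the tail term would dominate and, combined with the PAC-plus-tail argument above, delivers the claimed $(1+5\epsilon)\socialopt$ bound with all constants absorbed into the hidden $O(\log(1/\epsilon))$.
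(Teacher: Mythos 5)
Your decomposition and the estimate on $A^c \cap E$ are fine, and the auxiliary facts you invoke are correct: $d(q^\star,\overline{q}(S)) \le 2\,\panelcost(q^\star,S)$ follows from panel-optimality plus the triangle inequality, $\mathbb{E}_{S \sim \uniformpanel}[\panelcost(q^\star,S)] = \socialopt$, and hence $\mathbb{E}_{S \sim \uniformpanel}[\social(\overline{q}(S))] \le 3\,\socialopt$ unconditionally. The gap is the last step, where you absorb the far-tail term. With a fixed $T$, \Cref{thm:tail_bound} only yields $\mathbb{P}(E^c) \le \delta$ with $\delta = \left(4(T-1)/T^2\right)^{k/2}$ depending on $k$ and $T$ alone, so your argument delivers $\mathbb{E}[\social(\overline{q}(S))] \le (1+5\epsilon)\socialopt + \delta$. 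Since $k$ is allowed to depend only on $\epsilon$, the best you can make $\delta$ is a fixed power of $\epsilon$, whereas $\socialopt$ is an independent instance parameter that can be arbitrarily small; in the regime $\socialopt \ll \delta$ the additive $\delta$ cannot be absorbed into $\epsilon\cdot\socialopt$. The proposed fallback does not rescue this: $3\,\socialopt > (1+5\epsilon)\socialopt$ whenever $\epsilon < 2/5$, so the unconditional $3$-approximation never certifies the claimed bound by itself, and taking the minimum of your two bounds still fails precisely in the small-$\socialopt$ regime. As written, the proof therefore does not go through.

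What is missing is a tail estimate whose contribution scales multiplicatively with $\socialopt$ rather than additively. The paper achieves this by writing $\mathbb{E}[\social(\overline{q}(S))] = \int_0^\infty \mathbb{P}[\social(\overline{q}(S)) \ge x]\,\mathrm{d}x$, using the PAC hypothesis on $[(1+\epsilon)\socialopt, 4\socialopt]$, and, for $x > 4\socialopt$, applying \Cref{thm:tail_bound} with the level-dependent parameter $T = x/\socialopt - 1$ and failure probability $\delta = 3\epsilon/T^2$; the required panel size remains $O(\log(1/\epsilon))$ uniformly over $T \ge 3$, and the change of variables $x = (1+T)\socialopt$ supplies the factor $\socialopt$, so the far tail integrates to $\epsilon\cdot\socialopt$ no matter how small $\socialopt$ is. To close your argument you would need this level-dependent use of the tail bound (or an equivalent multiplicative-scale argument, e.g.\ integrating a Chernoff-type tail for $\panelcost(q^\star,S)$ over thresholds proportional to $\socialopt$), not a single fixed-$T$ event.
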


The proof of \Cref{thm:reduction} follows readily by applying \Cref{thm:tail_bound} and the multiplicative PAC guarantee to 
$\mathbb{P}_{S \sim \uniformpanel}[\social(\overline{q}(S)) \geq x ]$.
We note that \citet*{DBLP:conf/ijcai/AnshelevichP16} show that even a panel consisting of a single agent, commonly referred to as random dictatorship, provides a $3$-approximation to the optimal social cost in expectation.
In contrast, \Cref{thm:reduction} can be applied to obtain a $(1 + \epsilon)$-approximation for any $\epsilon > 0$. In \Cref{sec:facility_efficiency},  we apply \Cref{thm:reduction} in the proof of our main result (\Cref{thm:assouad}).

\newcommand{\myscale}{3}
\newcommand{\rad}{0.75pt}
\newcommand{\wid}{0.03}
\newcommand{\topt}{0.7}
\newcommand{\tipscale}{0.75}

\def\pointsinside{{1.12,0.67},{0.76,1.15},{1,1.2},{0.87,0.8}}
\def\pointsoutsideup{{1.74,1.5}}
\def\pointsoutsidedown{{0.48,0.35}}

\newcommand{\arc}[2]{%
  \draw[dashed, thick, -{Stealth[scale=\tipscale]}] (#1,#2) arc[
    start angle={atan2(#2-1,#1-1)},
    end angle=0,
    radius={veclen(#1-1,#2-1)}
  ];
}
\def\splitandarc#1,#2\relax{\arc{#1}{#2}}

\newcommand{\arcandup}[2]{%
  \draw[dashed, thick] 
    (#1,#2) arc[
      start angle={atan2(#2-1,#1-1)},
      end angle=2,
      radius={veclen(#1-1,#2-1)}
    ];
  \draw[dashed, thick, -{Stealth[scale=\tipscale]}] 
    ({1 + veclen(#1-1,#2-1) * cos(2)}, 
     {1 + veclen(#1-1,#2-1) * sin(2)}) 
    -- ({1 + \topt}, {1 + veclen(#1-1,#2-1) * sin(2)});
}
\def\splitandarcandup#1,#2\relax{\arcandup{#1}{#2}}

\newcommand{\arcanddown}[2]{%
  \draw[dashed, thick] 
    (#1,#2) arc[
      start angle={atan2(#2-1,#1-1)},
      end angle=-2,
      radius={veclen(#1-1,#2-1)}
    ];
  \draw[dashed, thick, -{Stealth[scale=\tipscale]}] 
    ({1 + veclen(#1-1,#2-1) * cos(-2)}, 
     {1 + veclen(#1-1,#2-1) * sin(-2)}) 
    -- ({1 + \topt}, {1 + veclen(#1-1,#2-1) * sin(-2)});
}
\def\splitandarcanddown#1,#2\relax{\arcanddown{#1}{#2}}

\begin{figure}[t]
    \centering
    \begin{minipage}{0.49\textwidth}
        \centering
        \begin{tikzpicture}[scale=\myscale]
            \draw[->, thick] (-0.1,0) -- (2,0);
            \draw[->, thick] (0,-0.1) -- (0,2);

            \draw[-, thick] (1,1) -- (1+\topt, 1);
            \draw [decorate, decoration={brace, mirror, amplitude=6pt}, thick] 
    (1,1) -- (1+\topt,1) node [midway, below=6pt, scale=0.6] {$T \cdot \textsc{Social}\text{-}\textsc{Opt}$};

            \filldraw (1,1) circle (\rad);
            \draw[thick] (1,1) circle (\topt);
            \node at (0.9,1.02) {$q^{\star}$};

            \foreach \p in \pointsinside {
                \draw[thick] (\p) circle (\rad);
            }
            \foreach \p in \pointsoutsideup {
                \draw[thick] (\p) circle (\rad);
            }
            \foreach \p in \pointsoutsidedown {
                \draw[thick] (\p) circle (\rad);
            }
        \end{tikzpicture}
        \smallbreak
        (i) General Instance
    \end{minipage}%
    \hfill
    \hfill
    \begin{minipage}{0.49\textwidth}
        \centering
        \begin{tikzpicture}[scale=\myscale]
            \draw[->, thick] (-0.1,0) -- (2,0);
            \draw[->, thick] (0,-0.1) -- (0,2);

            \filldraw (1,1) circle (\rad);
            \draw[thick] (1,1) circle (\topt);
            \node at (0.9,1.02) {$q^{\star}$};

            \draw[thick, ->] (1,1) -- (2,1);

            \foreach \p in \pointsinside {
                \expandafter\splitandarc\p\relax
                \filldraw[white] (\p) circle (\rad);
                \draw[thick] (\p) circle (\rad);
            }

            \foreach \p in \pointsoutsideup {
                \expandafter\splitandarcandup\p\relax
                \filldraw[white] (\p) circle (\rad);
                \draw[thick] (\p) circle (\rad);
            }

             \foreach \p in \pointsoutsidedown {
                \expandafter\splitandarcanddown\p\relax
                \filldraw[white] (\p) circle (\rad);
                \draw[thick] (\p) circle (\rad);
            }
        \end{tikzpicture}
        \smallbreak
        (ii) Finite Interval Instance
    \end{minipage}

    \caption{Illustration of the reductions described in \Cref{lem:reduction_main,lem:further_line_reduction}, as used in the proof of \Cref{thm:tail_bound}. 
    In subfigure (i), the solid black circle represents the optimal facility location $q^\star$, while the empty circles indicate the agent locations in $\mathcal{X}$.
In subfigure (ii), the dashed lines show the mapping of agent locations from the original metric space $\mathcal{X}$ to their corresponding positions in the metric space $\mathcal{Y} = [0, T]$.
    }
    \label{fig:reduction}
\end{figure}
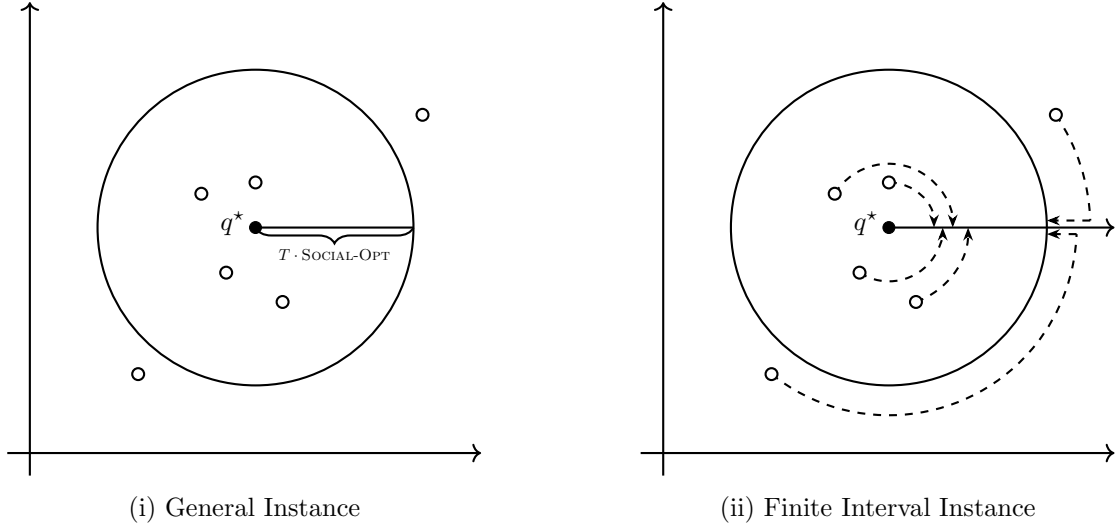

 In order to prove \Cref{thm:tail_bound},
 it is useful to reduce the problem to a metric space that is easier to analyze.
 The reduction must preserve certain properties on distances in order to yield an upper bound on $\mathbb{P}_{S \sim \uniformpanel}[d(q^\star, \overline{q}(S)) \leq T \cdot \socialopt ]$. 
 We find that a metric map centered at a specific point $p$, which we will now define, is sufficient for this purpose. 
 It is essentially a contraction that also preserves distances with respect to $p$, which, for our purposes, will correspond to the population optimum.

\begin{definition}[Metric Map]\label{def:metricmap}
    Let $(X,d_X)$ and $(Y,d_Y)$ be two metric spaces and $p \in X$. We call a map 
    $\pi: X \rightarrow Y$ a \emph{metric map} centered at $p$ if the following two conditions are satisfied:
    \begin{enumerate}[noitemsep, topsep=0pt,label=(\roman*)] 
    \item $d_X(x_1,x_2) \geq d_Y(\pi(x_1),\pi(x_2)) $ for all $ x_1,x_2 \in X$, and\label{list:metricmap1}
     \item $d_X(p,x) = d_Y(\pi(p),\pi(x)) $  for all $x \in X$. \label{list:metricmap2}
 \end{enumerate}
\end{definition}

{The following is a technical lemma which demonstrates that if an alternative $\bar{q}$ is far from the population optimum, then it must also incur significant social cost.}

\begin{lemma}\label{lem:alternative_far}
    Let $I = \langle \mathcal{X}, \mathcal{C}, d, (x_1,\ldots,x_n) \rangle$ be a facility location instance. Let $q_1, q_2 \in \mathcal{X}$ and $T \geq 0$ be such that $d(q_1, q_2) = T\cdot \social(q_1)$. Then, we have $\social(q_2) > (T-1)\cdot\social(q_1)$.
\end{lemma}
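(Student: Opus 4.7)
My plan is to prove the lemma by applying the reverse triangle inequality at each agent's location and averaging over the population. For every $i \in [n]$, the triangle inequality $d(q_1, q_2) \leq d(q_1, x_i) + d(q_2, x_i)$ rearranges to
\[
d(q_2, x_i) \;\geq\; d(q_1, q_2) - d(q_1, x_i).
\]
Summing over all $i \in [n]$ and dividing by $n$ then gives
\[
\social(q_2) \;=\; \frac{1}{n}\sum_{i=1}^n d(q_2, x_i) \;\geq\; d(q_1, q_2) - \frac{1}{n}\sum_{i=1}^n d(q_1, x_i) \;=\; d(q_1, q_2) - \social(q_1) \;=\; (T-1)\cdot\social(q_1),
\]
where the last equality invokes the hypothesis $d(q_1, q_2) = T \cdot \social(q_1)$. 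This delivers the weak form of the bound, and is the backbone of the argument.

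The only real subtlety is upgrading the $\geq$ to the strict $>$ stated in the lemma. My plan is to sharpen the per-agent bound by using non-negativity of distances, namely $d(q_2, x_i) \geq \max\bigl(0,\, d(q_1, q_2) - d(q_1, x_i)\bigr)$. This sharpening yields a strict improvement in the sum whenever there is some agent with $d(q_1, x_i) > d(q_1, q_2)$. In the complementary regime, where $d(q_1, x_i) \leq d(q_1, q_2)$ for every $i$, averaging forces $T \geq 1$; here strictness should come from the fact that the triangle inequality cannot be tight simultaneously for every agent except in the degenerate case where each $x_i$ lies on a geodesic between $q_1$ and $q_2$, which can be ruled out under the nontrivial assumption $q_1 \neq q_2$.

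The main obstacle is handling this boundary case cleanly for arbitrary metric spaces; the weak bound is immediate, but strictness requires a careful case split. I expect the authors either argue strictness directly via one of the sharpenings above or simply rely on the weak bound, since in the proof of \Cref{thm:tail_bound} the lemma is invoked to compare $\social(\overline{q}(S))$ with $(T-1)\cdot\social(q^\star)$, and any slack between $\geq$ and $>$ can typically be absorbed into the choice of constants.
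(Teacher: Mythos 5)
Your main computation---the agentwise reverse triangle inequality followed by averaging---is correct and is essentially the paper's own proof in streamlined form: the paper splits the population into the agents with $d(q_1,x_i)\le d(q_1,q_2)$ and the rest, applies the triangle inequality in both directions, and after bounding the second group arrives at exactly the same final estimate $\sum_{i}d(q_2,x_i)\ge (T-1)\cdot n\cdot \social(q_1)$, so the case split buys nothing beyond your one-line version of the weak bound. Where your proposal goes wrong is the suggested upgrade to strictness: assuming $q_1\neq q_2$ does \emph{not} rule out the degenerate configuration in which every agent sits on a geodesic between $q_1$ and $q_2$. For instance, on the line take $q_1=0$, $q_2=1$, and all agents at $1/2$; then $\social(q_1)=1/2$, $T=2$, and $\social(q_2)=1/2=(T-1)\cdot\social(q_1)$ exactly, so the strict inequality claimed in the lemma statement is false in general and cannot be proved by any sharpening. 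This is a defect of the statement rather than of your argument: the paper's own chain of inequalities likewise only yields $\ge$ (it is strict only when some agent lies farther from $q_1$ than $q_2$ does), and in the lemma's sole application (\Cref{lem:construct_I2}, feeding into \Cref{thm:tail_bound}) one has $d(q_1,q_2)>2\cdot\social(q_1)$, hence $T>2$ and $(T-1)\cdot\social(q_1)>\social(q_1)$, so the weak bound suffices---exactly as you anticipated in your closing remark.
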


\begin{proof}
Denote by $S_1 = \{i \in [n] : \ d(q_1, q_2) \geq d(q_1,x_i) \}$, $S_2 = [n]\setminus S_2$ and $K = \social(q_1)$. We have:
\begin{align*}
    \sum_{i\in [n]} d(q_2,x_i) 
     & \geq \sum_{i\in S_1} \left(d(q_1,q_2) - d(q_1,x_i)\right) + 
    \sum_{i \in S_2}  \left(d(q_1,x_i) - d(q_1,q_2)\right) && (\text{by triangle inequality})\\
    & = \left(|S_1| - |S_2|\right)\cdot d(q_1, q_2)
     - \left(\sum_{i\in S_1} d(q_1,x_i) - \sum_{i \in S_2} d(q_1,x_i) \right)\\
    & = \left(|S_1| - |S_2|\right)\cdot d(q_1, q_2)
    - n\cdot K + 2\cdot \sum_{i\in S_2} d(q_1,x_i)\\
    & \geq \left(|S_1| - |S_2|\right)\cdot d(q_1, q_2)
    - n\cdot K + 2\cdot |S_2| \cdot d(q_1, q_2) && (\text{by definition of $S_2$})\\
    & = (|S_1|+|S_2|)\cdot d(q_1, q_2) - n\cdot K\\
    & = (T-1)\cdot n \cdot K && (\text{since } d(q_1, q_2) = T \cdot K).
\end{align*}
The result follows.
\end{proof}

{Using this lemma, we next show that if the set of non-optimal alternatives is constrained to contain only outliers, in the sense that $d(q^\star, q) \geq T \cdot \socialopt$ for any $q\neq q^*$, then finding a metric map centered at the population optimum suffices to define a valid reduction.}

\begin{lemma}[General Reduction]\label{lem:construct_I2}
Let $T > 2$ and $I = \langle X, \mathcal{C}_1, d_X, (x_1, \ldots, x_n) \rangle$ be a facility location instance with population optimum $q_1^\star \in \mathcal{C}_1$.
Suppose that for all $c \in \mathcal{C}_1 \setminus \{q_1^\star\}$, we have $d_X(q_1^\star, c) \geq T \cdot \socialopt(I)$. Let $(Y, d_Y)$ be a metric space, and let $\pi: X \rightarrow Y$ be a metric map centered at $q_1^\star$.
Then, there exists an instance $J = \langle Y, \mathcal{C}_2, d_Y, (y_1, \ldots, y_n) \rangle$ such that:
    \begin{enumerate}[noitemsep, topsep=0pt,label=(\roman*)] 
    \item for any $S \subseteq [n]$, if $d(\bar{q}(S, I), q^\star) \geq T \cdot \socialopt(I)$, then $d_Y(\bar{q}(S, J), q^\star) \geq T \cdot \socialopt(J)$,
    \item $\mathcal{C}_2 = \{\pi(q_1^\star)\} \cup \{y\in Y : d_Y(\pi(q_1^\star),y) \geq T\cdot \socialopt(J)\}$, and
    \item the population optimum in $J$ is $\pi(q_1^\star) \in \mathcal{C}_2$.
 \end{enumerate}
\end{lemma}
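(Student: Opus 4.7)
My plan is to construct $J$ by pushing the population through the metric map: set $y_i := \pi(x_i)$ for every $i \in [n]$, and let $\mathcal{C}_2$ be as specified in (ii), modulo the self-reference to $\socialopt(J)$ which I will resolve along the way. The three properties will then follow from the two defining properties of the metric map.

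I would first establish (iii), since this pins down $\socialopt(J)$ and makes the definition of $\mathcal{C}_2$ unambiguous. Property~(ii) of the metric map gives $d_Y(\pi(q_1^\star), \pi(x_i)) = d_X(q_1^\star, x_i)$ for every $i$, so averaging shows that the social cost of $\pi(q_1^\star)$ in $J$ equals $\socialopt(I)$. For any other $y \in \mathcal{C}_2$, the defining inequality $d_Y(\pi(q_1^\star), y) \geq T \cdot \socialopt(J)$, combined with \Cref{lem:alternative_far} applied inside $J$ (with $q_1 = \pi(q_1^\star)$, $q_2 = y$, and using $T > 2$ so $T-1 > 1$), forces the social cost of $y$ in $J$ to strictly exceed that of $\pi(q_1^\star)$. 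Hence $\pi(q_1^\star)$ is the population optimum of $J$, which establishes (iii) and simultaneously fixes $\socialopt(J) = \socialopt(I)$, making the definition in (ii) consistent.

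For (i), I would transport $\bar{q}(S, I)$ through $\pi$ and compare panel costs. Let $\bar{q}_1 := \bar{q}(S, I)$ and suppose $d_X(q_1^\star, \bar{q}_1) \geq T \cdot \socialopt(I)$; the case $\socialopt(I) = 0$ makes both sides of the conclusion zero and is trivial. Property~(ii) of $\pi$ then gives $d_Y(\pi(q_1^\star), \pi(\bar{q}_1)) \geq T \cdot \socialopt(J)$, so $\pi(\bar{q}_1) \in \mathcal{C}_2 \setminus \{\pi(q_1^\star)\}$. Averaging property~(i) of $\pi$ over the panel shows that the $J$-panel-cost of $\pi(\bar{q}_1)$ is at most the $I$-panel-cost of $\bar{q}_1$, while averaging property~(ii) shows that the $J$-panel-cost of $\pi(q_1^\star)$ equals the $I$-panel-cost of $q_1^\star$. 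Combining these with the panel optimality $\panelcost(\bar{q}_1, S) \leq \panelcost(q_1^\star, S)$ in $I$ shows that $\pi(\bar{q}_1)$ is at least as good as $\pi(q_1^\star)$ in $J$, so some panel optimum of $J$ lies in $\mathcal{C}_2 \setminus \{\pi(q_1^\star)\}$ and is therefore at distance at least $T \cdot \socialopt(J)$ from $\pi(q_1^\star)$.

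The main obstacle I anticipate is the circular specification of $\mathcal{C}_2$ in (ii): the set is defined in terms of $\socialopt(J)$, but $\socialopt(J)$ itself depends on $\mathcal{C}_2$. The resolution is to treat $\socialopt(J) = \socialopt(I)$ as a fixed-point identity to be verified, which is exactly the content of the \Cref{lem:alternative_far} argument used for (iii). A secondary subtlety is tie-breaking in $\bar{q}(S, J)$: the argument produces \emph{some} panel optimum of $J$ far from $\pi(q_1^\star)$, so (i) should be read as an existence statement, or as a requirement that the tie-breaking rule pick $\pi(\bar{q}_1)$ (or another far point) whenever it is equally good.
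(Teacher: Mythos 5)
Your proposal is correct and follows essentially the same route as the paper's proof: define $y_i = \pi(x_i)$, take $\mathcal{C}_2$ with threshold $T\cdot\socialopt(I)$ (resolving the self-reference by proving $\socialopt(J)=\socialopt(I)$), use metric-map property (ii) together with \Cref{lem:alternative_far} to show $\pi(q_1^\star)$ is the population optimum in $J$, and compare panel costs through the two metric-map properties to deduce (i). The tie-breaking subtlety you flag is also present (implicitly) in the paper, which asserts a strict panel-cost inequality from $\bar{q}(S,I)\neq q_1^\star$, so your treatment is if anything slightly more careful on that point.
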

\begin{proof}
We construct $J$ by defining $\mathcal{C}_2 = \{\pi(q_1^\star)\} \cup  \{y\in Y : d_Y(\pi(q_1^\star),y) \geq T\cdot \socialopt(I)\} $ and $y_i = \pi(x_i)$ for all $i\in [n]$.
By Property~\ref{list:metricmap2} of \Cref{def:metricmap}, we have $\social(\pi(q_1^\star), J) = \social(q_1^\star, I) = \socialopt(I)$.
By the definition of $J$, it follows that $d_Y(\pi(q_1^\star), y) > 2 \cdot \socialopt(I)$ for all $y \in \mathcal{C}_2 \setminus \{\pi(q_1^\star)\}$.
Thus, by \Cref{lem:alternative_far}, we have $\social(y, J) > \social(\pi(q_1^\star), J)$ for all $y \in \mathcal{C}_2 \setminus \{\pi(q_1^\star)\}$, and therefore $\pi(q_1^\star)$ is the population optimum in $J$.
Consequently, $\socialopt(J) = \socialopt(I)$.

Let $S \subseteq [n]$ be such that $\bar{q}(S,I) \neq q_1^\star$. Then, we get:
\begin{align*}
    \sum_{i\in S} d_Y(\pi(q_1^\star),y_i) &= \sum_{i\in S} d_Y(\pi(q_1^\star),\pi(x_i)) && (\text{by definition of } y_i) \\
    &= \sum_{i\in S} d_X(q_1^\star,x_i) && (\text{by Property~\ref{list:metricmap2} of \Cref{def:metricmap}}) \\
    &> \sum_{i \in S}d_X(\bar{q}(S, I),x_i) && (\text{since } \bar{q}(S,I) \neq q_1^\star) \\
    &\geq \sum_{i\in S}d_Y(\pi(\bar{q}(S, I)), y_i) && (\text{by Property~\ref{list:metricmap1} of \Cref{def:metricmap}})
\end{align*}
implying $\bar{q}(S,J) \neq \pi(q_1^\star)$. This concludes the proof.
\end{proof}

Finally, we show how to make use of Lemma \ref{lem:construct_I2} by defining a specific metric map that projects an arbitrary metric space to $[0,\infty)$. 
\Cref{fig:reduction}(i) gives a visualization of this projection.

\begin{lemma}[Reduction to the Real Line]\label{lem:reduction_main}
Let $T > 2$ and $I = \langle X, \mathcal{C}_1, d_X, (x_1, \ldots, x_n) \rangle$ be a facility location instance with population optimum $q^\star$. Then, there exists an instance $J = \langle Y, \mathcal{C}_2, d_Y, (y_1, \ldots, y_n) \rangle$, where $Y = [0, \infty)$ and $d_Y(z_1, z_2) = |z_1 - z_2|$, such that:
    \begin{enumerate}[noitemsep, topsep=0pt,label=(\roman*)] 
    \item for any $S \subseteq [n]$, if $d(\bar{q}(S, I), q^\star) \geq T \cdot \socialopt(I)$, then $d_Y(\bar{q}(S, J), q^\star) \geq T \cdot \socialopt(J)$, 
    \item $\mathcal{C}_2 = \{0\} \cup [T \cdot \socialopt(J), \infty)$, and 
    \item the population optimum in $J$ is $0$.
 \end{enumerate}
\end{lemma}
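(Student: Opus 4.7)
The plan is to reduce $I$ to an instance on $[0,\infty)$ by composing a candidate-set trimming step with the general reduction of \Cref{lem:construct_I2}. The key insight is to use the distance-to-optimum function $x \mapsto d_X(q^\star, x)$ as the metric map: by the reverse triangle inequality it is $1$-Lipschitz, and by definition it is an isometry between $q^\star$ and every other point, so it satisfies both conditions of \Cref{def:metricmap}.

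First, I preprocess $I$ into an auxiliary instance $I' = \langle X, \mathcal{C}_1', d_X, (x_1, \ldots, x_n)\rangle$ with the restricted candidate set $\mathcal{C}_1' = \{q^\star\} \cup \{c \in \mathcal{C}_1 : d_X(q^\star, c) \geq T \cdot \socialopt(I)\}$. Since $q^\star \in \mathcal{C}_1' \subseteq \mathcal{C}_1$, it follows that $\socialopt(I') = \socialopt(I)$ and that $q^\star$ remains a population optimum in $I'$. If the bad event $d(\overline{q}(S, I), q^\star) \geq T \cdot \socialopt(I)$ holds for some panel $S$, then by construction $\overline{q}(S, I) \in \mathcal{C}_1'$, and since $\overline{q}(S, I)$ minimizes the panel cost over the larger set $\mathcal{C}_1 \supseteq \mathcal{C}_1'$, it also minimizes it over $\mathcal{C}_1'$. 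Hence one may take $\overline{q}(S, I') = \overline{q}(S, I)$, which transfers the bad event to $I'$. Now $I'$ satisfies the hypothesis of \Cref{lem:construct_I2}, namely $d_X(q^\star, c) \geq T \cdot \socialopt(I')$ for all $c \in \mathcal{C}_1' \setminus \{q^\star\}$.

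Next, I take $Y = [0, \infty)$ with $d_Y(z_1, z_2) = |z_1 - z_2|$ and define $\pi : X \to Y$ by $\pi(x) = d_X(q^\star, x)$. Condition (i) of \Cref{def:metricmap} is the reverse triangle inequality $|d_X(q^\star, x_1) - d_X(q^\star, x_2)| \leq d_X(x_1, x_2)$, and condition (ii) is immediate from $\pi(q^\star) = 0$. Applying \Cref{lem:construct_I2} to $I'$ with the metric map $\pi$ yields an instance $J$ on $Y$ with candidate set $\mathcal{C}_2 = \{0\} \cup \{y \in Y : y \geq T \cdot \socialopt(J)\} = \{0\} \cup [T \cdot \socialopt(J), \infty)$ and population optimum $\pi(q^\star) = 0$, which establishes conclusions (ii) and (iii). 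Composing Step~1 with part~(i) of \Cref{lem:construct_I2} gives conclusion (i).

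The proof is essentially routine once the metric map is chosen, so no single step is a serious obstacle; the only thing to keep in mind is that the preprocessing is necessary because the hypothesis of \Cref{lem:construct_I2} about non-optimal candidates being far from $q^\star$ is not generally satisfied by the original $I$. Since the bad event itself only involves far-away panel choices, trimming the close candidates preserves the event we care about. The edge case $\socialopt(I) = 0$ is also handled automatically: preprocessing becomes vacuous and both the hypothesis and the conclusion of (i) become trivially true.
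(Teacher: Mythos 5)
Your proposal is correct and follows essentially the same route as the paper: trim the candidate set to $\{q^\star\}$ together with the candidates at distance at least $T \cdot \socialopt(I)$ from $q^\star$ (noting the panel-optimal choice transfers when the bad event holds), then apply \Cref{lem:construct_I2} with the metric map $\pi(x) = d_X(q^\star, x)$, whose two properties follow from the reverse triangle inequality and $\pi(q^\star)=0$. If anything, your write-up states the preprocessing step (that $\socialopt$ and the population optimum are preserved under trimming) slightly more explicitly than the paper does.
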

\begin{proof}
Let $R = \langle X,\mathcal{C}_3, d_X, (x_1,\ldots,x_n) \rangle$ where 
$\mathcal{C}_3 = \{q^\star\} \cup \{x \in \mathcal{C} \ | \ d_X(q^\star,x) \geq T\cdot \socialopt(I) \}$. 
Take a subset $S \subseteq [n]$ such that $d_X(q^\star,\bar{q}(S,I)) \geq T \cdot \socialopt(I)$. Since $\mathcal{C}_1 \subseteq \mathcal{C}$, and $\bar{q}(S, I) \in \argmin_{q \in \mathcal{C}} \panelcost(q,S,I)$, we must have $\bar{q}(S, I_1) = \bar{q}(S, I)$. Now consider the metric space $Y = [0,\infty)$ with distance given by $d_Y(z_1,z_2) = |z_1-z_2|$. 
Further, define the map $\pi:\mathcal{X}\rightarrow Y$ as $\pi(x) = d_X(x,q^\star)$ for all $x\in \mathcal{X}$.
We verify that $\pi$ satisfies the conditions of a metric map centered at $q^\star$.
First, for any $x_1, x_2 \in \mathcal{X}$, the triangle inequality gives:
\[
d_Y(\pi(x_1),\pi(x_2)) = |d_X(q^\star,x_1) - d_X(q^\star, x_2)| \leq d_X(x_1,x_2),
\]
which establishes Property~\ref{list:metricmap1} of \Cref{def:metricmap}.
Next, for any $x \in \mathcal{X}$:
\[d_Y(\pi(q^\star),\pi(x)) = |d_X(q^\star,q^\star) - d_X(q^\star,x)| = d_X(q^\star,x).\]
verifying Property~\ref{list:metricmap2} of \Cref{def:metricmap}.
Therefore, $\pi$ is a valid metric map centered at $q^\star$. The desired result then follows directly from \Cref{lem:construct_I2}.
\end{proof}

{It is henceforth without loss of generality to assume that $\mathcal{X} = [0,\infty)$. 
But, in order to apply concentration bounds, we also cannot have points $x_i \rightarrow \infty$.}
Hence, we further reduce the problem by mapping all points $x_i$ that are larger than $T\cdot \opt$, to $T\cdot \opt$. 
This reduction to a finite interval instance is illustrated in \Cref{fig:reduction}(ii).

\begin{lemma}[Reduction to a Finite Interval]\label{lem:further_line_reduction}
    Let $T > 2$ and $I = \langle X, \mathcal{C}_1, d, (x_1,\ldots,x_n) \rangle$ be a facility location instance with $X = [0,\infty)$, $d(x,y) = |x - y|$, $\mathcal{C}_1 = \{0\} \cup [T \cdot \socialopt(I), \infty)$, and population optimum at $0$.
    Then, there exists $J = \langle Y, \mathcal{C}_2, d, (y_1,\ldots,y_n) \rangle$, where $Y = [0, T]$, such that:
        \begin{enumerate}[noitemsep, topsep=0pt,label=(\roman*)] 
    \item for any $S \subseteq [n]$, if $d(\bar{q}(S, I), q^\star) \geq T \cdot \socialopt(I)$, then $d_Y(\bar{q}(S, J), q^\star) \geq T \cdot \socialopt(J)$, 
    \item $\mathcal{C}_2 = \{0, T\}$, 
    \item the population optimum in $J$ is $0$, and
    \item $\socialopt(J) \leq 1$.
 \end{enumerate}
\end{lemma}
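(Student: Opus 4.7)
The plan is to construct $J$ by rescaling all positions so that $\socialopt(I)$ becomes at most $1$ and then truncating everything to the interval $[0,T]$. Concretely, let $K = \socialopt(I)$ and set $y_i = \min(x_i/K,\, T)$ for each $i \in [n]$, take $\mathcal{C}_2 = \{0, T\}$, and use the standard Euclidean metric on $Y = [0, T]$.

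Properties (ii)--(iv) follow from a short direct calculation. We have
\[
\social(0, J) \;=\; \frac{1}{n}\sum_{i=1}^n \min(x_i/K,\,T) \;\leq\; \frac{1}{nK}\sum_{i=1}^n x_i \;=\; \socialopt(I)/K \;=\; 1,
\]
and since $T > 2$ we get $\social(T, J) = T - \social(0, J) \geq T - 1 > 1 \geq \social(0, J)$, so $0$ is the unique population optimum in $J$ and $\socialopt(J) \leq 1$.

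The crux is property (i). Fix $S \subseteq [n]$ with $d(\bar{q}(S, I), q^\star) \geq T \cdot \socialopt(I) = TK$. Since $q^\star = 0$ and $\mathcal{C}_1 = \{0\} \cup [TK, \infty)$, this forces $\bar{q}(S, I) \geq TK$. Because $\mathcal{C}_2 = \{0, T\}$ and $\panelcost(0, S, J) + \panelcost(T, S, J) = T$, it suffices to prove $\panelcost(0, S, J) \geq T/2$, i.e.,
\[
\sum_{i \in S} \min(x_i, TK) \;\geq\; k\, TK/2.
\]
Let $h(q) := \sum_{i \in S} |q - x_i|$, convex on $[0, \infty)$. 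I would split on whether $h$ has a minimizer in $[TK, \infty)$. If so, then at least half of the $x_i$ with $i \in S$ lie in $[TK, \infty)$, so $\sum_{i \in S}\min(x_i, TK) \geq (k/2) \cdot TK$ directly. Otherwise every minimizer of $h$ lies strictly below $TK$, so $h$ is nondecreasing on $[TK, \infty)$; combined with $\bar{q}(S,I) \in \{0\} \cup [TK, \infty)$ and $\bar{q}(S, I) \geq TK$, this gives $h(TK) \leq h(0)$. Writing $A = \{i \in S : x_i \leq TK\}$ and $B = S \setminus A$, expanding $h(TK) \leq h(0)$ yields
\[
TK(|A| - |B|) \;\leq\; 2\sum_{i \in A} x_i,
\]
and therefore
\[
\sum_{i \in S}\min(x_i, TK) \;=\; \sum_{i \in A} x_i + |B|\cdot TK \;\geq\; \tfrac{1}{2}TK(|A| - |B|) + |B|\cdot TK \;=\; k TK/2.
\]
Breaking ties toward $T$ in the argmin defining $\bar{q}(S, J)$ then gives $\bar{q}(S, J) = T$, so $d_Y(\bar{q}(S, J), q^\star) = T \geq T \cdot \socialopt(J)$, establishing (i).

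The main obstacle is the second case above: one must translate the optimality of $\bar{q}(S, I)$ at $TK$ over the restricted feasible set into a quantitative lower bound on $\sum_{i \in A} x_i$. The expansion of $h(TK) \leq h(0)$ delivers exactly this; the rest of the argument is algebraic bookkeeping.
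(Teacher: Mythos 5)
Your proof is correct and builds exactly the same reduced instance as the paper: normalize so that $\socialopt(I)$ becomes $1$, truncate every agent location at $T$, and restrict the candidate set to $\{0,T\}$; properties (ii)--(iv) then follow as you show, and your argument for (iii)--(iv) via $\social(T,J) = T - \social(0,J) \geq T-1 > 1 \geq \social(0,J)$ is in fact slightly more direct than the paper's, which transfers optimality of $0$ from $I$. Where you genuinely diverge is the verification of (i). The paper observes that truncation reduces the panel cost of $0$ and of $T$ by exactly the same amount (namely $\sum_{i\in S:\,x_i>T}(x_i-T)$ after normalization), so the comparison between $0$ and $T$ carries over from $I$ to $J$; it then only needs $f_S(T)\le f_S(0)$, which follows from unimodality once $\bar q(S,I)$ lies in $[T\cdot\socialopt(I),\infty)$. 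You instead prove the quantitative inequality $\sum_{i\in S}\min(x_i,TK)\ge k\,TK/2$ directly, splitting on whether the unconstrained minimizer (a median of the panel) lies in $[TK,\infty)$ and otherwise expanding $h(TK)\le h(0)$ over $A=\{i\in S: x_i\le TK\}$ and $B=S\setminus A$. Both routes are valid; the paper's equal-reduction observation renders your case split unnecessary (indeed $h(TK)\le h(0)$ also holds in your first case, since $h$ is nonincreasing up to its minimizer, so the expansion alone would suffice), while your explicit tie-breaking toward $T$---a legitimate part of specifying the reduction, since $\bar q(S,J)$ is only defined as some element of the argmin---is actually more careful than the paper's write-up, which asserts a strict cost comparison that need not hold when $f_S(T)=f_S(0)$.
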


\begin{proof}
By scaling all $x_i$ by a factor of $\socialopt(I)$, we may assume that $\socialopt(I) = 1$.
Moreover, by relabeling the agents, we may assume that $x_1 \leq x_2 \leq \ldots \leq x_n$. Let $t$ be the smallest index such that $x_{t+1} \geq T$.
We define instance $J$ by setting $\mathcal{C}_2 = \{0, T\}$ and $y_i = \min(x_i, T)$.

First, we argue that the population optimum in $J$ is $0$. Since the population optimum in $I$ is $0$, we have:
\begin{align*}
    \sum_{i\in [n]} |y_i - T| = \sum_{i = 1}^n |T - x_i| - \sum_{i=t+1}^n |T - x_i| \geq \sum_{i=1}^n |x_i - 0| -  \sum_{i=t+1}^n |T - x_i| = \sum_{i=1}^n |y_i - 0|
\end{align*}
which proves the claim. It also follows that $\socialopt(J) = \sum_{i \in [n]} y_i \leq \sum_{i \in [n]} x_i = \socialopt(I) = 1$.

Let $S \subseteq [n]$ be a subset such that $\bar{q}(S, I) \neq 0$.
Define $q' = \bar{q}(S, I)$.
Consider the function $f_S : \mathbb{R}_{\geq 0} \rightarrow \mathbb{R}_{\geq 0}$ defined by $f_S(z) = \sum_{i\in S} |x_i - z|$. Clearly, $f_S(z)$ is unimodal and decreasing at $z = 0$.
By definition of $\overline{q}(S,I)$, we have that $\overline{q}(S,I) \in \argmin_{z\in \mathcal{C}_1}f_S(z)$.
If $q' > T$, then $q' = \argmin_{z\in \mathbb{R}_{\geq 0}}f_S(z)$. By unimodality, this implies $f_S(T) \leq f_S(0)$.
Moreover, observe that the reduction in social cost from $I$ to $J$ is the same for both $0$ and $T$, i.e., we have
 $\sum_{i\in S}x_i - \sum_{i\in S}y_i = \sum_{i\in S}|x_i - T| - \sum_{i\in S}|y_i - T|$. 
 Thus, if $\bar{q}(S, I) \neq 0$, then $T$ has lower social cost than $0$ in instance $I$, and therefore also in instance $J$.
Hence, if $\bar{q}(S, I) \neq 0$, it follows that $\bar{q}(S, J) = T$, and so $d_Y(\bar{q}(S, J), q^\star) \geq T \cdot \socialopt(J)$.
\end{proof}

Finally, we are ready to prove the main theorem of this section by a standard concentration argument. 
{We note that, in order to obtain a valid bound for any $T>2$, it is crucial to use the strongest version of Hoeffding's inequality, rather than the simplified version more commonly applied.}

\begin{proof}[Proof of \Cref{thm:tail_bound}]
By \Cref{lem:reduction_main,lem:further_line_reduction}, we only need to consider an instance $I = \langle {X},\mathcal{C}, d, (y_1,\ldots,y_n) \rangle$ where $X = [0,T]$, $d(x,y) = |x-y|$, $\mathcal{C} = \{0,T\}$, $\socialopt(I) \leq 1$, and the population optimum is $q^\star = 0$. 
Denote $\opt = \socialopt(I)$. Observe that for any panel $S \subseteq [n]$, we have:
\begin{align*}
   \overline{q}(S) = 0 \quad\Longleftrightarrow\quad \sum_{i \in S} x_i \leq \sum_{i \in S} (T-x_i) \quad\Longleftrightarrow\quad \sum_{i \in S} x_i \leq (T/2) \cdot k
\end{align*}
It follows that:
\begin{align}
\mathbb{P}_{S \sim \uniformpanel} \left[d(\overline{q}(S), q^\star) \geq T \right]
&= \mathbb{P}_{S \sim \uniformpanel} \left[ \sum_{i \in S} x_i > (T/2) \cdot k \right]\nonumber\\
&= \mathbb{P}_{S \sim \uniformpanel} \left[\frac{1}{k}\sum_{i\in S} \frac{x_i}{T} - \frac{\opt}{T} > \frac{1}{2} - \frac{\opt}{T} \right] \nonumber\\
&\leq \left(\left(\frac{\opt/T}{1/2}\right)^{1/2}\cdot \left(\frac{1-\opt/T}{1/2}\right)^{1/2}\right)^k \label{eq:hoeffding} \\
 &\leq \left(\left(\frac{1/T}{1/2}\right)^{1/2}\cdot \left(\frac{1-1/T}{1/2}\right)^{1/2}\right)^k \label{eq:quadratic} \\
&= \left(\frac{4(T-1)}{T^2}\right)^{k/2}. \nonumber
\end{align}
Inequality~(\ref{eq:hoeffding}) follows from the fact that $x_i/T \in [0,1]$ and $(1/n) \cdot \sum_{i \in [n]} x_i/T = \opt/T$, allowing us to apply Hoeffding's inequality \citep[Equation 2.1]{Hoeffding63}, which holds for sampling without replacement.
Inequality~(\ref{eq:quadratic}) holds because the function $x(1 - x)$ is increasing on the interval $[0, 1/2]$, and thus $(\opt/T) \cdot (1 - \opt/T) \leq (1/T) \cdot (1 - 1/T)$, given that $\opt \leq 1$ and $T > 2$.
Finally, the result follows by choosing $k$ such that:
\[\delta =\left(\frac{4(T-1)}{T^2}\right)^{k/2}
\quad\Longleftrightarrow\quad k=\frac{2\cdot \log(1/\delta)}{\log({T^2}/{(4(T-1)}))}\]
which concludes the proof.
\end{proof}

We now show that the guarantee of \Cref{thm:tail_bound} is tight, in the sense that it does not hold for $T \leq 2$.

\begin{restatable}[Panel Complexity Lower Bound]{theorem}{StarLowerBound}\label{prop:star_lower_bound}
    For any $k > 0$, there exists a facility location instance $\facilityinstance$ 
    such that for any panel decision function $\widetilde{q} : \mathcal{X}^k \to \mathcal{C}$, it holds that:
    \[
    \mathbb{P}_{S \sim \uniformpanel}[d(\widetilde{q}(S),q^{\star}) \geq 2 \cdot \socialopt] \geq 1/4.
    \]
\end{restatable}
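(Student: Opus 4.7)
The plan is to construct a highly symmetric two-point instance admitting multiple population optima, so that no panel decision function can consistently recover the one designated as $q^{\star}$. Concretely, I would take $\mathcal{X} = \mathcal{C} = \{a,b\}$ with $d(a,b) = 1$, and place $n/2$ agents at each of $a$ and $b$ for some even $n \geq 2k$. A direct computation gives $\social(a) = \social(b) = 1/2$, so $\socialopt = 1/2$ and both $a, b \in \argmin_{q \in \mathcal{C}} \social(q)$. Since the only nonzero distance in the space equals $2\cdot\socialopt$, the event $d(\widetilde{q}(S), q^{\star}) \geq 2\cdot\socialopt$ collapses to the cleaner event $\widetilde{q}(S) \neq q^{\star}$.

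The main step is a pigeonhole argument. For any panel decision function $\widetilde{q}\colon \mathcal{X}^k \to \mathcal{C}$, we have $\mathbb{P}_{S \sim \uniformpanel}[\widetilde{q}(S) = a] + \mathbb{P}_{S \sim \uniformpanel}[\widetilde{q}(S) = b] = 1$, so at least one of these probabilities is $\geq 1/2$. Taking $q^{\star}$ to be whichever of $a,b$ is chosen by $\widetilde{q}$ with probability at most $1/2$—a legitimate choice because $q^{\star}$ is only required to lie in $\argmin_{q\in\mathcal{C}}\social(q) = \{a,b\}$—then yields $\mathbb{P}_{S \sim \uniformpanel}[\widetilde{q}(S) \neq q^{\star}] \geq 1/2 \geq 1/4$, which is the desired bound.

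The main delicacy I anticipate is the quantifier around $q^{\star}$: the instance is fixed in the statement before $\widetilde{q}$, yet the argument selects $q^{\star}$ based on $\widetilde{q}$. I would handle this at the outset by emphasizing that $q^{\star}$ is not an additional parameter of the instance tuple $\langle \mathcal{X}, \mathcal{C}, d, (x_1, \ldots, x_n) \rangle$; rather, it is an arbitrary representative of the (non-singleton) set $\argmin_{q \in \mathcal{C}} \social(q)$, so any choice is valid for the theorem. Beyond this framing, no probabilistic concentration, reduction to a metric map, or optimization argument is needed—everything is driven by the symmetry of the two-point space and the exact equality $d(a,b) = 2\cdot\socialopt$.
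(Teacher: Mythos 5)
The gap is your handling of $q^\star$, and it is not a cosmetic quantifier issue but the crux of the theorem. In your two-point instance, every panel decision function outputs a facility that is itself a population optimum: its social cost is exactly $\socialopt$ and its distance to the set $\argmin_{q \in \mathcal{C}} \social(q)$ is zero. The "failure" you exhibit consists solely of measuring the distance to the \emph{other}, equally good, optimum, and you may designate that point as $q^\star$ only because you allow the benchmark to be chosen after seeing $\widetilde{q}$. That reading conflicts with how $q^\star$ functions throughout \Cref{sec:facility}: in \Cref{thm:tail_bound}, whose tightness this theorem is meant to certify, $q^\star$ is the population optimum attached to the instance, and the lower bound is supposed to show that the panel's choice can be genuinely far from it. Under any reading in which $q^\star$ is fixed together with the instance (or quantified universally over the tied optima), your construction fails outright: whichever of $a, b$ is designated, the constant rule $\widetilde{q} \equiv q^\star$ makes the probability $0$, not $\geq 1/4$. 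And under your "pick a valid optimum after seeing $\widetilde{q}$" reading, the statement loses its content as a lower bound—your rule is never suboptimal in any meaningful sense (indeed, on that reading even a degenerate instance with all agents at one point, so $\socialopt = 0$, would satisfy the inequality trivially).

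The paper avoids this by keeping the optimum unique. It takes $n = 2k+1$ agents in $\mathcal{X} = \mathcal{C} = [0,1]$, with $k$ agents at $0$, $k$ at $1$, and one pivotal agent whose location is the unique median and hence the unique $q^\star$, while $\socialopt \leq 1/2$. A panel of size $k$ misses the pivotal agent with probability greater than $1/2$, and conditioned on missing it the rule's output cannot depend on that agent's location; the adversary then places the pivotal agent on the side the rule is less inclined to select, so that with probability at least $1/4$ the chosen facility is at distance at least $2 \cdot \socialopt$ from the unique optimum. Note that the paper, too, completes the instance only after inspecting $\widetilde{q}$ (so its proof really delivers the "for every rule there is a hard instance" form of the claim), but the rule-dependence sits in choosing between two instances, each with an unambiguous benchmark—not in choosing which of two tied optima to call $q^\star$. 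If you want to salvage your two-point flavour, you must break the tie, e.g., by adding a pivotal agent as the paper does, and then reason about the panels that miss it.
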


\subsection{Welfare Maximization in Facility Location}    \label{sec:facility_efficiency}

This section examines the social cost minimization problem in facility location. To present our main result, we first introduce the concept of Assouad dimension, which provides an upper bound on the covering number of balls at varying scales.
For a formal definition of a covering in a metric space, see \Cref{def:covering}.

\begin{definition}[Assouad dimension]\label{def:assouad}
    We say that a metric space $(\mathcal{X}, d)$ has Assouad dimension $t$ if for every $r < R$, we can cover any ball $B(x, R)$ for $x \in \mathcal{X}$ with $C \cdot (R/r)^{t}$ balls of radius $r$ for some constant $C$.
\end{definition}

We are able to give panel complexity result for social cost minimization in facility location depending on the Assouad dimension of the metric space. 
Notably, any bounded subset of $\mathbb{R}^t$ has Assouad dimension $t$ under any $\ell_p$ norm.

\begin{restatable}[Panel Complexity Upper Bound]{theorem}{AssouadUpperBound}\label{thm:assouad}
   Let $\facilityinstance$ be a facility location instance, where $\mathcal{X}$ has Assouad dimension $t$. Let the panel size be:
    \begin{align*}
          k = O\left( \left( \frac{1}{\epsilon}\right)^2 \left( t \log \left( \frac{1}{\epsilon} \right) \right) \right).
    \end{align*}
    Then it holds that:
    \begin{align*}
        \mathbb{E}_{S \sim \uniformpanel} \left[ \social(\overline{q}(S)) \right] \leq (1+\epsilon) \cdot \socialopt.
    \end{align*}
\end{restatable}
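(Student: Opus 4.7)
The plan is to invoke \Cref{thm:reduction} (Multiplicative PAC Guarantee Suffices), which reduces the desired expectation bound to a high-probability multiplicative bound. Specifically, it suffices to show that for the claimed panel size, with probability at least $1 - \epsilon/5$ over $S \sim \uniformpanel$, we have $\social(\overline{q}(S)) \leq (1 + \epsilon/5) \cdot \socialopt$; applying \Cref{thm:reduction} with parameter $\epsilon/5$ then yields the desired $(1+\epsilon)$-multiplicative expectation bound.

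To establish this PAC guarantee, I would combine two high-probability events. First, I would apply \Cref{thm:tail_bound} with a constant $T > 2$ (say $T = 3$) and failure probability $\epsilon/10$; this costs only $O(\log(1/\epsilon))$ agents and ensures $d(\overline{q}(S), q^\star) \leq T \cdot \socialopt$. Second, using the Assouad dimension, I would construct an $r$-covering $c_1, \ldots, c_N$ of the ball around $q^\star$ of radius $T \cdot \socialopt$ with $r = \Theta(\epsilon \cdot \socialopt)$, which by \Cref{def:assouad} has size $N = O((1/\epsilon)^t)$.

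The crux of the argument is the choice of features to which \Cref{thm:many_features} is applied. Since $\socialopt$ may be arbitrarily small, the naive choice $f_j(i) = d(c_j, x_i)$ is insufficient: the resulting representativeness error is additive in absolute units and cannot be turned into an error multiplicative in $\socialopt$ without paying $1/\socialopt$ in the panel size. Instead, I would normalize differences of distances to the population optimum by defining
\[
\widetilde{g}_j(i) = \frac{d(c_j, x_i) - d(q^\star, x_i) + T \cdot \socialopt}{2T \cdot \socialopt},
\]
which lies in $[0,1]$ by the triangle inequality applied to $d(\cdot, x_i)$. Applying \Cref{thm:many_features} to these $N$ features with representativeness parameter $\alpha = \Theta(\epsilon)$ and failure probability $\epsilon/10$ requires precisely $k = O((1/\epsilon)^2 (t \log(1/\epsilon) + \log(1/\epsilon))) = O((1/\epsilon)^2 \cdot t \log(1/\epsilon))$ agents, matching the claimed panel complexity.

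On the intersection of both events, $\alpha$-representativeness of $\widetilde{g}_j$ together with \Cref{lem:wasserstein_average} yields $|(\social(c_j) - \socialopt) - (\panelcost(c_j, S) - \panelcost(q^\star, S))| \leq 2T \alpha \cdot \socialopt$ for every cover point $c_j$. Choosing $c_j$ to be a cover center within distance $r$ of $\overline{q}(S)$, the triangle inequality for the maps $\social(\cdot)$ and $\panelcost(\cdot, S)$ (both $1$-Lipschitz in their first argument) combined with the panel-optimality of $\overline{q}(S)$ yields
\[
\social(\overline{q}(S)) - \socialopt \;\leq\; 2r + 2T \alpha \cdot \socialopt.
\]
Tuning $r$ and $\alpha$ so that the right-hand side is at most $(\epsilon/5) \cdot \socialopt$ completes the proof. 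The main obstacle is precisely the feature normalization: without subtracting $d(q^\star, x_i)$ and rescaling by $\socialopt$, the representativeness error would be additive in the range of $d$ and thus incompatible with a multiplicative $(1+\epsilon)$ guarantee for small $\socialopt$; the reduction via \Cref{thm:tail_bound} is what makes the bound $|g_j(i)| \leq T \cdot \socialopt$ available in the first place.
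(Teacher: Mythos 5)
Your proposal is correct and follows essentially the same route as the paper's proof: reduce to a PAC guarantee via \Cref{thm:reduction}, localize $\overline{q}(S)$ in $B(q^\star, T\cdot\socialopt)$ via \Cref{thm:tail_bound}, cover that ball with $O((1/\epsilon)^t)$ centers using the Assouad dimension, apply \Cref{thm:many_features} to the normalized difference-of-distances features, and finish by Lipschitz chaining with panel-optimality. The only (minor, favorable) difference is that you explicitly rescale the features affinely into $[0,1]$, whereas the paper works with $f_j(i) = (1/\socialopt)(d(x_i,q_j)-d(x_i,q^\star)) \in [-T,T]$ and implicitly absorbs the constant $T$ into the representativeness parameter.
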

\begin{proof}
Throughout this proof, we denote $\opt := \socialopt$.
Let $T$ be a sufficiently large constant to ensure that \Cref{thm:tail_bound} applies.
   Let $q^\star$ be the optimal facility location. Since $\mathcal{X}$ has Assouad dimension $t$, there is some sequence $q_1, \ldots, q_K \in B(q^\star, T \cdot \opt)$ with $q_1 = q^\star$ such that $B(q^\star, T \cdot \opt) \subseteq \bigcup_{j=1}^K B(q_j, (\epsilon/3) \cdot \opt)$ where $K = O((1/\epsilon)^t)$.

    For $1 \leq j \leq K$, let
    $X_j = \social(q_j, S) - \opt$ and $Y_j = \panelcost(q_j,S) - \panelcost(q^\star, S) + (\epsilon/3) \cdot \opt$.
    Let us also define the following events:
    \begin{align*}
        A &= \{S \subseteq [n] \,:\, \overline{q}(S) \in B(q^\star, T \cdot \opt) \} \\
        B_j &= \{ S \subseteq [n] \,:\,  X_j  \leq Y_j \} 
    \end{align*}
    for $1 \leq j \leq K$.
    We will argue that if $A \cap B_1 \cap \ldots \cap B_k$  holds, then $\social(\overline{q}(S)) \leq (1+\epsilon) \cdot \opt$ and that $\mathbb{P}[A \cap B_1 \cap \ldots \cap B_K] \leq \epsilon$. From this the result follows by  \Cref{thm:reduction}.

    First, observe that $\mathbb{P}_{S \sim \uniformpanel}[A] \geq 1-\epsilon/2$ by \Cref{thm:tail_bound}.
    For each $j \in [K]$ and $i \in [n]$, define a feature $f_j(i) = (1/\opt) \cdot (d(x_i,q_j) - d(x_i,q^\star))$. Note that:
    \begin{align*}
        (1/|S|) \cdot \sum_{i\in S} f_j(i) &= (1/\opt) \cdot (\panelcost(q_j,S) - \panelcost(q^\star, S)) \\
                (1/n) \cdot \sum_{i \in [n]} f_j(i) &= (1/\opt) \cdot (\social(q_j, S) - \opt).
    \end{align*}
    
    If $S$ is $(\epsilon/3)$-representative with respect to feature $f_j$, then $S \subseteq B_j$ because  $(1/|S|) \cdot \sum_{i \in S} f_j(i) \leq (1/n) \cdot \sum_{i \in [n]} f_j(i) + \epsilon/3$. Multiplying both sides by $\opt$ establishes the claim.  
   Moreover, since \( d(q^\star, q_i) \leq T \cdot \opt \), it follows that  
 $|f_j(i)| \leq (1/\opt) \cdot (T \cdot \opt) = T$.
    By \Cref{thm:many_features}, the panel $S$ is therefore $(\epsilon/3)$-representative with respect to all $K$ features simultaneously, implying that  $\mathbb{P}_{S \sim \uniformpanel}[B_1 \cap \ldots \cap B_k] \geq 1-\epsilon/2$.

    Finally, consider any panel $S \in A \cap B_1 \cap \ldots \cap B_k$. 
    Let $j$ be such that $\overline{q}(S) \in B(q_j, (\epsilon/3) \cdot \opt)$. Observe that the following derivation holds:
    \begin{align*}
        &\social(\overline{q}(S)) - \opt \\
        &\leq \social(q_j) - \opt + (1/3) \cdot \epsilon \cdot \opt  \\
        &\leq \panelcost(q_j,S) - \panelcost(q^\star,S) + (2/3) \cdot \epsilon \cdot \opt  \\
        &\leq \panelcost(\overline{q}(S),S) - \panelcost(q^\star,S) +  \epsilon \cdot \opt  \\
        &= \panelopt - \panelcost(q^\star,S) +  \epsilon \cdot \opt \\
        &\leq 0 + \epsilon \cdot \opt 
    \end{align*}
   The first and third inequalities follow from $d(\overline{q}(S), q_j) \leq (\epsilon/3) \cdot \opt$, the second inequality follows from $S \in B_j$, and the fourth inequality follows from the choice of $\overline{q}(S)$.  
    This concludes the proof.
\end{proof}

We also establish a nearly matching lower bound for the metric space $([0,1]^t, \ell_{\infty})$, which has an Assouad dimension of $t$.

\begin{restatable}[Panel Complexity Lower Bound]{theorem}{AssouadLowerBound}\label{thm:assouad_lower}
   Fix $\mathcal{X} = [0,1]^t$ with $d = \ell_{\infty}$. Let $\mathcal{C} = [1/4, 3/4]^t$ and $0 < \epsilon < 1/64$.
    Suppose there exists a panel decision function 
   $\widetilde{q} : \mathcal{X}^k \to \mathcal{C}$ for some $k \geq 1$
   such that, for any facility location instance
   $\facilityinstance$, the following holds:
    \begin{align*}
        \mathbb{E}_{S \sim \uniformpanel} \left[ \social(\widetilde{q}(S)))  \right] \leq (1+\epsilon) \cdot \socialopt.
    \end{align*}
    Then it must be that $k = \Omega((1/\epsilon)^2 \cdot t)$.
\end{restatable}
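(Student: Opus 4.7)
The plan is to reduce to \Cref{lem:sample_lower_bound} on camouflaged populations, following the same template as the proof of \Cref{thm:pb_lower_bound}. I will set $h = t$ and $w = \Theta(1/\epsilon)$, with the constant hidden in $\Theta$ chosen small enough for a subsequent Markov step to clear the success probability $6/7$ required by the lemma. For each $z \in \{-1,+1\}^t$, I build a facility location instance $\mathcal{I}(z)$ on $\mathcal{X} = [0,1]^t$ by embedding the camouflage geometrically: every agent is placed at a point whose coordinates are all $1/2$ except in one distinguished coordinate $j \in [t]$, where the value is either $0$ or $1$. The $2h$ parts of the partition from \Cref{def:almost_uniform} are sized so that a $(1+z_j/w)/(2t)$ fraction of the population has $x_{i,j}=1$ and a $(1-z_j/w)/(2t)$ fraction has $x_{i,j}=0$ in that distinguished dimension.

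The key structural observation is that the restriction to $\mathcal{C} = [1/4,3/4]^t$ forces $\|q - x_i\|_\infty$ to be attained at the distinguished coordinate of agent $i$: indeed $|q_j - x_{i,j}| \geq 1/4 \geq |q_{j'} - 1/2|$ for every $j' \neq j$. This causes the social cost to decompose dimension-by-dimension, and a direct computation gives
\[
\social(q) \;=\; \tfrac{1}{2} \;-\; \frac{\langle z, s\rangle}{4tw}, \qquad \text{where } s_j := 4(q_j - \tfrac{1}{2}) \in [-1,1].
\]
Hence $\socialopt = \tfrac{1}{2} - \tfrac{1}{4w}$, attained at $s = z$, and the hypothesis $\mathbb{E}[\social(\widetilde{q}(S))] \leq (1+\epsilon) \cdot \socialopt$ rearranges to $\mathbb{E}[\langle z, s(S)\rangle] \geq t - (2w-1) t \epsilon$, where $s(S)$ is the image of $\widetilde{q}(S)$ under the affine rescaling above.

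To convert this expectation bound into the Hamming-type hypothesis of \Cref{lem:sample_lower_bound}, I define $g_j(s) := \operatorname{sign}(s_j) \in \{-1,+1\}$. The pointwise inequality $z_j s_j \leq \mathbbm{1}[g_j(s) = z_j]$ (its failure forces $z_j s_j \leq 0$) yields $\mathbb{E}[\,\#\{j : g_j(s(S)) \neq z_j\}\,] \leq 2w t \epsilon$. Tuning $w$ so that $16 w \epsilon \leq 1/7$ (e.g.\ $w = \lfloor 1/(112\epsilon)\rfloor$) and applying Markov's inequality gives $\mathbb{P}[\|g(s(S)) - z\|_1 \leq t/4] \geq 6/7$ for every $z \in \{-1,+1\}^t$; \Cref{lem:sample_lower_bound} then delivers $k = \Omega(h w^2) = \Omega(t/\epsilon^2)$. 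The main obstacle I anticipate is the delicate constant-tuning: $w$ must be chosen large enough that the per-coordinate camouflage gap $1/(4w)$ is meaningful against the $(1+\epsilon)\socialopt$ slack, yet small enough for Markov to clear the $6/7$ threshold, and the very small values of $\epsilon$ near the boundary of the range $\epsilon < 1/64$ may need to be absorbed into the $\Omega$ constant.
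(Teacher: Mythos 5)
Your proposal is correct and follows essentially the same route as the paper: the identical embedding of camouflaged populations (all coordinates $1/2$ except one distinguished coordinate at $0$ or $1$, with the $(1\pm z_j/w)/(2t)$ split), the same sign-based decoder $g$, and the same reduction to \Cref{lem:sample_lower_bound} with $h=t$, $w=\Theta(1/\epsilon)$. The only differences are cosmetic bookkeeping—you exploit $\mathcal{C}=[1/4,3/4]^t$ to get an exact linear cost formula and apply Markov to the disagreement count, whereas the paper uses per-coordinate inequalities and bounds the bad-event probability directly—and the boundary issue you flag (ensuring $w\geq 2$ for $\epsilon$ near $1/64$) is present in the paper's own choice $w=\lfloor 1/(64\epsilon)\rfloor$ as well.
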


\subsection{Extension to Multiple Facilities}\label{sec:multifacilities}

In this section, we consider a setting where we can place facilities at multiple locations. An instance in this case is defined by $\committeeinstance$, where, as before, $(\mathcal{X}, d)$ is a metric space, $\mathcal{C} \subseteq \mathcal{X}$ is the set of possible facility locations, and $(x_1, \ldots, x_n)$ are the ideal locations of the agents in $[n]$. The additional parameter $\ell \geq 1$ specifies the desired number of facilities.  

Given a set $y = (y_1, \ldots, y_{\ell})$ of $\ell$ facility locations, it is natural to define the cost of agent $i$ as the distance to the closest facility. That is, we define $\cost_i(y) = \min_{j \in [\ell]} d(x_i, y_j)$. The social and panel costs, $\social(q)$ and $\panelcost(q, S)$, along with their respective optimal values, $\socialopt$ and $\panelopt$, are defined analogously to \Cref{def:socialfacilitycost}. 
We use $\overline{y}(S) \in \mathcal{C}^{\ell}$ and $y^* \in \mathcal{C}^{\ell}$ to denote the panel-optimal and social-optimal collection of facility locations, respectively.

Our first result applies to general metric spaces and shows that if we treat the locations of individuals in the population as a feature, the expected social cost can be bounded as follows.

\begin{restatable}{lemma}{LemmaMultiFacilities}\label{lem:multifacilities}
Let $\committeeinstance$ be a multiple facility location instance.  
Define $f(i) = x_i$ as the feature representing each agent's location.  
Let $\phi_f^{[n]}$ denote the population distribution of this feature and $\phi_f^S$ the panel distribution. Then, we have:
  \begin{align*}
        \mathbb{E}_{S \sim \uniformpanel} \left[ \social(\overline{y}(S)) \right] \leq \socialopt + \mathbb{E}_{S \sim \uniformpanel} \left[ W(\phi_f^{[n]}, \phi^S_f) \right]
    \end{align*}    
\end{restatable}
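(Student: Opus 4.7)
The plan is to view the social and panel costs as expectations of a Lipschitz function against the population and panel distributions, respectively, and then combine a Wasserstein-based comparison with the optimality of $\overline{y}(S)$ on the panel.

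First, I would observe that for any fixed collection of facility locations $y = (y_1, \ldots, y_\ell) \in \mathcal{C}^\ell$, the function $g_y(x) = \min_{j \in [\ell]} d(x, y_j)$ is $1$-Lipschitz on $(\mathcal{X}, d)$, since $|g_y(x) - g_y(x')| \leq d(x, x')$ by the triangle inequality applied to the nearest facility. Moreover, by construction of $f$,
\[
\social(y) = \mathbb{E}_{x \sim \phi_f^{[n]}}[g_y(x)] \quad \text{and} \quad \panelcost(y, S) = \mathbb{E}_{x \sim \phi_f^S}[g_y(x)].
\]

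Next, I would use a coupling argument (equivalent to invoking Kantorovich--Rubinstein duality for $1$-Lipschitz functions). Let $\gamma \in \Gamma(\phi_f^{[n]}, \phi_f^S)$ be an optimal coupling achieving $W(\phi_f^{[n]}, \phi_f^S)$. Then, for any fixed $S$,
\[
\social(\overline{y}(S)) = \mathbb{E}_{(x,x') \sim \gamma}[g_{\overline{y}(S)}(x)] \leq \mathbb{E}_{(x,x') \sim \gamma}\bigl[d(x,x') + g_{\overline{y}(S)}(x')\bigr] = W(\phi_f^{[n]}, \phi_f^S) + \panelcost(\overline{y}(S), S),
\]
where the inequality uses the $1$-Lipschitz property of $g_{\overline{y}(S)}$. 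By optimality of $\overline{y}(S)$ on the panel, $\panelcost(\overline{y}(S), S) \leq \panelcost(y^\star, S)$, yielding
\[
\social(\overline{y}(S)) \leq W(\phi_f^{[n]}, \phi_f^S) + \panelcost(y^\star, S).
\]

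Finally, I would take expectations over $S \sim \uniformpanel$. The key identity is that uniform sampling without replacement is unbiased at the level of the mean: each $i \in [n]$ appears in $S$ with probability $k/n$, so
\[
\mathbb{E}_{S \sim \uniformpanel}[\panelcost(y^\star, S)] = \mathbb{E}_{S \sim \uniformpanel}\!\left[\frac{1}{k}\sum_{i \in S} \cost_i(y^\star)\right] = \frac{1}{n}\sum_{i=1}^n \cost_i(y^\star) = \social(y^\star) = \socialopt.
\]
Combining the two displays yields the claim. There is no real obstacle here; the only subtlety is resisting the naive double application of the $1$-Lipschitz bound (once for $\overline{y}(S)$ and once for $y^\star$), which would give a factor of $2$ on the Wasserstein term. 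The trick is to keep $y^\star$ inside a panel cost and use the unbiasedness of uniform sampling to convert it exactly into $\socialopt$ without any Wasserstein cost.
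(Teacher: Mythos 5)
Your proof is correct and follows essentially the same route as the paper's: both rest on the optimal coupling between $\phi_f^{[n]}$ and $\phi_f^S$, a triangle inequality through each panel point's nearest facility (which you package as the $1$-Lipschitzness of $x \mapsto \min_{j} d(x, y_j)$, while the paper phrases it via a pushforward of the coupling onto the facility locations), and the unbiasedness of sampling without replacement to turn $\panelcost(y^\star, S)$ into $\socialopt$. Your presentation is a cleaner rendering of the same argument, with no gaps.
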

\begin{proof}
Let $S$ be any panel of size $k$ and $\overline{y}(S)$ the set of facilities minimizing the social cost for $S$. In the following calculations, we will consider $S$ fixed and write $\overline{y}$ in place of $\overline{y}(S)$. Let us denote, as usual, with $\delta_x$ the Dirac mass at point $x \in \mathcal{X}$. Our first remark is that, for any probability distribution $\lambda = \sum_{j \in [\ell]} \lambda_j \delta_{y_j}$ supported on the $\ell$ positions of the facilities in $\overline{y}$, the following holds: $\social(\overline{y}) \leq W(\phi_f^{[n]}, \lambda)$.

Indeed, $\cost_i(\overline{y})$ is equal to $\min_{j\in[\ell]} d(x_i, \overline{y}_j)$, and, as a consequence, $\cost_i(\overline{y}) \le \sum_j \lambda_j \cdot d(x_i, y_j)$ since this is a convex combination of positive values. Moreover, if $\gamma$ is a coupling of $\phi, \lambda$, by definition we have that: $1/n = \sum_j \gamma(x_i, y_j)$. Since this holds for every coupling $\gamma$, it holds for the minimal one as well, implying the above inequality.
Let us define the following function:
$\overline{\alpha}: [n] \times S \rightarrow \mathcal{X}^2$ such that $\overline{\alpha}(i,j) = (x_i, \alpha(z_j))$, where $\alpha(z_j)$ is a facility that realizes the minimum distance from $z_j$, namely $d(z_j, \alpha(z_j)) = \cost_{z_j}(\overline{y})$ (we wrote $z_j$ in place of the corresponding agent, to avoid confusion with the indexes). 
Now, call $\alpha(\phi^S_f)$ the distribution defined by pushing forward the uniform distribution over the $k$ members of panel $S$ through $\alpha$. If $\gamma$ is the coupling realizing the minimum in the definition of $W(\phi_f^{[n]}, \phi_f^S)$, then the pushforward measure $\overline{\alpha}(\gamma)$ is a valid coupling between $\phi$ and  $\alpha(\phi_f^S)$ and the following holds:
\begin{align*}
    \social(\overline{y}) &\le W\left(\phi_f^{[n]}, \alpha(\phi_f^S)\right)\\
    &\le \sum_{i\in[n]}\sum_{j\in[\ell]} d(x_i, \overline{y}_j) \cdot \overline{\alpha}(\gamma)(x_i, \overline{y}_j)\\
    &= \sum_{i\in[n]}\sum_{j\in[k]} d(x_i, \alpha(z_j)) \cdot \gamma(x_i, z_j)\\
    &\le \sum_{i\in[n]}\sum_{j\in[k]} d(x_i, z_j) \cdot \gamma(x_i, z_j) + \sum_{i\in[n]}\sum_{j\in[k]} d(z_j, \alpha(z_j)) \cdot \gamma(x_i, z_j)\\
    &= W(\phi_f^{[n]}, \phi_f^S) + \sum_{i\in[n]}\sum_{j\in[k]} d(z_j, \alpha(z_j)) \cdot \gamma(x_i, z_j)\\
    &= W(\phi_f^{[n]}, \phi_f^S) + \frac{1}{k} \sum_{j\in[k]} d(z_j, \alpha(z_j))\\
    &= W(\phi_f^{[n]}, \phi_f^S) + \panelopt(S),
\end{align*}
where $\panelopt(S)=(1/k) \cdot \sum_{j \in S} \cost_j(\overline{y})$.
Now, by taking the expectation of the quantities above over the choice of the panel $S$, we get:
\begin{equation*}
    \mathbb{E}_{S \sim \uniformpanel} \left[ \social(\overline{y}(S)) \right] \leq \mathbb{E}_{S \sim \uniformpanel}\left[W(\phi_f^{[n]}, \phi_f^S)\right] + \mathbb{E}_{S \sim \uniformpanel}\left[\panelopt(S)\right]
\end{equation*}
It remains to prove that the expected value of $\panelopt(S)$ is less than $\socialopt$. Indeed, by comparing the panel cost of $\overline{y}$ with that of the social optimum $y^*$, we can write the following:
\begin{align*}
   \mathbb{E}_{S \sim \uniformpanel} [\panelopt(S)] &\le  \mathbb{E}_{S \sim \uniformpanel} \left[\panelcost(y^\star, S) \right]\\ 
   &= \frac{1}{k} \sum_{i \in [n]} \left({\binom{n-1}{k-1}} / {\binom{n}{k}} \right) \cdot \cost_i(y^\star) = \socialopt,
\end{align*}
which concludes the proof.
\end{proof}

This lemma is closely tied to our notion of representativeness. In particular, if a panel is $\epsilon$-representative, it yields an additive bound on the expected social cost. The next result leverages Lemma~\ref{lem:multifacilities} to upper bound the panel complexity required to obtain such an additive bound in the special case where $\mathcal{X} = [0,1]$.
\begin{restatable}[Panel Complexity Upper Bound]{theorem}{PanelComplexityMultifacilities}\label{thm:line}
    Let $\committeeinstance$ be a multiple facility location instance with $\mathcal{X}=[0,1]$. Let the panel size be $k = O(1/\epsilon^{2})$. Then it holds that:
    \begin{align*}
        \mathbb{E}_{S \sim \uniformpanel} \left[ \social(\overline{y}(S)) \right] \leq \socialopt + \epsilon.
    \end{align*}
\end{restatable}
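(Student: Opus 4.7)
The plan is to reduce the statement directly to a bound on the expected Wasserstein distance between the population and panel distributions of the agents' locations, and then invoke the machinery already developed in the framework section. The feature to consider is the identity-type feature $f : [n] \to [0,1]$ given by $f(i) = x_i$, which is exactly the feature used in \Cref{lem:multifacilities}.

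First, I would apply \Cref{lem:multifacilities} to the instance, which immediately gives
\begin{equation*}
    \mathbb{E}_{S \sim \uniformpanel} \left[ \social(\overline{y}(S)) \right] \;\leq\; \socialopt \;+\; \mathbb{E}_{S \sim \uniformpanel} \left[ W(\phi_f^{[n]}, \phi_f^S) \right].
\end{equation*}
So the entire task reduces to showing that for $k = O(1/\epsilon^2)$ we have $\mathbb{E}_{S \sim \uniformpanel}[W(\phi_f^{[n]}, \phi_f^S)] \leq \epsilon$. This is a statement purely about how well a uniform random panel approximates the population distribution of a real-valued feature in Wasserstein distance, and it is precisely the setting handled in \Cref{sec:representative}.

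Next, I would invoke \Cref{lem:W1expectation} to pass from sampling without replacement to sampling with replacement, bounding the quantity of interest by $\mathbb{E}_{S \sim \mathcal{R}_{k,n}}[W(\phi_f^{[n]}, \phi_f^S)]$. For this i.i.d.\ empirical-distribution quantity on $[0,1]$, the classical one-dimensional Wasserstein rate (the same bound cited in the proof of \Cref{thm:many_features} via \cite[Proposition 2.1]{chewi2024statisticaloptimaltransport}) yields $\mathbb{E}_{S \sim \mathcal{R}_{k,n}}[W(\phi_f^{[n]}, \phi_f^S)] = O(1/\sqrt{k})$. Choosing $k = O(1/\epsilon^2)$ with a sufficiently large constant thus gives the required bound of $\epsilon$, which when combined with the inequality from \Cref{lem:multifacilities} finishes the proof.

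There is essentially no obstacle, since both of the main pieces — the reduction from social cost to Wasserstein distance, and the $O(1/\sqrt{k})$ one-dimensional empirical Wasserstein rate under sampling without replacement — are already packaged as lemmas. The only thing worth being careful about is that \Cref{lem:multifacilities} is stated for the \emph{optimal} panel facilities $\overline{y}(S)$, so the deliberation assumption in the theorem (that the panel picks optimal facilities) is exactly what is needed; and that the constant hidden in $O(1/\epsilon^2)$ must be chosen to absorb the constant in the Wasserstein rate. Notably, the bound is independent of both the population size $n$ and the number of facilities $\ell$, because \Cref{lem:multifacilities} already absorbs the dependence on $\ell$ into the single Wasserstein term.
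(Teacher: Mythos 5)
Your proposal is correct and follows essentially the same route as the paper's proof: apply \Cref{lem:multifacilities} with the feature $f(i)=x_i$, reduce to i.i.d.\ sampling via \Cref{lem:W1expectation}, and bound the expected one-dimensional Wasserstein distance by $O(1/\sqrt{k})$ using \cite[Proposition 2.1]{chewi2024statisticaloptimaltransport}. No gaps.
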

The proof of the result above relies on Lemma~\ref{lem:W1expectation} and on known results regarding the Wasserstein metric and is deferred to Appendix~\ref{apx:multifacilities}.
Finally, we show that in the multiple facility location problem, achieving a multiplicative error guarantee is impossible.  
\begin{restatable}[Impossibility for Multiplicative Error Guarantee]{theorem}{ImpossibilityMultiFacility}
There exists a metric space $\mathcal{X}$, a set of alternatives $\mathcal{C}$, a distance function $d$, and a number of facilities $\ell$ such that for any $k < n$ and any panel decision function $\widetilde{q} : \mathcal{X}^k \to \mathcal{C}^{\ell}$, there exist locations $x_1,\ldots, x_n$ such that, for any $\rho > 1$:
\begin{align*}
    \mathbb{E}_{S \sim \uniformpanel}\left[ \social(\widetilde{q}(S)) \right] > \rho \cdot \socialopt.
\end{align*}
\end{restatable}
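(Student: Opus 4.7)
The plan is to drive the ratio $\mathbb{E}[\social(\widetilde{q}(S))]/\socialopt$ to infinity by exhibiting an instance with $\socialopt = 0$ and strictly positive expected panel-decision cost, which yields the strict inequality against every $\rho > 1$ simultaneously. I will take $\mathcal{X} = \mathcal{C} = [0,1]$ under the Euclidean metric with $\ell = 2$. The family of instances places $n-1$ agents at the location $0$ together with a single ``outlier'' agent at a location $x \in (0,1]$ that I will tune adversarially based on $\widetilde{q}$. Since the agent support has size at most two and there are two facilities available, placing them at $\{0,x\}$ gives $\socialopt = 0$.

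The key probabilistic observation is that since $k < n$, a uniformly random panel $S \sim \uniformpanel$ misses the outlier with probability $\binom{n-1}{k}/\binom{n}{k} = (n-k)/n > 0$. On this event, $\widetilde{q}$ is evaluated at the all-zeros tuple and outputs a fixed pair $(y_1, y_2) := \widetilde{q}(0, \ldots, 0)$ independent of $x$. Conditional on missing the outlier, the social cost equals
\[
\frac{n-1}{n}\,\min(y_1, y_2) \;+\; \frac{1}{n}\,\min(|y_1 - x|,\, |y_2 - x|),
\]
and since the complementary event contributes a nonnegative amount to the expectation, it suffices to choose $x \in (0,1]$ making this conditional quantity strictly positive.

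A short case split on $(y_1, y_2)$ closes the argument. If $\min(y_1, y_2) > 0$, any $x \in (0,1]$ already makes the first term positive. Otherwise, assume without loss of generality that $y_1 = 0$; then taking $x = y_2/2$ when $y_2 > 0$, and $x = 1$ when $y_2 = 0$, yields $\min(x, |y_2 - x|) > 0$. In every case the conditional cost is strictly positive, so $\mathbb{E}_{S \sim \uniformpanel}[\social(\widetilde{q}(S))] > 0 = \rho \cdot \socialopt$ for every $\rho > 1$. The only delicate point is conceptual rather than technical: one must recognize that $\ell \geq 2$ is essential (compare \Cref{thm:reduction}, which already gives a multiplicative approximation for a single facility) and that $\widetilde{q}$ cannot adapt its output to an outlier it has not seen. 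Once this viewpoint is adopted, the computation reduces to the elementary case analysis above.
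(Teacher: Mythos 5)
Your proof is correct and follows essentially the same strategy as the paper's: a population of $n-1$ agents at $0$ plus a single outlier, so that $\socialopt = 0$ with $\ell = 2$ facilities, while the panel misses the outlier with probability $(n-k)/n > 0$ and the fixed output $\widetilde{q}(0,\ldots,0)$ then incurs strictly positive social cost. The only cosmetic difference is that the paper fixes a finite candidate set $\mathcal{C} = \{0,1/2,1\}$ and places the outlier at a point the fixed output cannot cover, whereas you take $\mathcal{C} = [0,1]$ and tune the outlier's location adversarially via a short case analysis—both are valid.
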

\begin{proof}
Consider $\mathcal{X} = [0,1]$, $\mathcal{C} = \{0, 1/2, 1\}$, and $\ell = 2$.  
Fix any panel decision function $\widetilde{q} : \mathcal{X}^k \to \mathcal{C}^{\ell}$.  
Let $(q_1, q_2) = \widetilde{q}(0, 0, \dots, 0)$.  
Without loss of generality, assume that $q_1 \neq 1$ and $q_2 \neq 1$.  

Consider a population of agents located at $x_1 = \dots = x_{n-1} = 0$ and $x_n = 1$.  
Note that $\opt = 0$ since placing one facility at $0$ and another at $1$ achieves zero cost.  
Since the panel decision function selects $(0, 1/2)$ with nonzero probability, the result follows.  
\end{proof}

\section{Conclusion and Open Problems}\label{sec:conclusion}

In this work, we developed a framework for analyzing panel complexity and provided efficiency and fairness guarantees for two social choice problems: participatory budgeting and facility location. Our results open several interesting directions for future research.

First, throughout this work, we focused on the case where the panel is selected uniformly at random from the population. 
However, in many practical settings, stratified sampling is necessary. 
Extending the panel complexity analysis to non-uniform lotteries over panels remains an open problem.
More fundamentally, our notion of representativeness based on the Wasserstein metric may prove a useful way of framing representativeness in contexts transcending uniform sampling.

Second, we applied our framework to participatory budgeting and facility location, but many other important social choice problems remain to be studied, such as indivisible participatory budgeting \cite{rey2023computational}.
\citet*{MiSh20} give a result in the setting of centroid clustering which can be interpreted as a panel complexity of fairness in the multiple facilities extension we study in \Cref{sec:multifacilities}. 
Their notion of fairness, proportional fairness, was first defined by \citet*{CFLM19} and is inspired by the notion of core.
The result of \citet*{MiSh20} applies  to Euclidean space with $\ell_2$ distances, 
and uses a VC-dimension bound. Since this technique does not easily extend to other metrics, the panel complexity of fairness in multiple facility location on general metric spaces is an interesting open problem.

Third, we point out that our bounds do not depend on the population size, $n$. 
An open question is whether they can be improved when the population size is bounded.
Finally, we do not explicitly define the 
{panel decision functions,}
assuming only that they satisfy certain efficiency or fairness guarantees. 
Understanding how specific panel decision functions,
such as concrete voting rules for participatory budgeting or facility location,
impact panel complexity is an interesting direction for future work. 
We note, however, that our lower bounds on panel complexity hold for any 
{panel decision function}.

\bibliographystyle{abbrvnat}
\bibliography{references}

\begin{thebibliography}{53}
\providecommand{\natexlab}[1]{#1}
\providecommand{\url}[1]{\texttt{#1}}
\expandafter\ifx\csname urlstyle\endcsname\relax
  \providecommand{\doi}[1]{doi: #1}\else
  \providecommand{\doi}{doi: \begingroup \urlstyle{rm}\Url}\fi

\bibitem[Anagnostides et~al.(2022)Anagnostides, Fotakis, and Patsilinakos]{anagnostides2022metric}
I.~Anagnostides, D.~Fotakis, and P.~Patsilinakos.
\newblock Metric-distortion bounds under limited information.
\newblock \emph{J. Artif. Intell. Res.}, 74:\penalty0 1449--1483, 2022.

\bibitem[Anshelevich and Postl(2017)]{DBLP:conf/ijcai/AnshelevichP16}
E.~Anshelevich and J.~Postl.
\newblock Randomized social choice functions under metric preferences.
\newblock \emph{J. Artif. Intell. Res.}, 58:\penalty0 797--827, 2017.

\bibitem[Arjovsky et~al.(2017)Arjovsky, Chintala, and Bottou]{WGAN17}
M.~Arjovsky, S.~Chintala, and L.~Bottou.
\newblock Wasserstein generative adversarial networks.
\newblock In \emph{{ICML}}, volume~70 of \emph{Proceedings of Machine Learning Research}, pages 214--223. {PMLR}, 2017.

\bibitem[Aziz and Shah(2021)]{AzSh21a}
H.~Aziz and N.~Shah.
\newblock Participatory budgeting: Models and approaches.
\newblock \emph{Pathways Between Social Science and Computational Social Science: Theories, Methods, and Interpretations}, pages 215--236, 2021.

\bibitem[Aziz et~al.(2024)Aziz, Lee, Morota~Chu, and Vollen]{aziz2023proportionally}
H.~Aziz, B.~E. Lee, S.~Morota~Chu, and J.~Vollen.
\newblock Proportionally representative clustering.
\newblock In \emph{WINE}, 2024.

\bibitem[Baharav and Flanigan(2024)]{BaharavF24}
C.~Baharav and B.~Flanigan.
\newblock Fair, manipulation-robust, and transparent sortition.
\newblock In \emph{{EC}}, pages 756--775. {ACM}, 2024.

\bibitem[Benad{\`{e}} et~al.(2019)Benad{\`{e}}, G{\"{o}}lz, and Procaccia]{BenadeGP19}
G.~Benad{\`{e}}, P.~G{\"{o}}lz, and A.~D. Procaccia.
\newblock No stratification without representation.
\newblock In \emph{{EC}}, pages 281--314. {ACM}, 2019.

\bibitem[Caragiannis and Fehrs(2022)]{DBLP:conf/aaai/CaragiannisF22}
I.~Caragiannis and K.~Fehrs.
\newblock The complexity of learning approval-based multiwinner voting rules.
\newblock In \emph{{AAAI}}, pages 4925--4932. {AAAI} Press, 2022.

\bibitem[Caragiannis et~al.(2024)Caragiannis, Micha, and Peters]{DBLP:conf/icml/CaragiannisM024}
I.~Caragiannis, E.~Micha, and J.~Peters.
\newblock Can a few decide for many? the metric distortion of sortition.
\newblock In \emph{{ICML}}. OpenReview.net, 2024.

\bibitem[Chan et~al.(2021)Chan, Filos{-}Ratsikas, Li, Li, and Wang]{chan2021mechanismdesignfacilitylocation}
H.~Chan, A.~Filos{-}Ratsikas, B.~Li, M.~Li, and C.~Wang.
\newblock Mechanism design for facility location problems: {A} survey.
\newblock In \emph{{IJCAI}}, pages 4356--4365. ijcai.org, 2021.

\bibitem[Chen et~al.(2019)Chen, Fain, Lyu, and Munagala]{CFLM19}
X.~Chen, B.~Fain, L.~Lyu, and K.~Munagala.
\newblock Proportionally fair clustering.
\newblock In \emph{International Conference on Machine Learning (ICML)}, pages 1032--1041. PMLR, 2019.

\bibitem[Cheng et~al.(2017)Cheng, Dughmi, and Kempe]{ChengDK17}
Y.~Cheng, S.~Dughmi, and D.~Kempe.
\newblock Of the people: Voting is more effective with representative candidates.
\newblock In \emph{{EC}}, pages 305--322. {ACM}, 2017.

\bibitem[Cheng et~al.(2018)Cheng, Dughmi, and Kempe]{Cheng2018Distortion}
Y.~Cheng, S.~Dughmi, and D.~Kempe.
\newblock On the distortion of voting with multiple representative candidates.
\newblock In \emph{{AAAI}}, pages 973--980. {AAAI} Press, 2018.

\bibitem[Chewi et~al.(2024)Chewi, Niles-Weed, and Rigollet]{chewi2024statisticaloptimaltransport}
S.~Chewi, J.~Niles-Weed, and P.~Rigollet.
\newblock Statistical optimal transport.
\newblock \emph{CoRR}, abs/2407.18163, 2024.

\bibitem[Dey and Bhattacharyya(2015)]{DBLP:conf/atal/DeyB15}
P.~Dey and A.~Bhattacharyya.
\newblock Sample complexity for winner prediction in elections.
\newblock In \emph{{AAMAS}}, pages 1421--1430. {ACM}, 2015.

\bibitem[Duetting et~al.(2025)Duetting, Feldman, Ponitka, and Soumalias]{duetting2025pseudodimensioncontracts}
P.~Duetting, M.~Feldman, T.~Ponitka, and E.~Soumalias.
\newblock The pseudo-dimension of contracts.
\newblock \emph{CoRR}, abs/2501.14474, 2025.

\bibitem[Ebadian and Micha(2024)]{DBLP:journals/corr/abs-2406-00913}
S.~Ebadian and E.~Micha.
\newblock Boosting sortition via proportional representation.
\newblock \emph{CoRR}, abs/2406.00913, 2024.

\bibitem[Ebadian et~al.(2022)Ebadian, Kehne, Micha, Procaccia, and Shah]{DBLP:conf/nips/EbadianKMP022}
S.~Ebadian, G.~Kehne, E.~Micha, A.~D. Procaccia, and N.~Shah.
\newblock Is sortition both representative and fair?
\newblock In \emph{NeurIPS}, 2022.

\bibitem[Fain et~al.(2016)Fain, Goel, and Munagala]{FainGM16}
B.~Fain, A.~Goel, and K.~Munagala.
\newblock The core of the participatory budgeting problem.
\newblock In \emph{{WINE}}, volume 10123 of \emph{Lecture Notes in Computer Science}, pages 384--399. Springer, 2016.

\bibitem[Fain et~al.(2017)Fain, Goel, Munagala, and Sakshuwong]{DBLP:conf/wine/FainGMS17}
B.~Fain, A.~Goel, K.~Munagala, and S.~Sakshuwong.
\newblock Sequential deliberation for social choice.
\newblock In \emph{{WINE}}, volume 10660 of \emph{Lecture Notes in Computer Science}, pages 177--190. Springer, 2017.

\bibitem[Fain et~al.(2019)Fain, Goel, Munagala, and Prabhu]{DBLP:conf/aaai/FainGMP19}
B.~Fain, A.~Goel, K.~Munagala, and N.~Prabhu.
\newblock Random dictators with a random referee: Constant sample complexity mechanisms for social choice.
\newblock In \emph{{AAAI}}, pages 1893--1900. {AAAI} Press, 2019.

\bibitem[Fain et~al.(2020)Fain, Fan, and Munagala]{DBLP:conf/ijcai/FainFM20}
B.~Fain, W.~Fan, and K.~Munagala.
\newblock Concentration of distortion: The value of extra voters in randomized social choice.
\newblock In \emph{{IJCAI}}, pages 110--116. ijcai.org, 2020.

\bibitem[Fishkin(2018)]{Fish18a}
J.~S. Fishkin.
\newblock \emph{Democracy When the People Are Thinking: {{Revitalizing}} Our Politics through Public Deliberation}.
\newblock {Oxford University Press}, 2018.
\newblock ISBN 0-19-255189-2.

\bibitem[Flanigan et~al.(2020)Flanigan, G{\"{o}}lz, Gupta, and Procaccia]{FlaniganGGP20}
B.~Flanigan, P.~G{\"{o}}lz, A.~Gupta, and A.~D. Procaccia.
\newblock Neutralizing self-selection bias in sampling for sortition.
\newblock In \emph{NeurIPS}, 2020.

\bibitem[Flanigan et~al.(2021{\natexlab{a}})Flanigan, G{\"{o}}lz, Gupta, Hennig, and Procaccia]{FlaniganGGPG2021}
B.~Flanigan, P.~G{\"{o}}lz, A.~Gupta, B.~Hennig, and A.~D. Procaccia.
\newblock Fair algorithms for selecting citizens' assemblies.
\newblock \emph{Nat.}, 596\penalty0 (7873):\penalty0 548--552, 2021{\natexlab{a}}.

\bibitem[Flanigan et~al.(2021{\natexlab{b}})Flanigan, Kehne, and Procaccia]{FlaniganKP21}
B.~Flanigan, G.~Kehne, and A.~D. Procaccia.
\newblock Fair sortition made transparent.
\newblock In \emph{NeurIPS}, pages 25720--25731, 2021{\natexlab{b}}.

\bibitem[Flanigan et~al.(2024)Flanigan, Liang, Procaccia, and Wang]{FLPW24a}
B.~Flanigan, J.~Liang, A.~D. Procaccia, and S.~Wang.
\newblock Manipulation-robust selection of citizens' assemblies.
\newblock In \emph{{AAAI}}, pages 9696--9703. {AAAI} Press, 2024.

\bibitem[Foley(1970)]{Fole70a}
D.~K. Foley.
\newblock Lindahl's {{Solution}} and the {{Core}} of an {{Economy}} with {{Public Goods}}.
\newblock \emph{Econometrica: Journal of the Econometric Society}, pages 66--72, 1970.

\bibitem[Gastil and Wright(2019)]{gastil2019legislature}
J.~Gastil and E.~O. Wright.
\newblock \emph{Legislature by lot: Transformative designs for deliberative governance}.
\newblock Verso Books, 2019.

\bibitem[Goel et~al.(2019)Goel, Krishnaswamy, Sakshuwong, and Aitamurto]{GoelKSA19}
A.~Goel, A.~K. Krishnaswamy, S.~Sakshuwong, and T.~Aitamurto.
\newblock Knapsack voting for participatory budgeting.
\newblock \emph{{ACM} Trans. Economics and Comput.}, 7\penalty0 (2):\penalty0 8:1--8:27, 2019.

\bibitem[Goel et~al.(2025)Goel, Goyal, and Munagala]{goel2025metric}
A.~Goel, M.~Goyal, and K.~Munagala.
\newblock Metric distortion of small-group deliberation.
\newblock In \emph{To Appear in STOC 2025}, 2025.

\bibitem[Gorski(2012)]{Gorski2012}
C.~Gorski.
\newblock The mathematics of jury size.
\newblock \emph{Inside Science}, March 2012.
\newblock URL \url{https://ww2.aip.org/inside-science/the-mathematics-of-jury-size}.

\bibitem[Goyal et~al.(2023)Goyal, Sakshuwong, Sarmasarkar, and Goel]{DBLP:conf/icalp/GoyalSSG23}
M.~Goyal, S.~Sakshuwong, S.~Sarmasarkar, and A.~Goel.
\newblock Low sample complexity participatory budgeting.
\newblock In \emph{{ICALP}}, volume 261 of \emph{LIPIcs}, pages 70:1--70:20. Schloss Dagstuhl - Leibniz-Zentrum f{\"{u}}r Informatik, 2023.

\bibitem[Gross and Nesme(2010)]{gross2010notesamplingreplacingfinite}
D.~Gross and V.~Nesme.
\newblock Note on sampling without replacing from a finite collection of matrices.
\newblock \emph{CoRR}, abs/1001.2738, 2010.

\bibitem[Halpern et~al.(2025)Halpern, Procaccia, Shapiro, and Talmon]{halpern2024federatedassemblies}
D.~Halpern, A.~D. Procaccia, E.~Shapiro, and N.~Talmon.
\newblock Federated assemblies.
\newblock In \emph{{AAAI}}, pages 13897--13904. {AAAI} Press, 2025.

\bibitem[Hoeffding(1963)]{Hoeffding63}
W.~Hoeffding.
\newblock Probability inequalities for sums of bounded random variables.
\newblock \emph{Journal of the American Statistical Association}, 58\penalty0 (301):\penalty0 13--30, 1963.

\bibitem[Kalayci et~al.(2024)Kalayci, Kempe, and Kher]{kalayci2024proportional}
Y.~H. Kalayci, D.~Kempe, and V.~Kher.
\newblock Proportional representation in metric spaces and low-distortion committee selection.
\newblock In \emph{{AAAI}}, pages 9815--9823. {AAAI} Press, 2024.

\bibitem[Kellerhals and Peters(2025)]{kellerhals2025proportional}
L.~Kellerhals and J.~Peters.
\newblock Proportional fairness in clustering: A social choice perspective.
\newblock \emph{Advances in Neural Information Processing Systems}, 2025.

\bibitem[Meir et~al.(2021)Meir, Sandomirskiy, and Tennenholtz]{MeirST21}
R.~Meir, F.~Sandomirskiy, and M.~Tennenholtz.
\newblock Representative committees of peers.
\newblock \emph{J. Artif. Intell. Res.}, 71:\penalty0 401--429, 2021.

\bibitem[Micha and Shah(2020)]{MiSh20}
E.~Micha and N.~Shah.
\newblock Proportionally fair clustering revisited.
\newblock In \emph{{ICALP}}, volume 168 of \emph{LIPIcs}, pages 85:1--85:16. Schloss Dagstuhl - Leibniz-Zentrum f{\"{u}}r Informatik, 2020.

\bibitem[Ni et~al.(2009)Ni, Bresson, Chan, and Esedoglu]{NiBCE09}
K.~Ni, X.~Bresson, T.~F. Chan, and S.~Esedoglu.
\newblock Local histogram based segmentation using the wasserstein distance.
\newblock \emph{Int. J. Comput. Vis.}, 84\penalty0 (1):\penalty0 97--111, 2009.

\bibitem[{OECD}(2020)]{oecd2020innovative}
{OECD}.
\newblock \emph{{Innovative Citizen Participation and New Democratic Institutions: Catching the Deliberative Wave}}.
\newblock {OECD Publishing}, {Paris}, 2020.
\newblock \doi{10.1787/339306da-en}.

\bibitem[Owen and Daskin(1998)]{owen1998strategic}
S.~H. Owen and M.~S. Daskin.
\newblock Strategic facility location: A review.
\newblock \emph{European journal of operational research}, 111\penalty0 (3):\penalty0 423--447, 1998.

\bibitem[Polaczyk(2023)]{polaczyk2023concentrationboundssamplingreplacement}
B.~Polaczyk.
\newblock Concentration bounds for sampling without replacement and hoeffding statistics.
\newblock \emph{CoRR}, abs/2301.03096, 2023.

\bibitem[Rey and Maly(2023)]{rey2023computational}
S.~Rey and J.~Maly.
\newblock The (computational) social choice take on indivisible participatory budgeting.
\newblock \emph{CoRR}, abs/2303.00621, 2023.

\bibitem[Ruzon and Tomasi(2001)]{RuzonT01}
M.~A. Ruzon and C.~Tomasi.
\newblock Edge, junction, and corner detection using color distributions.
\newblock \emph{{IEEE} Trans. Pattern Anal. Mach. Intell.}, 23\penalty0 (11):\penalty0 1281--1295, 2001.

\bibitem[Sambale and Sinulis(2022)]{sambale2022concentration}
H.~Sambale and A.~Sinulis.
\newblock Concentration inequalities on the multislice and for sampling without replacement.
\newblock \emph{Journal of Theoretical Probability}, 35:\penalty0 2712--2737, 2022.

\bibitem[Tamir(2001)]{tamir2001k}
A.~Tamir.
\newblock The k-centrum multi-facility location problem.
\newblock \emph{Discrete Applied Mathematics}, 109\penalty0 (3):\penalty0 293--307, 2001.

\bibitem[Villani(2008)]{villani2008optimal}
C.~Villani.
\newblock \emph{Optimal Transport: Old and New}.
\newblock Grundlehren der mathematischen Wissenschaften. Springer Berlin Heidelberg, 2008.

\bibitem[Walsh and Xia(2012)]{WalshX12}
T.~Walsh and L.~Xia.
\newblock Lot-based voting rules.
\newblock In \emph{{AAMAS}}, pages 603--610. {IFAAMAS}, 2012.

\bibitem[Weed and Bach(2017)]{weed2017sharpasymptoticfinitesamplerates}
J.~Weed and F.~Bach.
\newblock Sharp asymptotic and finite-sample rates of convergence of empirical measures in wasserstein distance.
\newblock \emph{CoRR}, abs/1707.00087, 2017.

\bibitem[Williams(2000)]{Will00a}
M.~S. Williams.
\newblock \emph{Voice, Trust, and Memory: {{Marginalized}} Groups and the Failings of Liberal Representation}.
\newblock {Princeton University Press}, 2000.

\bibitem[Young(2002)]{Youn02a}
I.~M. Young.
\newblock \emph{Inclusion and Democracy}.
\newblock {Oxford University press on demand}, 2002.

\end{thebibliography}

\appendix
 \section{Metric Space Preliminaries}\label{sec:metricpreliminaries}

The following are basic definitions of metric spaces used throughout this work.

\begin{definition}[Metric Space]
A pair $(\mathcal{X}, d)$, where $d : \mathcal{X} \times \mathcal{X} \to [0,1]$, is a metric space if:
\begin{enumerate}
    \item $d(x,x) = 0$ for all $x \in \mathcal{X}$ (identity),
    \item $d(x,y) = d(y,x)$ for all $x,y \in \mathcal{X}$ (symmetry),
    \item $d(x,z) \leq d(x,y) + d(y,z)$ for all $x,y,z \in \mathcal{X}$ (triangle inequality).
\end{enumerate}
The $r$-ball centered at $x \in \mathcal{X}$ is defined as $B_r(x) = \{ y \in \mathcal{X} \mid d(x,y) \leq r \}$.
\end{definition}

We also define a covering of a metric space.

\begin{definition}[$\epsilon$-Covering]\label{def:covering}
A set $(x_1, \dots, x_k) \in \mathcal{X}^k$ is an $\epsilon$-covering of $(\mathcal{X}, d)$ if the union of $\epsilon$-balls covers the entire space, i.e., $\mathcal{X} \subseteq \bigcup_{i=1}^k B_\epsilon(x_i)$. The covering number $N(\mathcal{X}, d, \epsilon)$ is the minimal size of an $\epsilon$-covering of $(\mathcal{X}, d)$.
\end{definition}

\begin{remark}[Covering Number of Euclidean Space]\label{rem:euclidean_covering}
For every $p \geq 0$ and $m \geq 1$, the Euclidean space $([0,1]^m, \|\cdot\|_p)$ is defined as $([0,1]^m, d)$, where $d(x,y) = \|x - y\|_p$.
The covering number of the Euclidean space satisfies $N([0,1]^m, \|\cdot\|_p, \epsilon) = O((1/\epsilon)^m)$.
\end{remark}

\section{Missing Remarks of Section 2}\label{sec:examplesthree}

\begin{remark}[Minimzing Sum of Distances is Not Representative]\label{rem:sumdist}
\citet*{DBLP:journals/corr/abs-2406-00913} consider a model in which agents are embedded in a similarity metric space.
They argue that, intuitively, selecting a panel to minimize the sum of distances from each agent to their nearest panel member does not necessarily yield a representative panel \cite[Example 7]{DBLP:journals/corr/abs-2406-00913}.
To illustrate this, they construct an example with two agents located at extreme positions $A$ and $B$, far from all other agents and from each other. The remaining $n - 2$ agents are evenly split between two locations, $C$ and $D$, which are 10 units apart.

For a panel size of $k = 3$, this distance-minimizing method would select the two outliers at $A$ and $B$, along with only one agent from the majority group, resulting in a highly non-representative panel. This is problematic because most of the population resides in either $C$ or $D$.

Using our notation, this example corresponds to a feature vector
$f = (-\infty, -5, \ldots, -5, 5, \ldots, 5, \infty)$,
where the number of $-5$ values is $(n - 1)/2$ and the number of $5$ values is also $(n - 1)/2$.
To satisfy our definition of $\epsilon$-representativeness (\Cref{def:representativeness}) for any $\epsilon > 0$, the panel would need to include the entire population, i.e., it must be that $k = n$.
Thus, our definition aligns with the intuition of \citet*{DBLP:journals/corr/abs-2406-00913}.
\end{remark}

\begin{remark}[Comparing Sampling With and Without Replacement]\label{remark:withwithout}
Intuitively, sampling without replacement yields a more representative panel than sampling with replacement.

Consider first the case where the population size $n$ is large. By the union bound, the probability of selecting the same individual more than once when sampling with replacement is at most $\binom{k}{2}/n$. Thus, as $n$ grows large, the two sampling schemes become nearly equivalent.

In the other extreme case of a small population with $n = 2$, sampling without replacement guarantees that a panel of size $k = 2$ includes both agents. In contrast, sampling with replacement requires a panel of size at least $\log(1/\delta) + 1$ to ensure, with probability at least $1 - \delta$, that both individuals are included. This illustrates that sampling with replacement can lead to higher panel complexity to achieve certain desirable guarantees.

\Cref{lem:W1expectation} extends this intuition beyond these extreme cases.
\end{remark}

\begin{remark}[Counterexample to Stochastic Dominance]\label{ex:sd}
    Using the notation of \Cref{lem:W1expectation}, we show that there exists a population of $n$ agents and a feature $f : [n] \to [0,1]$ such that:
    \begin{align}
        \mathbb{P}_{S \sim \mathcal{R}_{k,n}} \left[ W(\phi_f^{[n]}, \phi_f^{S}) \leq \delta \right] 
        > 
        \mathbb{P}_{S \sim \mathcal{U}_{k,n}} \left[ W(\phi_f^{[n]}, \phi_f^{S}) \leq \delta \right] 
        \quad \text{for some } \delta > 0,
        \label{eq:stochdominance}
    \end{align}
    meaning that the (first-order) stochastic dominance condition may fail.

    Consider a population of size $n = 5$ with feature values given by the vector $f = (0, 1/2, 1/2, 1/2, 1)$. Let the panel size be $k = 2$. We will demonstrate that:
    \begin{align}
        \mathbb{P}_{S \sim \mathcal{R}_{k,n}} \left[ W(\phi_f^{[n]}, \phi_f^{S}) \leq 0.2 \right] 
        \geq 
        9/25 = 0.36,
        \quad\text{while}\quad 
        \mathbb{P}_{S \sim \mathcal{U}_{k,n}} \left[ W(\phi_f^{[n]}, \phi_f^{S}) \leq 0.2 \right] 
        = 
        3/10 = 0.3.
        \label{eq:exam}
    \end{align}
    This contradicts the inequality in \Cref{eq:stochdominance}.
    We first compute the Wasserstein distance for different panels based on the feature values of their members:
    \begin{align*}
        \text{If } f(S) &= \{0, 1/2\}, \text{ then } 
        W(\phi_f^{S}, \phi_f^{[n]}) = (1/2) \cdot \left( 1/2 - 1/5 \right) + (1/2) \cdot (1/5) = 1/4 = 0.25. \\
        \text{If } f(S) &= \{1/2, 1/2\}, \text{ then } 
        W(\phi_f^{S}, \phi_f^{[n]}) = (1/2) \cdot (1/5) + (1/2) \cdot (1/5) = 1/5 = 0.2. \\
        \text{If } f(S) &= \{0, 1\}, \text{ then } 
        W(\phi_f^{S}, \phi_f^{[n]}) = (1/2) \cdot \left( 1/2 - 1/5 \right) + (1/2) \cdot \left( 1/2 - 1/5 \right) = 0.3.
    \end{align*}
    The case $f(S) = \{1/2, 1\}$ is symmetric to $f(S) = \{0, 1/2\}$.

    Note that $W(\phi_f^{S}, \phi_f^{[n]}) \leq 0.2$ holds only when $f(S) = \{1/2, 1/2\}$. Thus, \Cref{eq:exam} follows since the probability of drawing two agents with feature value $1/2$ when sampling with replacement is $(3/5)^2 = 9/25$, and when sampling without replacement it is $\binom{3}{2} / \binom{5}{2} = 3/10$.

    Note that the calculations above yield only a lower bound on 
    $\mathbb{P}_{S \sim \mathcal{R}_{k,n}} [ W(\phi_f^{[n]}, \phi_f^{S}) \leq 0.2 ]$, 
    since we did not consider panels with $f(S) = \{0,0\}$ or $f(S) = \{1,1\}$. However, these combinations occur with zero probability when sampling without replacement.
\end{remark}

\section{Missing Proofs of Section 2}\label{sec:missingthree}

\subsection{Proof of Lemma 2.3}
The following lemma follows from the proof of \citet*[Theorem 4]{Hoeffding63}. Although \citet*{Hoeffding63} established the result only for the case $V = \mathbb{R}$, it has been noted by \citet*{polaczyk2023concentrationboundssamplingreplacement} and \citet*{gross2010notesamplingreplacingfinite} that the proof extends to any normed vector space without modification.

\begin{lemma}[\citet*{Hoeffding63}]\label{lem:hoeffdingexpectation}
    Let $V$ be a normed vector space,  $x_1, \ldots, x_n \in V$, and $\psi : V \to \mathbb{R} \cup \{+\infty\}$ be a convex function. Then, it holds that:
    \begin{align*}
    \mathbb{E}_{S \sim \uniformpanel} \left[ \psi\left( \sum_{i \in S} x_i \right) \right] 
    \leq \mathbb{E}_{S \sim \mathcal{R}_{k,n}} \left[ \psi\left( \sum_{i \in S} x_i \right) \right].
\end{align*}
\end{lemma}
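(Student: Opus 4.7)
The plan is to construct a coupling between the with-replacement sample and a uniformly random panel $S$ so that the without-replacement sum equals the conditional expectation of the with-replacement sum given $S$; the inequality then follows from a single application of Jensen's inequality.

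Concretely, I would draw $I = (I_1, \ldots, I_k)$ i.i.d.\ uniformly from $[n]$, let $T(I) \subseteq [n]$ denote its set of distinct values (of size $d \leq k$), and independently sample $A$ uniformly from the $(k - d)$-subsets of $[n] \setminus T(I)$. Setting $S := T(I) \cup A \in \binom{[n]}{k}$ defines the coupling. Two facts are needed: (i) the marginal of $S$ is uniform on $\binom{[n]}{k}$, and (ii) $\mathbb{E}\!\left[\sum_{l=1}^k x_{I_l} \,\middle|\, S\right] = \sum_{i \in S} x_i$. Both follow from permutation-symmetry of the construction: for every $\pi \in S_n$, the joint law of $(\pi(I), \pi(S))$ coincides with that of $(I, S)$, because $\pi(I)$ is still i.i.d.\ uniform and $\pi(A)$ conditional on $\pi(I)$ is uniform on $(k-d)$-subsets of $[n]\setminus T(\pi(I))$. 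Property (i) is then immediate; for (ii), the symmetry forces $\mathbb{P}(I_l = i \mid S = s)$ to be constant over $i \in s$, and since $I_l \in T(I) \subseteq S$ this constant must equal $1/k$, which yields $\mathbb{E}[\sum_l x_{I_l} \mid S = s] = \sum_{i \in s} x_i$ by linearity.

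Given the coupling, I would conclude via Jensen's inequality applied to the convex function $\psi$:
\[
\psi\!\left(\sum_{i \in S} x_i\right) = \psi\!\left(\mathbb{E}\!\left[\sum_{l=1}^k x_{I_l} \,\middle|\, S\right]\right) \leq \mathbb{E}\!\left[\psi\!\left(\sum_{l=1}^k x_{I_l}\right) \,\middle|\, S\right],
\]
and then take expectation over $S$. The Jensen step uses only convexity of $\psi: V \to \mathbb{R} \cup \{+\infty\}$ together with the fact that $\sum_l x_{I_l}$ is a finitely-supported random variable in $V$; no scalar-specific structure of the codomain is invoked, so the argument extends verbatim to any normed vector space $V$.

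The main obstacle is making the permutation-symmetry argument fully rigorous—particularly the uniformity of $S$ on $\binom{[n]}{k}$, since $S$ is defined through the intermediate random set $T(I)$ whose size varies—but once $(I, S) \stackrel{d}{=} (\pi(I), \pi(S))$ is established, the rest is routine bookkeeping. An alternative plan, closer to the remark in the excerpt, is to invoke Hoeffding's Theorem~4 directly and inspect that his proof uses only midpoint convexity of $\psi$ and finite linear combinations of the $x_i$, both of which transfer verbatim to any normed vector space, as noted in \citet*{polaczyk2023concentrationboundssamplingreplacement} and \citet*{gross2010notesamplingreplacingfinite}.
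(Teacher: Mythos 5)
Your proposal is correct: the coupling (completing the set of distinct values of the i.i.d.\ draws to a uniform $k$-subset), the permutation-symmetry argument giving $\mathbb{P}(I_l = i \mid S = s) = 1/k$ for $i \in s$, and the finite Jensen step all go through, and the extension to a normed vector space indeed uses only finite convex combinations. This is essentially the same approach as the paper's, which does not reprove the lemma but cites Hoeffding's Theorem 4 (with the remark, via the cited notes, that the proof transfers verbatim to normed vector spaces) --- and Hoeffding's original proof is precisely this coupling-plus-Jensen argument.
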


To apply \Cref{lem:hoeffdingexpectation}, we first establish the convexity of the Wasserstein distance.
\begin{lemma}[Convexity]\label{lem:w_convexity}
Let $\phi, \psi_1, \psi_2$ be discrete probability distributions over a metric space $(\mathcal{X}, d)$, and let $\psi = t \cdot \psi_1 + (1-t) \cdot \psi_2$ be their convex combination for some $t \in (0,1)$. Then, the following holds:  
\[
W(\phi, \psi) \le t \cdot W(\phi, \psi_1) + (1-t) \cdot  W(\phi, \psi_2).
\]
\end{lemma}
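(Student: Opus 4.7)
The plan is to prove convexity by explicitly constructing a feasible coupling of $(\phi, \psi)$ as a convex combination of optimal couplings for $(\phi, \psi_1)$ and $(\phi, \psi_2)$, and then exploiting the linearity of the transport cost in the coupling.

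Concretely, I would begin by invoking \Cref{def:Wdistance} to obtain couplings $\gamma_1 \in \Gamma(\phi, \psi_1)$ and $\gamma_2 \in \Gamma(\phi, \psi_2)$ that achieve the minima defining $W(\phi, \psi_1)$ and $W(\phi, \psi_2)$, respectively (such minima are attained since both distributions have finite support, making $\Gamma(\phi, \psi_j)$ a compact polytope). I would then set $\gamma = t \cdot \gamma_1 + (1-t) \cdot \gamma_2$, viewed as a nonnegative measure on $\mathcal{X}^2$.

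Next, I would verify that $\gamma$ is a valid coupling of $\phi$ and $\psi$ by checking its marginals. For the first marginal, for any $x$ in the support of $\phi$,
\[
\sum_{y} \gamma(x,y) = t \sum_y \gamma_1(x,y) + (1-t) \sum_y \gamma_2(x,y) = t \cdot \phi(x) + (1-t) \cdot \phi(x) = \phi(x),
\]
and analogously the second marginal of $\gamma$ equals $t \cdot \psi_1(y) + (1-t) \cdot \psi_2(y) = \psi(y)$, using the definition of $\psi$. Hence $\gamma \in \Gamma(\phi, \psi)$.

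Finally, since $\gamma$ is feasible but not necessarily optimal for $(\phi, \psi)$, the minimum in \Cref{def:Wdistance} gives
\[
W(\phi, \psi) \le \mathbb{E}_{(x,y) \sim \gamma}[d(x,y)] = t \cdot \mathbb{E}_{(x,y) \sim \gamma_1}[d(x,y)] + (1-t) \cdot \mathbb{E}_{(x,y) \sim \gamma_2}[d(x,y)] = t \cdot W(\phi, \psi_1) + (1-t) \cdot W(\phi, \psi_2),
\]
where the middle equality uses linearity of expectation in the measure and the last uses optimality of $\gamma_1$ and $\gamma_2$. There is no real obstacle here; the only subtle point is confirming that the minimum in the Wasserstein definition is attained (so that $\gamma_1$ and $\gamma_2$ exist), which follows from the finite-support assumption.
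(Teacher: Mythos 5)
Your proposal is correct and matches the paper's own argument: both take optimal couplings $\gamma_1, \gamma_2$, form the convex combination $\gamma = t \cdot \gamma_1 + (1-t) \cdot \gamma_2$, verify its marginals to conclude $\gamma \in \Gamma(\phi, \psi)$, and use linearity of the transport cost together with suboptimality of $\gamma$. Your remark that the minima are attained because the couplings range over a compact polytope (finite supports) is a reasonable extra note that the paper leaves implicit.
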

\begin{proof}
Let $\gamma_1$ and $\gamma_2$ be optimal couplings attaining the minimum in the definition of $W(\phi, \psi_1)$ and $W(\phi, \psi_2)$, respectively. That is,  
\[
\gamma_1 \in \argmin_{\gamma \in \Gamma(\phi, \psi_1)} \mathbb{E}_{(x,y) \sim \gamma} [d(x,y)]
\quad \text{and} \quad  
\gamma_2 \in \argmin_{\gamma \in \Gamma(\phi, \psi_2)} \mathbb{E}_{(x,y) \sim \gamma} [d(x,y)].
\]
Consider a coupling $\gamma=t \cdot \gamma_1+(1-t) \cdot \gamma_2$. It is easy to see that $\gamma \in \Gamma(\phi, \psi)$ since, given any $A \subseteq \mathcal{X}$, it holds that:
\begin{align*}
    \gamma(\mathcal{X} \times A) = t \cdot \phi(A) + (1-t) \cdot \phi(A) = \phi(A), \quad
    \gamma(A \times \mathcal{X}) = t\cdot \psi_1(A) + (1-t) \cdot \psi_2(A) = \psi(A).
\end{align*}
It follows that:
\begin{align*}
   \mathbb{E}_{(x,y) \sim \gamma}\left[ d(x,y) \right] &= t \cdot \mathbb{E}_{(x,y) \sim \gamma_1}\left[ d(x,y) \right] + (1-t) \cdot \mathbb{E}_{(x,y) \sim \gamma_2}\left[ d(x,y) \right]  = t \cdot W(\phi, \psi_1) + (1-t) \cdot  W(\phi, \psi_2).
\end{align*}
which concludes the proof.
\end{proof}

We are now ready to prove \Cref{lem:W1expectation}.

\SamplingWithoutReplacementExpectation*
\begin{proof}
Let $F = \{f_1, \ldots, f_n\}$, and consider the normed vector space $(\mathbb{R}^F, \|\cdot\|_1)$. 
We interpret any $v \in \mathbb{R}^F$ satisfying $v_j \geq 0$ for all $j \in F$ and $\sum_{j \in F} v_j = 1$ as a probability distribution over $F$, assigning mass $v_j$ to the element $j \in F$.
We define $\psi : \mathbb{R}^F \to \mathbb{R} \cup \{+\infty\}$ by:
\begin{align*}
    \psi(v) = \begin{cases}
        W(\phi_f^{[n]}, v) & \text{if } v_j \geq 0 \text{ for all } j \in F \text{ and } \sum_{j \in F} v_j = 1,\\
        +\infty & \text{otherwise.}
    \end{cases}
\end{align*}
Next, for each $i \in [n]$ we define the vector $x_i \in \mathbb{R}^F$ by setting for each $j \in F$:
\begin{align*}
    x_{i,j} = \begin{cases}
        1/k & \text{if } j = f(i), \\
        0 & \text{otherwise.}
    \end{cases}
\end{align*}
Then, by \Cref{lem:hoeffdingexpectation}, we obtain:
\begin{align*}
    \mathbb{E}_{S \sim \uniformpanel} \left[ W(\phi_f^{[n]}, \phi_f^S) \right] =  \mathbb{E}_{S \sim \uniformpanel} \left[ \psi\left( \sum_{i \in S} x_i \right) \right] \leq \mathbb{E}_{S \sim \mathcal{R}_{k,n}} \left[ \psi\left( \sum_{i \in S} x_i \right) \right]
    = \mathbb{E}_{S \sim \mathcal{R}_{k,n}} \left[ W(\phi_f^{[n]}, \phi_f^{S}) \right].
\end{align*}
as needed.
\end{proof}

\subsection{Proof of Lemma 2.4}
The following lemma follows directly from the proof of \citet*[Proposition 20]{weed2017sharpasymptoticfinitesamplerates}. 
\begin{lemma}[\citet*{weed2017sharpasymptoticfinitesamplerates}]\label{lem:weed}
   Let $f: [n] \to \mathcal{X}$ be a feature.
   Suppose that $d(x,y) \leq 1$ for all $x,y \in \mathcal{X}$.
    Let $x_1, \ldots, x_k \in \mathcal{X}$ and $x_i' \in \mathcal{X}$. Then, it holds that:
    \begin{align*}
        W(\phi_f^{[n]}, \phi_f^{(x_i, x_{-i})}) -  W(\phi_f^{[n]}, \phi_f^{(x_i', x_{-i})}) \leq 1/k.
    \end{align*}
\end{lemma}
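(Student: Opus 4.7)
The plan is to derive this one-sided bounded differences bound from the triangle inequality for the Wasserstein-1 distance combined with an explicit ``move one point'' coupling. Since $W$ is a metric on the space of probability measures with finite support on $(\mathcal{X},d)$ (a standard fact from optimal transport theory), we have
\[
W(\phi_f^{[n]},\phi_f^{(x_i,x_{-i})}) \leq W(\phi_f^{[n]},\phi_f^{(x_i',x_{-i})}) + W(\phi_f^{(x_i',x_{-i})},\phi_f^{(x_i,x_{-i})}),
\]
so it suffices to show $W(\phi_f^{(x_i',x_{-i})},\phi_f^{(x_i,x_{-i})}) \leq 1/k$.

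For this, I would exhibit the explicit coupling $\gamma \in \Gamma(\phi_f^{(x_i,x_{-i})},\phi_f^{(x_i',x_{-i})})$ defined on $\mathcal{X}\times\mathcal{X}$ by
\[
\gamma \;=\; \tfrac{1}{k}\,\delta_{(x_i,x_i')} \;+\; \tfrac{1}{k}\sum_{j\neq i}\delta_{(x_j,x_j)}.
\]
A direct calculation shows that its first marginal is $\phi_f^{(x_i,x_{-i})}$ and its second is $\phi_f^{(x_i',x_{-i})}$, while its expected transport cost equals $d(x_i,x_i')/k \leq 1/k$, where the bound uses the assumption $d(x,y)\le 1$. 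Substituting into the triangle inequality above yields the claim.

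I do not anticipate a serious obstacle. The only subtle point is that some of the $x_j$ may coincide with $x_i$, $x_i'$, or each other: in that case $\gamma$ should be read as a measure on $\mathcal{X}\times\mathcal{X}$ with superimposed Dirac masses where points collide, and the marginal verification is unaffected. If one prefers not to invoke the triangle inequality for $W$ as a black box, the argument can be made self-contained by a standard gluing lemma: starting from any optimal coupling $\gamma^\star$ for $(\phi_f^{[n]},\phi_f^{(x_i',x_{-i})})$, one composes $\gamma^\star$ with $\gamma$ through the shared marginal $\phi_f^{(x_i',x_{-i})}$ to produce a feasible coupling of $\phi_f^{[n]}$ and $\phi_f^{(x_i,x_{-i})}$ whose cost exceeds that of $\gamma^\star$ by at most $d(x_i,x_i')/k \le 1/k$.
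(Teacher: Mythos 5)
Your proof is correct: the one-point coupling has the right marginals and cost $d(x_i,x_i')/k \le 1/k$, and combining it with the triangle inequality for $W$ gives exactly the claimed one-sided bounded-differences property. The paper itself gives no proof and simply defers to Proposition 20 of \citet{weed2017sharpasymptoticfinitesamplerates}, whose underlying argument is the same standard one you reconstructed, so your write-up is essentially a self-contained version of the cited proof.
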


We also use of the following McDiarmid-style inequality of \citet*[Proposition 2]{sambale2022concentration}.
\begin{lemma}[\citet*{sambale2022concentration}]\label{lem:sambale}
Let $g : [n]^k \to \mathbb{R}$ be a symmetric function; that is, for any permutation $\pi : [k] \to [k]$, we have $g(x_1, \ldots, x_k) = g(x_{\pi(1)}, \ldots, x_{\pi(k)})$.
Suppose there exists  $c > 0$ such that for all $x_1, \ldots, x_k, x_i' \in [n]$, it holds that $|g(x_i, x_{-i}) - g(x_i', x_{-i})| \leq c$.
Then, we have:
\begin{align*}
     \mathbb{P}_{S \sim \uniformpanel} \left[ g(x_S) \geq \mu + t \right] 
    \leq \exp\left( \frac{-t^2}{4 (1-k/n) c^2 k }\right) \quad\text{where }\mu = \mathbb{E}_{S \sim \uniformpanel}\left[g(x_S)\right].
\end{align*}
\end{lemma}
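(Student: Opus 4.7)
The plan is to prove this McDiarmid-style inequality via a Doob martingale argument along the sequence in which the sample is drawn, combined with a careful variance bound that produces the finite-population correction $(1-k/n)$. First, I would view the uniform distribution over $\binom{[n]}{k}$ as the distribution of an ordered $k$-tuple $(i_1,\ldots,i_k)$ drawn uniformly without replacement; because $g$ is symmetric, replacing the unordered $S$ with the ordered tuple does not change the value of $g(x_S)$. I would then introduce the Doob martingale $M_j = \mathbb{E}[g(x_{i_1},\ldots,x_{i_k}) \mid i_1,\ldots,i_j]$ for $j=0,1,\ldots,k$, with $M_0=\mu$ and $M_k=g(x_S)$, and set $\xi_j = M_j - M_{j-1}$.

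Next, I would bound the martingale increments. Using the bounded differences hypothesis together with a swap/coupling argument on two possible values of $i_j$, I would show $|\xi_j|\le c$. This already yields Azuma--Hoeffding's bound $\exp(-t^2/(2c^2 k))$, but this falls short of the target by both a constant and, more importantly, the factor $(1-k/n)$. To recover the latter, I would bound the conditional variance $\mathrm{Var}(\xi_j \mid i_1,\ldots,i_{j-1})$ by the variance of a function taking at most $n-j+1$ equally likely values with range $c$; by a finite-population variance calculation for a uniform draw from a pool of size $n-j+1$, this conditional variance can be controlled in a way that, when summed over $j=1,\ldots,k$, telescopes into $c^2 k(1-k/n)/2$ (up to constants), reflecting the fact that as more indices are fixed, the remaining randomness shrinks.

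With both $|\xi_j|\le c$ and $\sum_j \mathrm{Var}(\xi_j\mid\mathcal F_{j-1}) \lesssim c^2 k (1-k/n)$, I would invoke a Freedman/Bernstein-type inequality for martingales with bounded increments and bounded quadratic variation. In the sub-Gaussian regime relevant for the stated tail (which is the dominant one here since $|\xi_j|\le c$), this yields exactly the exponential bound $\exp(-t^2/(4(1-k/n)c^2 k))$, matching the desired inequality. As an alternative sanity check, when $k=n$ the sample is the entire population and $g(x_S)=\mu$ deterministically; the bound $(1-k/n)=0$ correctly degenerates, which my variance-based approach respects but a pure Azuma bound does not.

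The main obstacle is the variance step: writing down the telescoping identity that produces the finite-population factor $(1-k/n)$ rather than the naive $1$. Concretely, one must show that, averaged over the filtration, the sum of conditional variances of a Doob martingale for sampling without replacement carries the same finite-population correction that appears in the classical sampling-without-replacement variance formula. I would likely verify this by either a direct calculation using the explicit conditional distribution of $i_j$ given $i_1,\ldots,i_{j-1}$ (uniform over $[n]\setminus\{i_1,\ldots,i_{j-1}\}$), or by appealing to a negative-association/exchangeability argument that compares the sum of conditional variances along the Doob filtration to the variance of a single draw from the full population, scaled by $k(1-k/n)$.
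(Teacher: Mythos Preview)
The paper does not prove this lemma at all: it is simply quoted as \citep[Proposition 2]{sambale2022concentration} and used as a black box in the proof of \Cref{lem:W1concentration}. So there is no ``paper's own proof'' to compare against.

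Your martingale sketch is a plausible route to such an inequality, and the sanity check at $k=n$ is a good sign. The only real gap is the step you yourself flag: you assert that $\sum_{j=1}^k \mathrm{Var}(\xi_j\mid \mathcal F_{j-1})$ telescopes to something of order $c^2 k(1-k/n)$, but you do not actually carry this out, and the heuristic ``variance of a uniform draw from a pool of size $n-j+1$ with range $c$'' does not by itself produce the factor $(1-k/n)$ after summing (naively it gives $\sum_{j=1}^k c^2/4 = c^2 k/4$, with no finite-population correction). Getting the $(1-k/n)$ factor from a Doob martingale over draws without replacement typically requires an additional exchangeability or reverse-martingale argument, not just the per-step pool-size observation. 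Relatedly, invoking Freedman/Bernstein and then claiming the ``sub-Gaussian regime'' yields exactly $\exp(-t^2/(4(1-k/n)c^2 k))$ is optimistic: Freedman's bound has a linear-in-$t$ term in the denominator, and removing it to land on the stated constant $4$ needs more than saying $|\xi_j|\le c$. If you want to push this line through, you would likely be better served by a direct moment-generating-function bound for the Doob increments (in the style of Serfling or of Bardenet--Maillard for sampling without replacement) rather than off-the-shelf Freedman.
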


We are now ready to prove \Cref{lem:W1concentration}.

\SamplingWithoutReplacementConcentration*
\begin{proof}
Since $W(\phi_f^{[n]}, \cdot)$ is a symmetric function, and by \Cref{lem:weed}, we may apply \Cref{lem:sambale} with $c = 1/k$ to obtain:
\begin{align*}
    \mathbb{P}_{S \sim \uniformpanel} \left[ W(\phi_f^{[n]}, \phi_f^S) \geq \mu + t \right] \leq \exp\left( \frac{-t^2}{4 (1-k/n) c^2 k }\right) =  \exp\left( \frac{-t^2k}{4 (1-k/n) }\right) \leq  \exp\left( \frac{-t^2k}{4 }\right)
\end{align*}
as needed.
\end{proof}

\subsection{Proof of Lemma 2.6}

\WassersteinImpliesAvg*
\begin{proof}
Let $\phi^{[n]}_f$ and $\phi^S_f$ denote the population and panel distributions with respect to $f$, respectively, and let $f(A) \subseteq [0,1]$ represent the image of a subset $A \subseteq [n]$. 
The result follows from the Kantorovich–Rubinstein duality~\cite{villani2008optimal}, which provides an alternative expression for the Wasserstein distance:
\begin{equation*}
        W(\phi^{[n]}_f, \phi^S_f) = \sup_{||g||_{\mathsf{Lip}} \le 1} \left \{ \sum_{y \in f([n])} g(y) \cdot \phi^{[n]}_f(y) - \sum_{z \in f(S)} g(z) \cdot \phi^S_f(z) \right \},
\end{equation*}
where the supremum is taken over all Lipschitz functions $g : [0,1] \rightarrow \mathbb{R}$ with Lipschitz constant at most $1$, i.e., $||g||_{\mathsf{Lip}} \le 1$. The functions $g_1(y) = y$ and $g_2(y) = -y$ clearly satisfy this condition. Together with the assumption of $\epsilon$-representativeness, this implies:
\begin{align*}
       \epsilon \geq W(\phi^{[n]}_f, \phi^S_f) &\ge \sum_{y \in f([n])} y \cdot \phi^{[n]}_f(y) - \sum_{z \in f(S)} z \cdot \phi^S_f(z) = \frac{1}{n}\sum_{i \in[n]} f(i) - \frac{1}{|S|}\sum_{j \in S} f(j),\\
       \epsilon \geq W(\phi^{[n]}_f, \phi^S_f) &\ge \sum_{z \in f(S)} z \cdot \phi^S_f(z) - \sum_{y \in f([n])} y \cdot \phi^{[n]}_f(y) = \frac{1}{|S|}\sum_{j \in S} f(j) - \frac{1}{n}\sum_{i \in[n]} f(i).
\end{align*}
The result follows.
\end{proof}

\subsection{Proof of Lemma 2.8}

\panelcamouflaged*
\begin{proof}
Let us define $\mathcal{R}_{k,n}$ as the probability distribution corresponding to sampling \emph{with} replacement from the population.
We use the following result of \citet*[Lemma 5.10]{duetting2025pseudodimensioncontracts}, which guarantees if it holds that:
\begin{align*}
    \mathbb{P}_{(i_1, \ldots, i_k) \sim \mathcal{R}_{k,n}} \left[ \| g(f_{i_1, \ldots, i_k}(z,h,w,r)) - z \|_1 \leq  h/4 \right] \geq 3/4
\quad \text{for all } z \in \{-1,+1\}^{h},
\end{align*}
then it must be that $k = \Omega(h \cdot w^2)$.
A key distinction between this statement and that of the lemma is how the panel is formed. In the lemma, the panel consists of $k$ uniformly sampled agents without repetition, ensuring that set $S$ contains distinct elements. In contrast, \citet*{duetting2025pseudodimensioncontracts} considers drawing $k$ agents uniformly at random, allowing for possible repetitions. 

 We choose an integer parameter $r$ large enough such that $k^2/(2n) = k^2/(4 hwr) \leq 1/8$.

    Next, we define the event that no repetition occurs:
    \begin{align*}
        E = \{(i_1, \ldots, i_k) \in [n]^k \,:\, i_a \neq i_b \text{ for all }a,b \in [k]\}.
    \end{align*}
   We bound the probability of repetition using the union bound: 
    \begin{align*}
        \mathbb{P}_{(i_1,\ldots,i_k) \sim \mathcal{R}_{k,n}} [E] \geq 1- \sum_{a \neq b} \mathbb{P}_{(i_1,\ldots,i_k) \sim \mathcal{R}_{k,n}} [i_a = i_b] = 1-  {\binom{k}{2}} \cdot \frac{1}{n} \geq  1- \frac{k^2}{2n} \geq \frac{7}{8}.
    \end{align*}
    Thus, for all $z \in \{-1,+1\}^{\ell}$, we have:
\begin{align*}
   \mathbb{P}_{(i_1,\ldots,i_k) \sim \mathcal{R}_{k,n}} &\left[ \| g(f_{i_1, \ldots, i_k}(z,h,w,r)) - z \|_1 \leq h/4 \right] \\
   &\geq \mathbb{P}_{(i_1,\ldots,i_k) \sim \mathcal{R}_{k,n}} \left[ \| g(f_{i_1, \ldots, i_k}(z,h,w,r)) - z \|_1 \leq h/4 \text{ and } E \text{ holds}\right] \\
   &\geq \mathbb{P}_{(i_1,\ldots,i_k) \sim \mathcal{R}_{k,n}} \left[ \| g(f_{i_1, \ldots, i_k}(z,h,w,r)) - z \|_1 \leq h/4  \,|\, E\right] \cdot \mathbb{P}_{(i_1,\ldots,i_k) \sim \mathcal{R}_{k,n}} [E] \\
   &\geq \mathbb{P}_{(i_1,\ldots,i_k) \sim \mathcal{R}_{k,n}} \left[ \| g(f_{i_1, \ldots, i_k}(z,h,w,r)) - z \|_1 \leq h/4  \,|\, E\right] \cdot (7/8) \\
   &= \mathbb{P}_{S \sim \uniformpanel} \left[ \| g(f_S(z,h,w,r)) - z \|_1 \leq h/4 \right] \cdot (7/8) \\
   &\geq (6/7) \cdot (7/8) \\
   &\geq 3/4.
\end{align*}
This concludes the proof.
\end{proof}

\section{Missing Proofs of Section 3}\label{sec:missingfour}

\pbimpossibilityefficiency*
\begin{proof}
    Let $B_1, B_2 \geq 0$ satisfy $B_1 + B_2 = B$ and $B_1 + B_2 / m \leq 1$. When $B \geq 1$, choosing $B_1 = (m-B)/(m-1)$ and $B_2 = (m \cdot B - m) / (m-1)$ ensures that $B_1 + B_2 = (m-B+m\cdot B - m)/(m-1) = B$ and $B_1 + B_2/m = (m-B + B-1)/(m-1) = 1$. When $B < 1$, setting $B_1 = B$ and $B_2 = 0$ satisfies the conditions.
    For any $x \in [0,1]^m$, define three cost functions as:
    \begin{align*}
        \cost^{(0)}(x) &= (1/B_2) \cdot \sum_{i=1}^{m} \max(B_2/m - x_i, 0), \\
        \cost^{(1)}(x) &= B_1 + B_2/m - x_1, \\
        \cost^{(2)}(x) &= B_1 + B_2/m - x_2.
    \end{align*}
    These are valid cost functions in our model since $B_2 \leq m$ and $B_1 + B_2 / m \leq 1$.

Consider the following two participatory budgeting instances: $\langle m, B, (\cost^{(0)}, \ldots, \cost^{(0)}, \cost^{(1)}) \rangle$ and $\langle m, B, (\cost^{(0)}, \ldots, \cost^{(0)}, \cost^{(2)}) \rangle$. These instances differ only in the cost function assigned to the $n$-th agent.
In the first instance, an allocation $x$ with $x_i = B_2 / m$ for $i \neq 1$ and $x_1 = B_1 + B_2/m$ results in zero cost for all agents. 
This is a feasible budget allocation since $B_1 + B_2 = B$.
Similarly, in the second instance, setting $x_i = B_2 / m$ for $i \neq 2$ and $x_2 = B_1 + B_2/m$ also yields a cost of zero for all agents. Thus, the optimal cost in both instances is $\socialopt = 0$.

Now, let $y = \widetilde{x}(\cost^{(0)}, \ldots, \cost^{(0)})$. Since $\sum_{j=1}^{m} y_j \leq B$ and $B_1 + B_2 = B$, at least one of the following must hold: $y_1 < B_1 + B_2/m$, $y_2 < B_1 + B_2/m$, or $y_j < B_2/m$ for some $j \geq 3$.

In the first case, we obtain 
\begin{align*}
\mathbb{E}_{S \sim \uniformpanel} [ \social(\widetilde{x}(S)) ] \geq \mathbb{P}_{S \sim \uniformpanel}[S \subseteq \{1, \ldots, n-1\}] \cdot (B_1 + B_2/m - y_1) > 0.    
\end{align*}
The cases where $y_2 < B_1 + B_2/m$ or $y_j < B_2/m$ follow analogously.
\end{proof}

\section{Missing Proofs of Section 4}\label{apx:proofsfacilities}
In this section, we report all proofs and details missing from the discussion of the results in Section~\ref{sec:facility}.

\subsection{Proofs of Section 4.2}\label{sec:starsandthings}

\MultiplicativePACGuarantee*

\begin{proof}[Proof of~\Cref{thm:reduction}]
    We analyze the following formulation of the expected panel cost:
    \begin{align*}
        \mathbb{E}_{S \sim \uniformpanel}[\social(\overline{q}(S)) ] &= \int_{0}^\infty \mathbb{P}_{S \sim \uniformpanel}[\social(\overline{q}(S)) \geq x ] \,\mathrm{d}x.
    \end{align*}
    We split the analysis of this integral into different intervals of $x\in [0,\infty)$.
    We denote $\opt = \socialopt$.
    First, since probabilities are always bounded by $1$:
    \begin{align*}
        \int_{0}^{(1+\epsilon) \cdot \opt} \mathbb{P}_{S \sim \uniformpanel}[\social(\overline{q}(S)) \geq x ] \,\mathrm{d}x \leq (1 + \epsilon) \cdot \opt.
    \end{align*}
    Second, by the PAC-style guarantee with multiplicative error:
    \begin{align*}
        \int_{(1+\epsilon) \cdot \opt}^{4 \cdot \opt} \mathbb{P}_{S \sim \uniformpanel}[\social(\overline{q}(S)) \geq x ] \,\mathrm{d}x \leq \int_{(1+\epsilon) \cdot \opt}^{4 \cdot \opt} \epsilon \,\mathrm{d}x  \leq 3\epsilon \cdot \opt.
    \end{align*}
    Third, for $x > 4\cdot \opt$, we prepare to apply \Cref{thm:tail_bound}. To do so, observe that by triangle inequality 
    \[
    d(q^{\star},\overline{q}(S)) + \social(q^\star) =   d(q^{\star},\overline{q}(S)) + \opt \geq \social(\overline{q}(S)),
    \]
    implying for any $X>0$,
    \begin{align*}
        \mathbb{P}_{S \sim \uniformpanel}[\social(\overline{q}(S)) \geq (1+X)\cdot \opt ] \leq \mathbb{P}_{S \sim \uniformpanel}[d(q^\star, \overline{q}(S)) \geq X \cdot \opt].
    \end{align*}
    This gives us
    \begin{align*}
        \int_{4 \cdot \opt}^{\infty} &\mathbb{P}_{S \sim \uniformpanel}[\social(\overline{q}(S)) \geq x ] \,\mathrm{d}x \\
        & =  \int_{4 \cdot \opt}^{\infty} \mathbb{P}_{S \sim \uniformpanel}[\social(\overline{q}(S)) \geq (1 + (x/\opt - 1)) \cdot \opt ] \,\mathrm{d}x \\
        & \leq \int_{4\cdot \opt}^\infty \mathbb{P}_{S \sim \uniformpanel}[d(q^\star, \overline{q}(S)) \geq (x/\opt - 1) \cdot \opt] \ \mathrm{d}x.
    \end{align*}
For $x \geq 4 \cdot \opt$, we wish to apply \Cref{thm:tail_bound} with parameters $T = x/\opt - 1$ and $\delta = {3\epsilon}/{T^2}$. First, observe that $T > 2$ since in this case $T \geq 4 \cdot \opt/{\opt} - 1 = 3$.
We then compute the required panel size as:
    \[
    \frac{2 \cdot \log{(1/\delta)}}{\log \left(T^2/(4(T-1))\right)} = \frac{\log\left(T^2/(3\epsilon)\right)}{\log(T/8)}=  \frac{2\log(T) + \log\left(1/(3\epsilon))\right)}{\log(T) - \log(8)} = O\left(\log\left(\frac{1}{\epsilon}\right)\right).
    \]
Thus,  \Cref{thm:tail_bound} gives:
    \begin{align*}
       \int_{4\cdot \opt}^\infty &\mathbb{P}_{S \sim \uniformpanel}[d(q^\star, \overline{q}(S)) \geq (x/\opt - 1) \cdot \opt] \ \mathrm{d}x 
        \leq  \int_{4 \cdot \opt}^{\infty} \frac{3\epsilon}{(x/\opt - 1)^2} \,\mathrm{d}x  \\
        &=  3\epsilon \cdot \frac{\opt^2}{\opt-x} \Big|_{4 \cdot \opt}^{\infty} = 
3 \epsilon \cdot \frac{\opt}{3} = \epsilon \cdot \opt, 
    \end{align*}
as desired.
\end{proof}

\StarLowerBound*

\begin{proof}
Consider a population of $n = 2k + 1$ agents located at $x_1, \ldots, x_{k} = 0$ and $x_{k+1}, \ldots, x_{2k} = 1$.
We leave the location of $x_{2k + 1}$ unspecified for now.
Let $\mathcal{C} = \mathcal{X} = [0,1]$ and $d(x,y) = |x-y|$.
Notice that the unique optimal location is $q^\star = x_{2k + 1}$, regardless of $x_{2k+1}$. We compute: \begin{align*} \opt \leq \left({T}/2 + \left(1 - {T}/{2}\right)\right) \cdot (1/2) = 1/2. \end{align*}

Notice that the probability that a panel $S \sim \mathcal{U}_{k,n}$ contains agent $2k + 1$ is at most $1/2$.
Conditioned on the event that the panel does not contain agent $2k + 1$, we have that $\widetilde{x}(S)$ selects either $0$ with probability at least $1/2$ or $1$ with probability at least $1/2$.
Assume without loss of generality that the latter case holds, and let $x_{2k + 1} = 0$.
Now, with probability at least $1/4$ over the draw of a panel $S \sim \mathcal{U}_{k,n}$, the panel decision function $\widetilde{q}$ selects the facility at $1$. For each such panel $S$ we have:
\begin{align*} d(\widetilde{q}(S), q^\star) = d(1, 0) = 1 \geq 2 \cdot \opt, \end{align*} since $\opt \leq 1/2$ as shown above. This concludes the proof.
\end{proof}

\subsection{Proofs of Section 4.3} \label{apx:facility_efficiency}

\AssouadLowerBound*

\begin{proof}
   Our goal is to construct a function $g$ that satisfies the assumption of \Cref{lem:sample_lower_bound}, establishing a necessary lower bound on $k$.
    Let $w = \lfloor (1/64) \cdot (1 / \epsilon) \rfloor$. 
    Note that $\epsilon \leq (1/64) \cdot (1/w)$
    {and that $1/w = \Omega(1/\epsilon)$}.
    Let $r > 1$ be {an integer parameter} large enough to satisfy the conditions of \Cref{lem:sample_lower_bound}. 
    
    For every $z \in \{-1,+1\}^{t}$, consider a camouflaged population of size $n = 2twr$ with a feature $f(z,t,w,r)$ as in \Cref{def:almost_uniform}.
    We will now define a facility location instances $\mathcal{I}(z) = \langle \mathcal{X}, \mathcal{C}, d, (x_1^{(z)}, \ldots, x_n^{(z)}) \rangle$ as follows.
    Let $c = (1/2, 1/2, \ldots, 1/2) \in \mathcal{X}$ and let $e_j = (0, \ldots, 0, 1/2, 0, \ldots, 0)$ for $j \in [t]$ where the $1/2$ entry is in the $j$-th position.
    For every agent $i \in [n]$, we set:
   \begin{align*}
        x_i^{(z)} = \begin{cases}
            c + e_j & \text{if } f_i(z,t,r,w) = 2j,\\
            c - e_j & \text{if } f_i(z,t,r,w) = 2j-1.
        \end{cases}
    \end{align*}
    Note that exactly a $(1+z_j/w)/(2t)$-fraction of the population has location $c + e_j$, while exactly a $(1-z_j/w)/(2t)$-fraction has location  $c - e_j$. 
    It follows that:
    \begin{align*}
         \social(q) = \sum_{j=1}^t \left(\frac{1+z_j/w}{2t} \right) \cdot 
         \| c + e_j - q \|_{\infty}
         + \left( \frac{1-z_j/w}{2t} \right) \cdot \| c - e_j - q \|_{\infty}.
    \end{align*}
To obtain an upper bound on the optimal social cost, fix any $z \in \{-1, +1\}^t$ and consider the location:
\begin{align*}
    q^\star = c + \sum_{j \in [t]} (1/2) \cdot z_j \cdot e_j \in \mathcal{C}.  
    \end{align*}
    We obtain:
    \begin{align*}
       \socialopt(\mathcal{I}(z)) \leq \social(q^\star)
       = \sum_{j=1}^t \left(\frac{1+1/w}{2t} \right) \cdot (1/4) + \left( \frac{1-1/w}{2t} \right) \cdot (3/4) = \frac{1}{2} - \frac{1/w}{4}.
    \end{align*}
    Now, fix an arbitrary candidate location $q \in \mathcal{C}$. We aim to derive a concrete lower bound on $\social(q)$.
First, for any $j \in [t]$, since $1/4 \leq q_j \leq 3/4$, we have:     \begin{align*}
        \left(\frac{1+z_j/w}{2t} \right) \cdot \| c + e_j - q \|_{\infty} + \left( \frac{1-z_j/w}{2t} \right) \cdot  \| c - e_j - q \|_{\infty} \geq   \frac{1}{2t} - \frac{1/w}{4t}.
    \end{align*}
    Furthermore, if $q_j \leq 1/2$ and $z_j = +1$, then:  
    \begin{align*}
         \left(\frac{1+z_j/w}{2t} \right) \cdot \| c + e_j - q \|_{\infty} + \left( \frac{1-z_j/w}{2t} \right) \cdot  \| c - e_j - q \|_{\infty}  &\geq  \left(\frac{1+1/w}{2t}\right) \cdot (1-q_j) + \left( \frac{1-1/w}{2t} \right) \cdot q_j \\
        &= \frac{1}{2t} + \frac{(1/w) \cdot (1-2q_j)}{2t}  \\
        &\geq \frac{1}{2t}.
    \end{align*}
Similarly, if $q_j \geq 1/2$ and $z_j = -1$, then:
    \begin{align*}
         \left(\frac{1+z_j/w}{2t} \right) \cdot \| c + e_j - q \|_{\infty} + \left( \frac{1-z_j/w}{2t} \right) \cdot  \| c - e_j - q \|_{\infty}  &\geq  \left(\frac{1+1/w}{2t}\right) \cdot q_j + \left( \frac{1-1/w}{2t} \right) \cdot (1-q_j) \\
        &= \frac{1}{2t} + \frac{(1/w) \cdot (2q_j-1)}{2t}  \\
        &\geq \frac{1}{2t}.
    \end{align*}
     Define $g : \mathcal{X}^k \to \{-1,+1\}^{t}$ as follows: 
\begin{align*}
    g_j(x_1, \ldots, x_k) =
    \begin{cases}
        +1 & \text{if } \widetilde{q}_{j}(x_1, \ldots, x_k) > 1/2, \\
        -1 & \text{otherwise}.
    \end{cases}
\end{align*}
Note that $g$ is independent of $z$.
We use $g(f_S(z,t,w,r))$ for $S \subseteq [n]$ to denote $g((x_i^{(z)})_{i \in S})$.

Let $S \subseteq [n]$ be such that  
$\|g(f_S(z,t,w,r)) - z\|_1 > (1/4) \cdot t$. From the analysis above, we get:  
\begin{align*}
    \social(\widetilde{q}(S))- \socialopt(\mathcal{I}(z))  \geq \left(\frac{t}{4} \right) \cdot \left( 
\frac{1/w}{4t} \right) = \frac{1}{16w} \geq \frac{1}{8w} \cdot \socialopt(\mathcal{I}(z))
\end{align*}
Combining this with our assumption on $\overline{q}(S)$, and the fact that $\epsilon \leq (1/64) \cdot (1/w)$, we obtain:
\begin{align*}
    \frac{1}{64w} \cdot \socialopt(\mathcal{I}(z)) &\geq \epsilon \cdot \socialopt(\mathcal{I}(z)) \\ &\geq \mathbb{E}_{S \sim \uniformpanel} \left[\social (\widetilde{q}(S)) - \socialopt(\mathcal{I}(z)) \right] \\
    &> \frac{1}{8w} \cdot \socialopt(\mathcal{I}(z))\cdot \mathbb{P}_{S \sim \uniformpanel} \left[ \|g(f_S(z,t,w,r)) - z\|_1 > (1/4) \cdot t \right] 
\end{align*}
It follows that:
\[
\mathbb{P}_{S \sim \uniformpanel} \left[ \|g(f_S(z,t,w,r)) - z\|_1 \leq (1/4) \cdot t \right] \geq 7/8 \quad \text{for all $z \in \{-1,+1\}^{t}$}.
\]
Hence, $g$ satisfies the assumption of \Cref{lem:sample_lower_bound}, which concludes the proof.
\end{proof}

\subsection{Proofs of Section 4.4}\label{apx:multifacilities}

Next, we will prove the panel complexity upper bound, based on the lemma above.
\PanelComplexityMultifacilities*
\begin{proof}
Let us again consider the feature $f(i) = x_i$ for $i \in [n]$.
We adopt the following notation:
\begin{align*}
       \mu_f =  \mathbb{E}_{S \sim \uniformpanel}\left[W(\phi_f^{[n]}, \phi_f^S)\right] \quad \text{and}\quad  \mu'_f =  \mathbb{E}_{S \sim \mathcal{R}_{k,n}}\left[W(\phi_f^{[n]}, \phi_f^S)\right].
    \end{align*}
Applying the result of \citet*[Proposition 2.1]{chewi2024statisticaloptimaltransport}, we can choose $k = O(1/\epsilon^2)$ such that $\mu'_f \leq \epsilon$.
Then, by \Cref{lem:W1expectation}, it follows that $\mu_f \leq \epsilon$.
Finally, by \Cref{lem:multifacilities}, we obtain:
\begin{equation*}
    \mathbb{E}_{S \sim \uniformpanel} \left[ \social(\overline{y}(S)) \right] \leq \socialopt + \mu_f \leq \socialopt + \epsilon
\end{equation*}
as needed.
\end{proof}

\end{document}